\crefname{section}{\S{}}{\S{}}
\Crefname{section}{\S{}}{\S{}}
\crefname{appendix}{App.}{Apps.}
\Crefname{appendix}{App.}{Apps.}
\crefname{theorem}{Thm.}{Thms.}
\Crefname{theorem}{Thm.}{Thms.}
\crefname{proposition}{Prop.}{Props.}
\Crefname{proposition}{Prop.}{Props.}
\crefname{algorithm}{Alg.}{Algs.}
\Crefname{algorithm}{Alg.}{Algs.}
\crefname{assumption}{Asm.}{Asms.}
\Crefname{assumption}{Asm.}{Asms.}
\crefname{mechanism}{Mech.}{Mechs.}
\Crefname{mechanism}{Mech.}{Mechs.}
\renewcommand{\P}{\mathbb{P}}
\renewcommand{\epsilon}{\varepsilon}
\renewcommand{\tilde}{\widetilde}
\newcommand{\E}{\mathbb{E}}
\newtheorem{definition}{Definition}[section]
\newtheorem{lemma}{Lemma}[section]
\newtheorem{theorem}{Theorem}[section]
\newtheorem{proposition}{Proposition}[section]
\newtheorem{corollary}{Corollary}[section]
\newcommand\numberthis{\addtocounter{equation}{1}\tag{\theequation}}
\newcounter{packednmbr}
\newcommand{\bra}[1]{\left( #1 \right)}
\newcommand{\brb}[1]{\left[ #1 \right]}
\newcommand{\brab}[1]{\left( #1 \right]}
\newcommand{\brc}[1]{\left\{ #1 \right\}}
\newcommand{\brd}[1]{\left| #1 \right|}
\newcommand{\ipratio}{e^{\epsilon}}
\newcommand{\ipratiosq}{e^{2\epsilon}}
\newcommand{\ipratioinv}{e^{-\epsilon}}
\newcommand{\qs}{q_s}
\newcommand{\qsi}[1]{q_{s_{#1}}}
\newcommand{\qt}{q_t}
\newcommand{\qti}[1]{q_{t_{#1}}}
\newcommand{\pt}{p_t}
\newcommand{\pti}[1]{p_{t_{#1}}}
\newcommand{\tpti}[1]{\tilde{P}_{t_{#1}}}
\newcommand{\psj}[1]{p_{s_{#1}}}
\newcommand{\lij}[2]{l_{#1}^{(#2)}}
\newcommand{\tlij}[2]{\tilde{l}_{#1}^{(#2)}}
\newcommand{\hlij}[2]{\widehat{l_{#1}^{(#2)}}}
\newcommand{\lijhat}[2]{\hat{l}_{#1}^{(#2)}}
\newcommand{\cij}[2]{c_{#1}^{(#2)}}
\newcommand{\secretnum}{n}
\newcommand{\reward}[2]{r\bra{#1, #2}}
\newcommand{\utility}[1]{u\bra{#1}}
\newcommand{\highlen}[1]{H_{#1}}
\newcommand{\lowlen}[1]{L_{#1}}
\newcommand{\probof}[1]{\mathbb{P}\bra{#1}}
\newcommand{\probhatof}[1]{\hat{\mathbb{P}}\bra{#1}}
\newcommand{\tsetinter}{\tilde{\mathcal{T}}}
\newcommand{\tset}{\mathcal{T}}
\newcommand{\sset}{\mathcal{S}}
\newcommand{\lset}{\boldsymbol{l}}
\title{Inferentially-Private Private Information}
\author{
  Shuaiqi Wang$^*$ \\
  \small{Carnegie Mellon University}\\
  \small{\texttt{shuaiqiw@andrew.cmu.edu}}
  \and
  Shuran Zheng\thanks{ These authors contributed equally to this work.} \\
  \small{Tsinghua University}\\
  \small{\texttt{shuranzheng@mail.tsinghua.edu.cn}}
  \and
  Zinan Lin \\
  \small{Microsoft Research}\\
  \small{\texttt{zinanlin@microsoft.com}}
  \and
  Giulia Fanti \\
  \small{Carnegie Mellon University}\\
  \small{\texttt{gfanti@andrew.cmu.edu}}
  \and
  Zhiwei Steven Wu \\
  \small{Carnegie Mellon University}\\
  \small{\texttt{zstevenwu@cmu.edu}}
  }
\date{}
\begin{document}

\maketitle

\begin{abstract}
Information disclosure can compromise privacy when revealed information is correlated with private information. We consider the notion of \emph{inferential privacy}, which measures privacy leakage by bounding the inferential power a Bayesian adversary can gain by observing a released signal. Our goal is to devise an \emph{inferentially-private private information structure} that maximizes the informativeness of the released signal, following the Blackwell ordering principle, while adhering to inferential privacy constraints. To achieve this, we devise an efficient release mechanism that achieves the inferentially-private Blackwell optimal private information structure for the setting where the private information is binary. 
Additionally, we propose a programming approach to compute the optimal structure for general cases given the utility function.
The design of our mechanisms builds on our geometric characterization of the Blackwell-optimal disclosure mechanisms under privacy constraints, which may be of independent interest.
\end{abstract}

\section{Introduction}
\label{sec:intro}

Information disclosure is essential to enabling cooperation between entities. However, the information an agent reveals can be correlated with private information that should not be revealed. 
For example, publicly-traded companies are required to release quarterly earnings reports. 
These earnings reports are closely correlated to the company's business strategy, which may be considered private. 
Hence, businesses may be incentivized to tailor the information in their earnings report to maximize the signal (the state of their overall finances) while minimizing disclosure about private information (their business strategy) \cite{earnings-reports}.
More generally, privacy concerns stemming from information disclosure are a pervasive problem that inhibits data sharing and disclosure \cite{rassouli2019data}. 

In such scenarios of information disclosure, it is natural to ask how much information an observer can infer about private information. 
\emph{Inferential privacy} captures precisely this notion \cite{dalenius1977towards, ghosh2016inferential}.
Roughly, inferential privacy requires that the adversary's posterior over the secret values is within some bounded ratio of the adversary's prior (a formal definition is provided in \cref{sec:model}). 
Hence natural questions include: \emph{how should one release information subject to an inferential privacy guarantee? What mechanism should one use? Can we find mechanisms that release some state subject to an inferential privacy constraint, while also maximizing the utility of a downstream decision maker who sees only the released information? }

We study these questions under the following setting.
Consider a random variable $Y \in \{0,1\}$, also called the \emph{state}, which represents the information we want to release (e.g., whether quarterly earnings are good or bad).
We also consider a \emph{sensitive} random variable (or secret) $S\in \mathcal S$, where $\mathcal S$ is a finite set. In our earlier example, $S$ might represent the company's business strategies, which should remain private.  
The state $Y$ and the private information $S$ are correlated: they are jointly drawn from a distribution $\probof{S, Y}$, which is assumed to be known to both the data holder (e.g., the company) and the observer (e.g., viewers of the earnings report). 
We aim to design an information disclosure mechanism that releases an \emph{output signal} random variable $T \in \mathcal T$ for some (possibly infinite) set $\mathcal T$ of output signals. $T$ should be informative about $Y$, without revealing too much information about $S$. 
More precisely, we assume there is a reward function $r$ over the state $Y$ and the decision maker's corresponding action $A$, and our utility $u$ over the output signal $T$ is defined as the expected reward maximized over the decision maker's actions, under a certain output signal. 
Hence, our goal is to design an information structure {(which corresponds to a joint distribution $\probof{S,Y,T}$)} that maximizes expected utility over output signal $T$, subject to an inferential privacy constraint.

Recently,~\citet{he2022private} studied a special case of this problem under a {"perfect"} inferential privacy constraint (0-inferential privacy as in \cref{def:ip}); that is, they constrain their information structure to not leak \emph{any} information about $S$. 
In other words, the observer's posterior over the secret $S$ should be the same as their prior. 
Their privacy constraint can be viewed as a special, extreme case of inferential privacy. 
\citet{he2022private} demonstrate a closed-form information structure that simultaneously achieves perfect privacy and Blackwell-optimality (\cref{sec:model}), which is also proved to achieve maximal utility. Their information structure is optimal in the sense that no information structure can be more informative without revealing information about the sensitive information $S$.

In this work, we generalize the formulation of \citet{he2022private} to accommodate the more general inferential privacy constraint, rather than requiring perfect privacy. In particular, the parametric definition of inferential privacy enables us to explore a broader spectrum of privacy-utility trade-offs. As the inferential power of the adversary varies, so does the utility for the decision maker. We find instances where the decision maker's utility significantly increases by merely loosening the perfect privacy constraint to a stringent inferential privacy level. 
Specifically, we demonstrate that at any given (nonzero) level of inferential privacy, there is an instance where the difference in utility between mechanisms ensuring perfect privacy and those optimizing utility under the inferential privacy constraint can be arbitrarily large (\cref{lemma:arbitrary_gap}).
Note that under our formulation, inferential privacy can be viewed as a special case of other recent formulations for private information disclosure \cite{zhang2022attribute}. 
We discuss our choice of privacy metric in \cref{sec:model}.

Our main results are:
\begin{enumerate}[leftmargin=*]
    \item \textbf{Geometric characterization of Blackwell-optimal solutions:} We provide a geometric characterization of  Blackwell-optimal information structures subject to the inferential privacy constraint. A direct implication of our characterization is a bound on the number of signals required to satisfy the privacy constraint.  While in principle, the number of possible output signals (i.e., the cardinality of $\mathcal T$) required by the optimal structure can be unbounded, we show that it suffices to have at most $|\mathcal T| = 3|\mathcal S|+1$ possible signals to achieve Blackwell optimality. As a result, a Blackwell-optimal, inferentially private information structure is exactly computable; when the number of possible secrets is constant, it is computable in polynomial time. %
    {Our geometric characterization involves tiling a two-dimensional space with $|\mathcal S|\cdot |\mathcal T|$ cells, each of which has its width and length and is associated with a positive state (i.e., the true state has value $Y=1$) or a negative one (i.e., $Y=0$). 
    A particular tiling of these cells fully determines the joint distribution over the state, secret, and output signal $\probof{Y,S,T}$.
    We show that a Blackwell-optimal solution must always have an ``upper-left" property---that is, the tiles associated with a positive state are located in the upper left region of this two-dimensional space.
    This upper-left characterization is a generalization of the result in \cite{he2022private}, which does not require this condition. %
    Our characterization further constrains the ratio of the widths of cells that are stacked on top of each other in the two-dimensional space based on the inferential privacy condition.
    }
This structural result enables us to substantially reduce the solution space.

    \item \textbf{Closed-form solution for binary secrets:} We obtain a closed-form expression for an inferentially-private, Blackwell-optimal information structure when the secret is binary, i.e., $|\sset|= 2$.
    Notably, by virtue of Blackwell optimality, %
    this information structure {universally} optimizes any decision-theoretic problem with a utility function convex in the posterior $\probof{Y|T}$.
    Our analysis first makes the observation that an informative information structure will choose to maximize a subset of conditional probabilities $\probof{T|S}$. Then to derive the optimal information structure, we analyze the %
    {dominant point} of these conditional probabilities  (in the sense that their %
    {values are maximized simultaneously,} {subject to the inferential privacy constraint, which constrains each conditional probability individually)}. We demonstrate that under inferential privacy constraints and Blackwell optimal structure, {this dominant point exists,} %
    which enables us to derive the closed-form optimal solution and to show its uniqueness up to equivalent transformations.\footnote{{We use the term \emph{equivalent transformation} to refer to  transformations that split or merge equivalent signals in terms of the posteriors $\mathbb{P}\bra{Y|T}$ and $\mathbb{P}\bra{S|T}$.} {A formal definition is provided in \cref{def:equivalent}}.}

    \item \textbf{Lower bound on utility gains under inferential privacy (binary secrets):} When secrets are binary, we show that %
    under a {nonzero} inferential privacy constraint, there exists a convex utility function such that the expected optimal utility can be arbitrarily larger than that under perfect privacy (\cref{lemma:arbitrary_gap}). 
    We demonstrate  utility gains for  common utility functions (e.g., quadratic), showing that by relaxing privacy constraints from perfect privacy to $\epsilon\approx 1$, we can increase utility by up to $2\times$.

        \item \textbf{Program to compute optimal solution for non-binary secrets}: When there are more than two secrets (i.e., $|\mathcal S|>2$), we provide a programming approach to compute the inferentially private information structure that is optimal {for any specified utility function} in the downstream decision-making problem.  %
\end{enumerate}

\paragraph{Related work} %

Our work uses the definition of Inferential privacy that was formulated in \citet{ghosh2016inferential}, which measures how much information about the secret an adversary can infer from a disclosed correlated signal (precise definition in \cref{sec:privacy-metric}). This privacy notion is also studied in many previous works under different names, e.g, \citet{dalenius1977towards,dwork2014algorithmic,bhaskar2011noiseless,kasiviswanathan2014semantics}. 

Optimal information disclosure without  privacy constraints has been explored extensively in the literature of Bayesian persuasion~\citep{kamenica2011bayesian,bergemann2019information, dughmi2016algorithmic, kamenica2019bayesian}. More recently, a line of research has started the investigation of optimal information disclosure under privacy constraints. Under the perfect privacy constraint, \citet{he2022private} studied the 
informativeness of the private private signal and designed an maximally informative structure for information disclosure. {\citet{strack2024privacy} extended \citet{he2022private} by considering multiple states and agents, and adopting the same form of privacy notion that requires strict independence between the output signal and secrets (albeit conditioned on auxiliary information).} {Their extension is complementary to ours; combining the two generalizations may be an interesting direction for future work.} The perfect privacy constraint is also adopted in the worst-case information structure for auctions \cite{bergemann2017first,brooks2021optimal}.
Under differential privacy constraints \cite{dwork2014algorithmic}, several works studied information disclosure \cite{10.1145/3490486.3538302,ghosh2009universally,brenner2014impossibility}. They consider the scenarios where a third party aims to disseminate information about data collected from a number of data contributors, while protecting individual privacy, which is different to our settings. %

Our work shares a similar motivation to prior work on \emph{pufferfish privacy} \cite{kifer2014pufferfish,song2017pufferfish}, which aims to study the privacy of correlated data, while they assume the prior distribution $\probof{S,Y}$ is unknown to the adversary.  {Specifically, when the prior $\probof{S,Y}$ is given, we show that %
pufferfish privacy is equivalent to inferential privacy in \cref{sec:model}.}

\vspace{-1mm}
\paragraph{Relation to \citet{he2022private}}
{
Generalizing the results of \citet{he2022private} to accommodate an inferential privacy guarantee is not straightforward. With the generalized privacy constraints, the space of all possible information structures becomes much larger, {and finding the optimal correlation between the state, the sensitive information, and the revealed signal becomes much more challenging.} %
Our characterization only provides necessary conditions for the optimal structure, and can be viewed as an extension of the upward-closed set representation in \cite{he2022private} in the discrete setting. The relaxation of perfect privacy 
prevents us from using the classical result of Lorentz \cite{lorentz1949problem} about “sets of uniqueness", as in \cite{he2022private}. Hence, finding sufficient conditions for optimality is more challenging and Lemma 4.3 is based on entirely new techniques (details in \cref{sec:upper-left}).
}

\section{Problem Formulation}
\label{sec:model}
We want to maximize information disclosure about a random variable of interest $Y$ while minimizing information disclosure about a sensitive random variable $S$. The random variable of interest and the sensitive random variable are jointly drawn from a prior distribution $\probof{S, Y}$, {which is commonly known}. In this work, we focus on a binary $Y\in \{0,1\}$. 

We aim to disclose an output signal that reveals information about $Y$. The output signal $T$ can be represented as a random variable that is correlated with $Y$.
Our goal is to design an \emph{information structure}, which we define as the joint probability distribution $\probof{S,Y,T}$. 
This information structure has a one-to-one correspondence with our information
disclosure \emph{mechanism}, which we define as $\probof{T|Y,S}$.
We next present our precise privacy and utility metrics.

\subsection{Privacy Metric}
\label{sec:privacy-metric}
An important principle in the study of information disclosure is that ``access to a
statistical database should not enable one to learn anything about an individual that could not be learned
without access" \cite{dalenius1977towards}. 
This qualitative notion is formalized in the definition of  \emph{inferential privacy (IP)}  \cite{ghosh2016inferential} as follows.
\begin{definition}[Inferential Privacy (IP)]\label{def:ip}
An information structure $\probof{S,Y,T}$ is $\epsilon$-inferentially-private about $S$ if 
\begin{align} \label{eqn:inf_dp}
    \frac{\probof{S=s_1|T=t}}{\probof{S=s_2|T=t}} \le e^\epsilon \cdot \frac{\probof{S=s_1}}{\probof{S=s_2}}, \quad \forall s_1, s_2 \in \mathcal{S}, t \in \mathcal{T}.
\end{align}
\end{definition}
This definition can equivalently be written as
\begin{align}
\frac{\probof{T=t|S=s_1}}{\probof{T=t|S=s_2}} \leq \ipratio, \quad \forall s_1, s_2 \in \sset, t\in \tset.
\end{align}

\paragraph{Relation to pufferfish privacy}
{The notion of \emph{pufferfish privacy} \cite{kifer2014pufferfish} was proposed as a way of measuring information disclosure about a secret random variable. %
It is defined as follows:}
\begin{align}
\probof{T=t|S=s_1, \theta} \leq \ipratio\probof{T=t|S=s_2, \theta}+\delta, \quad \forall s_1, s_2 \in \sset, t\in \tset, \theta\in \Theta, 
\end{align}
where $\Theta$ represents the set of possible distributions $\probof{S,Y}$. %
{Built upon pufferfish privacy, several privacy notions, such as attribute privacy \cite{zhang2022attribute}, are proposed by specifying $\sset$ and $\Theta$.}
As we assume $\probof{S,Y}$ is fixed and known, $\Theta$ is a singleton set. 
Under this condition, we note that pufferfish privacy and attribute privacy are equivalent to inferential privacy when $\delta = 0$. 
{However, in general, pufferfish privacy can accommodate a set of distributions $\Theta$ that is not a singleton; in this case, pufferfish privacy is a stronger privacy notion than inferential privacy.}

\subsection{Informativeness and Utility}\label{sec:informutil}
Drawing from the formulation of \citet{he2022private}, we measure the informativeness of $T$ about $Y$  using the classical notion of \emph{Blackwell ordering} \cite{blackwell1951comparison}.
\begin{definition}[Blackwell ordering]
    A random variable $T_1$ is more informative than $T_2$ about $Y$ if $Y\rightarrow T_1 \rightarrow T_2$ forms a Markov chain. In this case, we say information structure $\probof{Y, T_1}$ \emph{Blackwell dominates} information structure $\probof{Y, T_2}$, and denote this condition as $\probof{Y,T_1} \succeq \probof{Y, T_2}$.
\end{definition}

Blackwell ordering has several useful properties, which are outlined in the following result.

\begin{theorem}[Properties of Blackwell ordering \cite{blackwell1951comparison, blackwell1953equivalent}]
\label{thm:blackwell}
When $Y\in \{0,1\}$ is binary, let random variable $Q_1 = \probof{Y=1|T_1} \in [0,1]$ be the posterior about $Y=1$ after observing $T_1$. %
Similarly, we define $Q_2 = \probof{Y=1|T_2}$. %
Then the following statements are equivalent:
\begin{enumerate}
    \item Information structure $\probof{Y,T_1}$ Blackwell dominates information structure $\probof{Y,T_2}$.
    \item $Q_1$ is a \emph{mean-preserving spread} of $Q_2$, i.e., the distribution of $Q_1$ can be derived by first taking a draw from the distribution of $Q_2$ and then adding mean-$0$ noise, {which can depend on the draw}. 
    \item For any convex function $u$, $\E[u(Q_1)]\ge \E[u(Q_2)]$.
\end{enumerate}
\end{theorem}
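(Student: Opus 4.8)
These are the classical Blackwell--Sherman--Stein equivalences, and the plan is to prove them as the cycle $(1)\Rightarrow(2)\Rightarrow(3)\Rightarrow(1)$. Throughout I would use the observation that, because $Y$ is binary, the posterior $Q_i=\probof{Y=1\mid T_i}$ is a sufficient statistic for $Y$ given $T_i$: since $\probof{Y=1\mid Q_i}=\E[\probof{Y=1\mid T_i}\mid Q_i]=\E[Q_i\mid Q_i]=Q_i$, the signals $T_i$ and $Q_i$ are Blackwell-equivalent, so it suffices to argue in terms of the laws of $Q_1,Q_2$ and their joint couplings rather than the raw signals.

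For $(1)\Rightarrow(2)$: if $Y\to T_1\to T_2$ is a Markov chain then $\probof{Y=1\mid T_1,T_2}=\probof{Y=1\mid T_1}=Q_1$, so conditioning on $T_2$ gives $Q_2=\probof{Y=1\mid T_2}=\E[Q_1\mid T_2]$, and since $Q_2$ is $\sigma(T_2)$-measurable the tower property yields $\E[Q_1\mid Q_2]=Q_2$; reading this through the joint law of $(Q_1,Q_2)$ --- draw $Q_2$ from its marginal, then $Q_1$ from $\probof{Q_1\mid Q_2}$, a distribution with mean $Q_2$ --- exhibits $Q_1$ as $Q_2$ plus conditionally mean-zero noise. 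For $(2)\Rightarrow(3)$: writing $Q_1=Q_2+Z$ with $\E[Z\mid Q_2]=0$, Jensen's inequality applied conditionally on $Q_2$ gives $\E[u(Q_1)\mid Q_2]\ge u(Q_2)$ for convex $u$, and taking expectations gives $\E[u(Q_1)]\ge\E[u(Q_2)]$.

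The substantive step is $(3)\Rightarrow(1)$. Applying the hypothesis to $u(x)=x$ and $u(x)=-x$ forces $\E[Q_1]=\E[Q_2]$, so $Q_1$ dominates $Q_2$ in the convex order with equal means; by Strassen's theorem (equivalently, the one-dimensional characterization of the convex order, both variables living in $[0,1]$) there is a martingale coupling of their laws, i.e.\ a joint law with the correct marginals and $\E[Q_1\mid Q_2]=Q_2$. I would then build the garbling channel from $T_1$ to $T_2$ by composing the deterministic map $t_1\mapsto q_1=\probof{Y=1\mid T_1=t_1}$, the coupling kernel $q_1\mapsto q_2$ (a disintegration of the martingale coupling), and a draw of $t_2$ from $\probof{T_2\mid Q_2=q_2}$; the Markov property $Y\to T_1\to T_2$ is then immediate since $T_2$ is produced from $T_1$ by fresh randomness, and it remains to check that the induced law of $(Y,T_2)$ matches the given one. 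By sufficiency of $Q_1$ and $Q_2$ this reduces to checking the law of $(Y,Q_2)$, which follows from the martingale identity $\probof{Y=1,\,Q_2\in A}=\E[\mathbbm{1}\{Q_2\in A\}\,Q_1]=\E[\mathbbm{1}\{Q_2\in A\}\,\E[Q_1\mid Q_2]]=\E[\mathbbm{1}\{Q_2\in A\}\,Q_2]$.

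I expect the only real obstacle to be this last implication: converting the family of scalar inequalities in $(3)$ into an actual coupling/garbling. For finite $\mathcal T$ this can be done concretely, via linear-programming duality or an explicit mean-preserving-spread construction on the supports; for general (possibly infinite) $\mathcal T$ one invokes Strassen's theorem and must be careful with measurable disintegrations. The remaining steps are routine conditional-expectation bookkeeping.
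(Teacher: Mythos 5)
Your plan is a correct rendition of the classical Blackwell--Sherman--Stein argument, but note that the paper does not prove this statement at all: Theorem 2.1 is imported verbatim from Blackwell (1951, 1953) as a known result, so there is no in-paper proof to match. Your cycle $(1)\Rightarrow(2)\Rightarrow(3)\Rightarrow(1)$ is the standard one, and the individual steps check out: the tower-property computation giving $\E[Q_1\mid Q_2]=Q_2$ under the Markov coupling, conditional Jensen for $(2)\Rightarrow(3)$, and for $(3)\Rightarrow(1)$ the reduction to the posteriors (legitimate because $\probof{Y=1\mid Q_i}=Q_i$ makes $Q_i$ Blackwell-equivalent to $T_i$), a martingale coupling of the two posterior laws, and the verification $\E[\mathbbm{1}\{Q_2\in A\}Q_1]=\E[\mathbbm{1}\{Q_2\in A\}Q_2]$ that the composed kernel reproduces the law of $(Y,T_2)$. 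Two small remarks: the equality $\E[Q_1]=\E[Q_2]$ holds automatically (both equal $\probof{Y=1}$), so you do not need the $u(x)=\pm x$ trick; and in the binary-state case the Strassen step can be replaced by the elementary one-dimensional characterization of the convex order via integrated CDFs (second-order stochastic dominance), which yields an explicit mean-preserving-spread construction and sidesteps measurable-disintegration issues when $\mathcal T$ is infinite --- this is closer in spirit to how the economics literature the paper relies on usually states the result. With the finite-support or standard-Borel caveats you already flag, the proposal is sound.
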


From the definition of Blackwell ordering, we define  $\varepsilon$-inferentially-private Blackwell optimality as follows. 

\begin{definition}[$\varepsilon$-Inferentially-Private Blackwell optimality]
    Given $\probof{Y,S}$, an information structure $\probof{Y,S,T}$ is an $\varepsilon$-inferentially-private Blackwell optimal information structure if there exists no other $\varepsilon$-inferentially-private $\probof{Y,S,T'}$ that has $\probof{Y,T'}\succeq \probof{Y,T}$.

    Additionally, $\probof{Y,T'}$ and $\probof{Y,T}$ are equivalent if $\probof{Y,T'}\succeq \probof{Y,T}$ and $\probof{Y,T}\succeq \probof{Y,T'}$.
\end{definition}

\citet{he2022private} show that for $0$-inferential privacy with a binary secret, there exists a unique (up to equivalent transformations) 
Blackwell-optimal information structure, and they provide a closed-form expression for it ({details in \cref{sec:geometric}}).

\paragraph{Utility}
We consider a decision-theoretic formulation in which the decision maker receives a reward $\reward{y}{a}$ under the state $y$ and their corresponding action $a$. %
The binary state $Y$ can be inferred from the output signal $T$ by the posterior $\mathbb{P}_{Y|T}$. We let $\qt = \mathbb{P}(Y=1|T=t)$. Under a certain output signal $t$, we denote the utility as the maximal expected reward the decision maker can get, i.e., 
\begin{align}
\utility{\qt} = \max_{a}\mathbb{E}_{y}\brb{\reward{y}{a}}= \max_a\brc{\qt\cdot \reward{1}{a}+\bra{1-\qt}\reward{0}{a}}.    \label{eq:utility}
\end{align}
Since under a fixed action $a$, rewards $\reward{1}{a}$ and $\reward{0}{a}$ are fixed, $\utility{\qt}$ is a convex piecewise linear function over $\qt$. The goal of signal release mechanism design is to maximize the expected utility, i.e., $\mathbb{E}_{t}\brb{\utility{\qt}}$, for the decision maker with any reward function.

Since \cref{eq:utility} is convex, the result of \cref{thm:blackwell} implies that identifying a Blackwell-optimal information structure also helps in finding a utility-maximizing structure. Indeed, we will show in \cref{sec:binary_secret} that when the secret is binary, the inferentially-private Blackwell-optimal structure universally maximizes the expected utility under any convex function.

\section{Geometric Visualization of Information Structures}
\label{sec:geometric}
As in prior work \cite{he2022private}, we will use a visual representation of information structures to clarify the meaning of our geometric characterization. 
An information structure $\probof{S,Y,T}$ can be drawn as a grid, where each column corresponds to a value of the output signal $T$, %
and each row corresponds to a value of the secret $S$.
The color of each point in this plot denotes the posterior probability $\probof{Y=1|S,T}$, with white denoting $\probof{Y=1|S,T}=0$ and dark yellow denoting $\probof{Y=1|S,T}=1$ (light yellow denotes ``in between" real values in $(0,1)$). 
For example, \cref{fig:geo_illu} below illustrates one possible information structure with $\mathcal S=\brc{s_0, s_1}$ and $\mathcal T=\brc{t_1, t_2, t_3}$. 

This figure fully defines the information structure: $\probof{Y=y|S=s,T=t}$ is determined by the color of each cell, $\probof{S=s}$ is determined by the  height of row $s$, and %
$\probof{T=t|S=s}$ is determined by the width of the $(s, t)$ cell. %
These quantities jointly determine the full distribution $\probof{Y, S, T}$, since $\mathbb{P}(S,Y,T) = \mathbb{P}(Y|S,T)\cdot \mathbb{P}(T|S) \cdot \mathbb{P}(S)$.
From the information structure, we can in turn determine the disclosure mechanism represented by $\probof{T|S,Y}$ because $\mathbb{P}(T|S,Y) = \frac{\mathbb{P}(S,Y,T)}{\mathbb{P}(S,Y)}$ and $\probof{S, Y}$ is known. Since $\probof{S}$ is known, it suffices to use $\probof{T|S}$ and $\probof{Y|S,T}$  to characterize a policy.

\begin{figure}[htbp]
\centering
\includegraphics[width=0.6\linewidth]{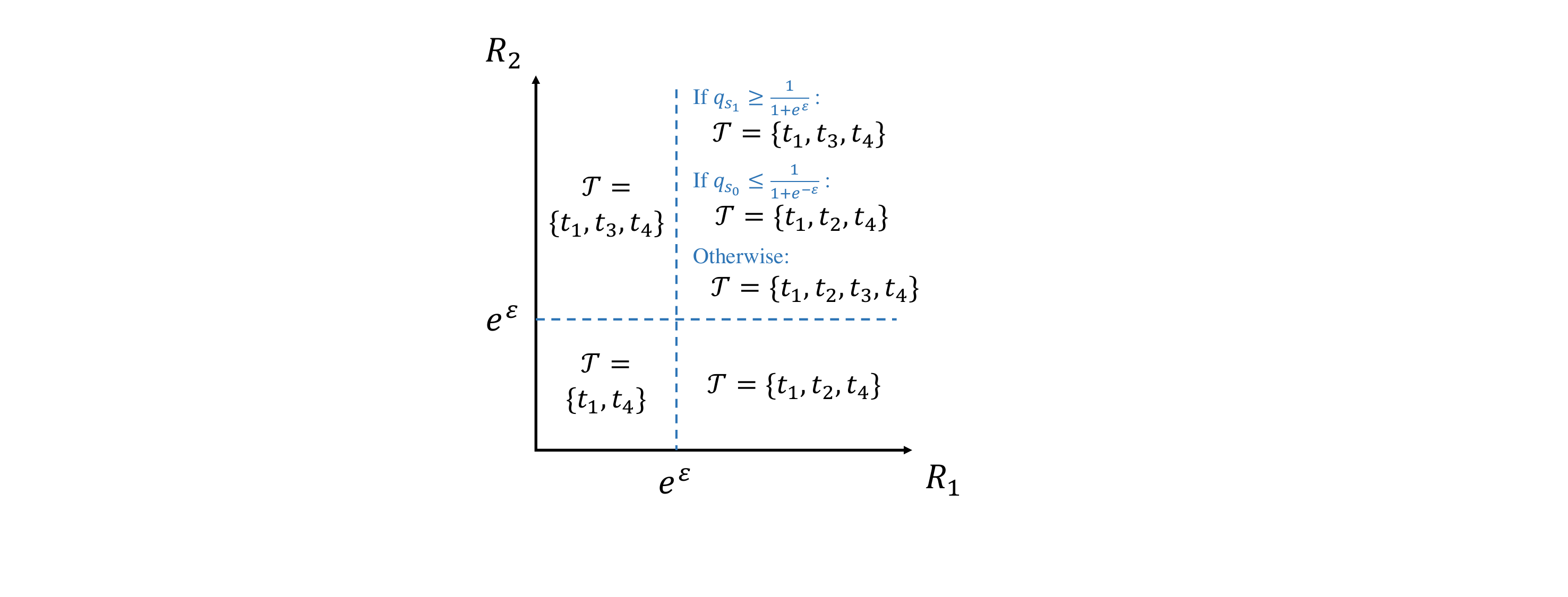}
\caption{Information structure of $\probof{S,Y,T}$. {We use the term ``column" to denote a set of cells with fixed output signal $t\in \mathcal T$; in our terminology, each column need not be a single rectangle, as shown in the column outlined in red for $t_2$.} Each row corresponds to a secret $s \in \mathcal S$. For each cell, the color represents the posterior probability $\probof{Y=1|S,T}$ (dark yellow is 1, light yellow is some value between 0 and 1, and white is 0). The height of each row represents $\probof{S}$, and the width of each cell represents $\probof{T|S}$.}
\label{fig:geo_illu}
\end{figure}

\paragraph{Interpretation of privacy constraints}
Notice that to satisfy an $\varepsilon$-inferential privacy guarantee, we need that for all $t\in \mathcal T$, $\frac{\probof{T=t|S=s_1}}{\probof{T=t|S=s_2}} \leq \ipratio$. This means that in any column of the figure, the ratio of cell widths must lie in the range $[e^{-\varepsilon},e^{\varepsilon}]$. 
Consequently, for a $0$-inferential privacy constraint (as in \citet{he2022private}), we require that in a column, all cells have the \emph{same} width; this implies that each column is a rectangle.

\paragraph{Blackwell-optimal structure from \cite{he2022private}}
For a binary secret {and binary state $Y$}, \cite{he2022private} provides the Blackwell-optimal information structure under a perfect privacy constraint, i.e., $0$-inferential privacy, as shown in \cref{thm:optimal_ppi}.
{We introduce the notation $\qs$ to denote the probability that the signal $Y=1$ given the secret signal $S=s$: 
$$
\qs \triangleq \probof{Y=1 | S=s}.
$$
Note that $\qs$ is given by the prior, and can be viewed as a constant in the problem. 
}
\begin{theorem}[\cite{he2022private}]
\label{thm:optimal_ppi}
With binary state $Y\in\brc{0,1}$ and binary secret $S\in\brc{s_0, s_1}$, where $\probof{Y=1|S=s_0}\geq \probof{Y=1|S=s_1}$, given the joint distribution $\probof{S, Y}$, the $0$-inferentially-private Blackwell optimal information structure is unique up to equivalent transformations: $\tset = \brc{t_1, t_2, t_3}$, 
\begin{align*}
&\probof{T=t_1|S=s_0} = \probof{T=t_1|S=s_1} = %
{\qsi{1}}, \\ 
&\probof{T=t_2|S=s_0} = \probof{T=t_2|S=s_1} = %
{\qsi{0}-\qsi{1}}, \\ 
&\probof{T=t_3|S=s_0} = \probof{T=t_3|S=s_1} = {1-\qsi{0}} \\ %
&\probof{Y=1|S=s_0,T=t_1}=\probof{Y=1|S=s_0,T=t_2}=\probof{Y=1|S=s_1,T=t_1}=1\\
&\probof{Y=1|S=s_0,T=t_3}=\probof{Y=1|S=s_1,T=t_2}=\probof{Y=1|S=s_1,T=t_3}=0.
\end{align*}
\end{theorem}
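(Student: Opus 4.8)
The plan is to recast a $0$-inferentially-private structure as a finite moment problem on the unit square and then certify optimality by a single affine functional. Write $\pi_s\triangleq\probof{S=s}$, and recall $\qsi{0}\ge\qsi{1}$ by assumption. Under $0$-inferential privacy, $\probof{T=t|S=s_0}=\probof{T=t|S=s_1}$ for every $t$, so every column of the grid of \cref{sec:geometric} is a rectangle. Consequently a structure is captured, up to equivalent transformations, by the law $\mu$ on $[0,1]^2$ of the pair $\bra{A,B}=\bra{\probof{Y=1|S=s_0,T},\,\probof{Y=1|S=s_1,T}}$ under $T\sim\probof{T}$; feasibility for $\probof{S,Y}$ is exactly the barycenter condition $\int a\,d\mu=\qsi{0}$ and $\int b\,d\mu=\qsi{1}$, and since $\probof{S|T}=\probof{S}$ the induced posterior is $q_t=\probof{Y=1|T=t}=\pi_0 a_t+\pi_1 b_t$, i.e.\ the law of $Q=\probof{Y=1|T}$ is the pushforward of $\mu$ under $\bra{a,b}\mapsto\pi_0 a+\pi_1 b$. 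By the equivalence of items 1 and 3 of \cref{thm:blackwell}, it then suffices to prove that for every convex $u$ and every feasible $\mu$,
\begin{align}
\int_{[0,1]^2} u\bra{\pi_0 a+\pi_1 b}\,d\mu(a,b)\ \le\ V(u)\ \triangleq\ \bra{1-\qsi{0}}u(0)+\bra{\qsi{0}-\qsi{1}}u(\pi_0)+\qsi{1}u(1). \label{eq:sketch-target}
\end{align}

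First I would verify that the displayed structure is feasible and attains \eqref{eq:sketch-target} with equality: its columns are $\bra{a,b}\in\brc{(1,1),(1,0),(0,0)}$ with widths $\qsi{1},\ \qsi{0}-\qsi{1},\ 1-\qsi{0}\ge 0$, the widths agree across rows (so $0$-inferential privacy holds), the barycenter identities hold, and the posteriors are $q_{t_1}=1,\ q_{t_2}=\pi_0,\ q_{t_3}=0$, whence $\E[u(Q)]=V(u)$. This also suggests why $V(u)$ is the optimum: restricting attention to $\mu$ supported on the four corners of the square and parametrising by $x=\mu(\brc{(1,1)})$, the normalization and barycenter constraints pin the remaining corner masses to $\qsi{0}-x,\ \qsi{1}-x,\ 1-\qsi{0}-\qsi{1}+x$, and $\E[u(Q)]$ collapses to the affine function $x\mapsto\brb{u(0)+u(1)-u(\pi_0)-u(\pi_1)}x+\mathrm{const}$, whose slope is nonnegative because $u$ is convex and $\pi_0+\pi_1=1=0+1$ (add $u(\pi_0)\le\pi_0 u(1)+\pi_1 u(0)$ and $u(\pi_1)\le\pi_1 u(1)+\pi_0 u(0)$); hence it is maximized at the largest feasible value $x=\qsi{1}$, which is exactly the displayed structure.

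To obtain \eqref{eq:sketch-target} for \emph{every} feasible $\mu$ — in particular without assuming finitely many signals — I would produce a dual certificate: an affine map $\ell(a,b)=\alpha+\beta a+\gamma b$ satisfying $\ell\ge u\bra{\pi_0 a+\pi_1 b}$ on $[0,1]^2$ and $\ell(\qsi{0},\qsi{1})=V(u)$. Then \eqref{eq:sketch-target} follows at once, since $\int u\bra{\pi_0 a+\pi_1 b}\,d\mu\le\int\ell\,d\mu=\ell(\qsi{0},\qsi{1})=V(u)$ by the barycenter condition. Because $\bra{a,b}\mapsto u\bra{\pi_0 a+\pi_1 b}$ is convex on the square, such an $\ell$ dominates it everywhere as soon as it dominates it at the four corners; so I would define $\ell$ by interpolating $u$ at the three corners $(0,0),(1,0),(1,1)$ — this gives $\alpha=u(0),\ \beta=u(\pi_0)-u(0),\ \gamma=u(1)-u(\pi_0)$ and, after a one-line check, $\ell(\qsi{0},\qsi{1})=V(u)$ — and the only remaining inequality is at the fourth corner, $\ell(0,1)=u(0)+u(1)-u(\pi_0)\ge u(\pi_1)$, i.e.\ $u(0)+u(1)\ge u(\pi_0)+u(\pi_1)$, the same convexity fact used above. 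I expect this step — identifying the correct affine functional, equivalently solving the underlying three-constraint linear program and reading off complementary slackness — to be the only real obstacle; the rest is bookkeeping.

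Finally, \eqref{eq:sketch-target} combined with \cref{thm:blackwell} shows the displayed structure Blackwell-dominates every feasible $0$-inferentially-private structure, so it is $0$-inferentially-private Blackwell optimal; and any other optimal structure, being dominated by it yet maximal, must dominate it in turn, hence is equivalent to it (it has the same law of $Q$, and $\probof{S|T}=\probof{S}$ for all $0$-IP structures). The closed form in the statement is the canonical representative — the tiling that left-justifies the $Y=1$ cells in each row of the grid.
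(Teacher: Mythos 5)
Your argument is correct, but it is worth noting that the paper itself does not prove \cref{thm:optimal_ppi} at all---it imports the statement from \citet{he2022private}---so the relevant comparison is with that source's machinery, which (as the paper's remarks around Lemma 4.3 indicate) goes through an upward-closed-set representation of information structures and Lorentz's theorem on sets of uniqueness \cite{lorentz1949problem}. Your route is genuinely different and more elementary: you use the fact that $0$-inferential privacy forces $T\perp S$, so a structure is just a measure $\mu$ on $[0,1]^2$ with barycenter $\bra{\qsi{0}},{\qsi{1}}$ and posterior $Q=\pi_0A+\pi_1B$, and then you certify optimality for every convex $u$ by one affine functional $\ell$ interpolating $u$ at three corners of the square; the only nontrivial inequality, $u(0)+u(1)\ge u(\pi_0)+u(\pi_1)$, is immediate from convexity, and the concave function $\ell-u(\pi_0a+\pi_1b)$ is nonnegative on the square because it is nonnegative at the four extreme points (no continuity needed). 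Combined with the equivalence of items 1 and 3 of \cref{thm:blackwell}, this shows the displayed structure dominates every feasible one, and your uniqueness step is sound: mutual Blackwell dominance forces the same law of $Q$, and since $\probof{S|T}=\probof{S}$ for every $0$-IP structure, any other optimum matches the canonical one signal-by-signal in the sense of \cref{def:equivalent} (in fact the barycenter constraint even pins the interior column to $(a,b)=(1,0)$, so uniqueness holds at the cell level, a point you could add in one line). What your approach buys is a short, self-contained LP-duality proof for the binary-secret, binary-state case; what it gives up is the generality of the original argument---the affine certificate lives on a square precisely because the secret is binary and privacy is perfect, whereas the upward-closed-set/Lorentz route (and this paper's Lemmas 4.1--4.3) is built to handle larger secret spaces and the relaxed $\epsilon>0$ constraint, where a single supporting hyperplane no longer suffices.
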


The optimal structure is visualized in \cref{fig:geo_ppi}. %
{Under the optimal information structure, $\probof{Y|S,T}$ is either $0$ or $1$, i.e., each cell is either dark yellow or white, and the dark yellow cells are in the upper left corner of the grid (we formally define the notion of `upper left' in \cref{def:upper-left}).} %
{In this information structure, observe that the probability $\probof{T=t|S=s}$ is the same for all values of $s$. In other words, each ``column" is a rectangle in the perfect privacy setting, whose width is equal to  $\probof{T=t|S=s}$.}
Also note that in this example, signals $t_1$ and $t_3$ deterministically reveal $Y$: {if $T=t_1$, then $Y=1$ with probability 1, whereas if $T=t_3$, then $Y=0$ with probability 1}.

\begin{figure}[htbp]
\centering
\includegraphics[width=0.75\linewidth]{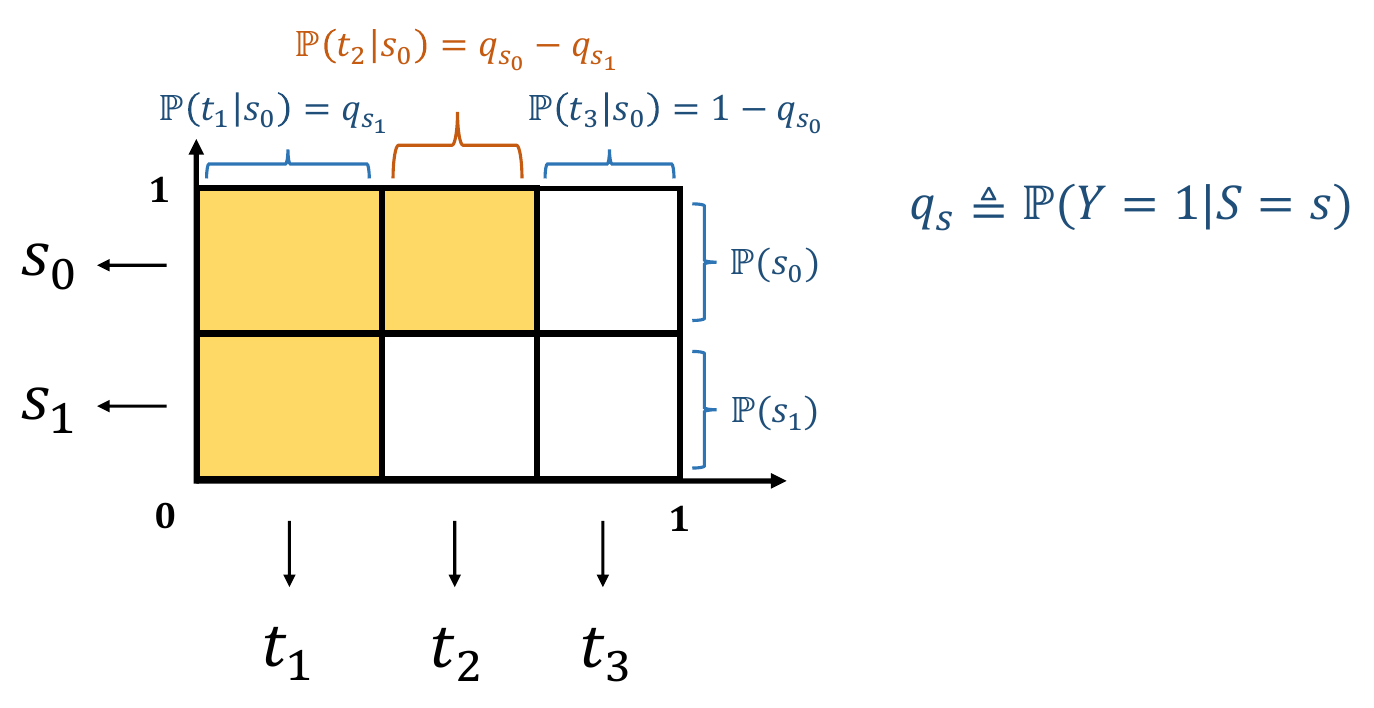}
\caption{Blackwell-optimal structure with perfect privacy constraint. $\tset=\brc{t_1, t_2, t_3}$. The width of each cell is determined by the $\probof{Y=1|S}$, and each cell is either {dark} yellow or white, indicating $\probof{Y|S,T}\in\brc{0,1}$.}
\label{fig:geo_ppi}
\end{figure}
\section{Geometric characterization of IP  Blackwell-Optimal Solutions}
\label{sec:geometric_characterization}
In this section, we provide a geometric characterization of Blackwell-optimal information structures that also guarantee inferential privacy (IP).
Even though the space of possible information structures is uncountably infinite, our characterization can be used to significantly limit the search space. 
For example, we will use the characterization to conclude that we only need to consider information structures with $|\mathcal{T}|\le 3|\mathcal{S}|+1$ output states (\cref{lem:poly_support}).

This characterization has three components: 
\begin{enumerate}
    \item For any Blackwell-optimal information structure, it must hold that $\probof{Y=1|S=s, T=t}\in \{0,1\}$. In other words, as in \citet{he2022private}, all cells in our visualization are either white or {yellow.}
    \item For a given output signal $t\in \mathcal T$ such that $\probof{Y=1|T=t} \neq \{0,1\}$, and for any $s_1,s_2 \in \mathcal S$, it must hold that $\frac{\probof{T=t|S=s_1}}{\probof{T=t|S=s_2}} \in \{1,e^\varepsilon, e^{-\varepsilon}\}$, {and the values $e^\varepsilon$ and $e^{-\varepsilon}$ can be reached by some $\bra{s_1, s_2}$ pairs}.
    {In other words, if a column is \emph{not} all {yellow} or all white, then there are exactly two cell widths in the column, and their ratio is either $e^\varepsilon$ or $e^{-\epsilon}$, so the IP constraint is met with equality.}
    \item {A Blackwell-optimal structure should be ``upper-left" and ``lower-right'', which roughly means that the cells for which $\probof{Y=1|S=s,T=t}=1$ %
    are adjacent and located in the top left corner of the visualization, and the cells with larger width, i.e., $\probof{T=t|S=s}=\max_{s'} \probof{T=t|S=s'}$ are adjacent and located in the top left or bottom right corners of the visualization}. (We state this more precisely in \cref{sec:upper-left}.) 
\end{enumerate}
We next discuss each of these in greater detail.

\subsection{Geometric characterization of $\probof{Y|S, T}$}

We first show that the value of $\probof{Y=1|S=s, T=t}$ can only be either $0$ or $1$ for any inferentially-private Blackwell optimal information structure.

\begin{lemma} \label{lem:perfect}
For any $\probof{Y,S}$ and $\varepsilon$, an $\epsilon$-inferentially-private Blackwell optimal information structure $\probof{Y,S,T}$ must satisfy $\probof{Y=1|S=s, T=t}\in \{0,1\}$ for all $s\in \sset, t\in \tset$.
\end{lemma}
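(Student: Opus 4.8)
The plan is to argue by contradiction: suppose some inferentially-private Blackwell-optimal structure $\probof{Y,S,T}$ has a cell with $\probof{Y=1|S=s,T=t}\in(0,1)$ for some $s,t$. The idea is to ``split'' the offending signal $t$ into two fresh signals $t'$ and $t''$ in a way that (i) keeps the marginal over $S$ and the conditional $\probof{T|S}$ structure compatible with the $\varepsilon$-IP constraint, and (ii) produces a posterior $Q'=\probof{Y=1|T'}$ over the new signal set that is a strict mean-preserving spread of the original $Q=\probof{Y=1|T}$ — hence strictly Blackwell-dominates it by \cref{thm:blackwell}, contradicting optimality.

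The key steps, in order, would be: First, fix $t$ and let $m := \max_{s'}\probof{T=t|S=s'}$; for each $s$, write $\probof{T=t|S=s} = w_s$, so $w_s \le m$ and the IP constraint already in force guarantees $w_{s_1}/w_{s_2}\le e^\varepsilon$ for all $s_1,s_2$. Second, design the split: within column $t$, I would peel off a sub-column that isolates, for each row $s$, a portion where the conditional-on-$(s,t)$ mass of $Y=1$ is concentrated versus where $Y=0$ is concentrated — concretely, replace the single cell $(s,t)$ by two cells $(s,t')$ and $(s,t'')$ with widths $w_s^{(1)}+w_s^{(2)}=w_s$, assigning $\probof{Y=1|S=s,T=t'}$ closer to $1$ and $\probof{Y=1|S=s,T=t''}$ closer to $0$, while keeping $\probof{S}$ fixed. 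Third, check feasibility: the new conditionals $\probof{T=t'|S=s}=w_s^{(1)}$ and $\probof{T=t''|S=s}=w_s^{(2)}$ must themselves satisfy the $e^\varepsilon$ ratio bound across rows; the cleanest way is to choose the split proportionally, $w_s^{(1)} = \lambda_s w_s$ with $\lambda_s$ chosen so the ratios are preserved or improved — in fact taking a \emph{common} splitting fraction that depends only on the target posteriors, not on $s$, makes the ratio bound automatic since $w_{s_1}^{(j)}/w_{s_2}^{(j)} = w_{s_1}/w_{s_2}$. Fourth, verify the mean-preserving-spread / Blackwell-domination claim: since $t',t''$ partition the probability mass of $t$ and $\E[Q' \mid \text{original signal was } t] = Q|_t$ by construction (we only redistributed the $Y=1$ mass, not created or destroyed it), $Q'$ is a mean-preserving spread of $Q$, and it is a \emph{strict} one precisely because the original cell had $\probof{Y=1|S=s,T=t}\in(0,1)$, giving us room to separate the posteriors. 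By item 3 of \cref{thm:blackwell} this strictly increases $\E[u(Q')]$ for strictly convex $u$, so $\probof{Y,T}$ was not Blackwell-optimal.

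The main obstacle I anticipate is making the split simultaneously achieve a \emph{strict} improvement and respect the IP constraint with the new, larger signal set. The subtlety is that $\varepsilon$-IP is a constraint on $\probof{T=t|S=s_1}/\probof{T=t|S=s_2}$ for \emph{every} signal including the new ones, so an arbitrary split could violate it; the proportional-split trick handles the ratio bound but one must double-check that it actually separates posteriors (it will, as long as the distribution of $\probof{Y=1|S=s,T=t}$ across $s$ in column $t$ is nondegenerate — and if it is degenerate, i.e., all rows in column $t$ have the same intermediate posterior $q\in(0,1)$, then a different and in fact easier argument applies: merge considerations or a direct perturbation of $\probof{Y|S,T}$ toward the extremes works, since nothing about the IP constraint, which only sees $\probof{T|S}$, is touched). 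Handling this degenerate-column case separately, and confirming that at least one valid split exists whenever \emph{some} cell is strictly interior, is where the real care is needed; everything else is bookkeeping with \cref{thm:blackwell}.
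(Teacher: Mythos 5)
Your proposal is essentially the paper's own argument: split the offending signal $t$ into two signals using a common splitting fraction across rows (equivalently, keeping $\probof{S|T}$ unchanged, which is exactly why the $\varepsilon$-IP constraint is automatically preserved), push the interior-posterior cell toward $0$ and $1$, and conclude via a strict mean-preserving spread and \cref{thm:blackwell} that the original structure was not Blackwell optimal. One remark: your worry about the ``degenerate column'' (all rows sharing the same interior posterior) is unnecessary — the separation of the two new aggregate posteriors comes from redistributing the $Y$-mass \emph{within} the offending row (choose the common fraction equal to that cell's posterior $k$, so that row becomes posterior $1$ in one piece and $0$ in the other while every other row's posterior is untouched), so cross-row nondegeneracy plays no role; this is precisely the paper's construction and it covers the degenerate case with no modification. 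Your proposed fallback for that case — directly perturbing $\probof{Y|S,T}$ toward the extremes without splitting — is the one shaky step, since with $\probof{T|S}$ fixed such a perturbation changes $\probof{Y|S}$ and hence the given prior $\probof{Y,S}$ unless compensated elsewhere; fortunately it is never needed.
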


The proof is shown in \cref{app:proof_perfect}. As illustrated in \cref{fig:geo_perfect}, each cell corresponds to a pair of secret $s$ and output $t$, and yellow cells indicate that $\probof{Y=1|S=s,T=t}=1$, while white cells indicate that $\probof{Y=1|S=s,T=t}=0$.
{This condition also holds for the perfect privacy setting, which is a special case of  inferential privacy setting, as shown in \cref{thm:optimal_ppi}.}

\begin{figure}[htbp]
\centering
\includegraphics[width=0.6\linewidth]{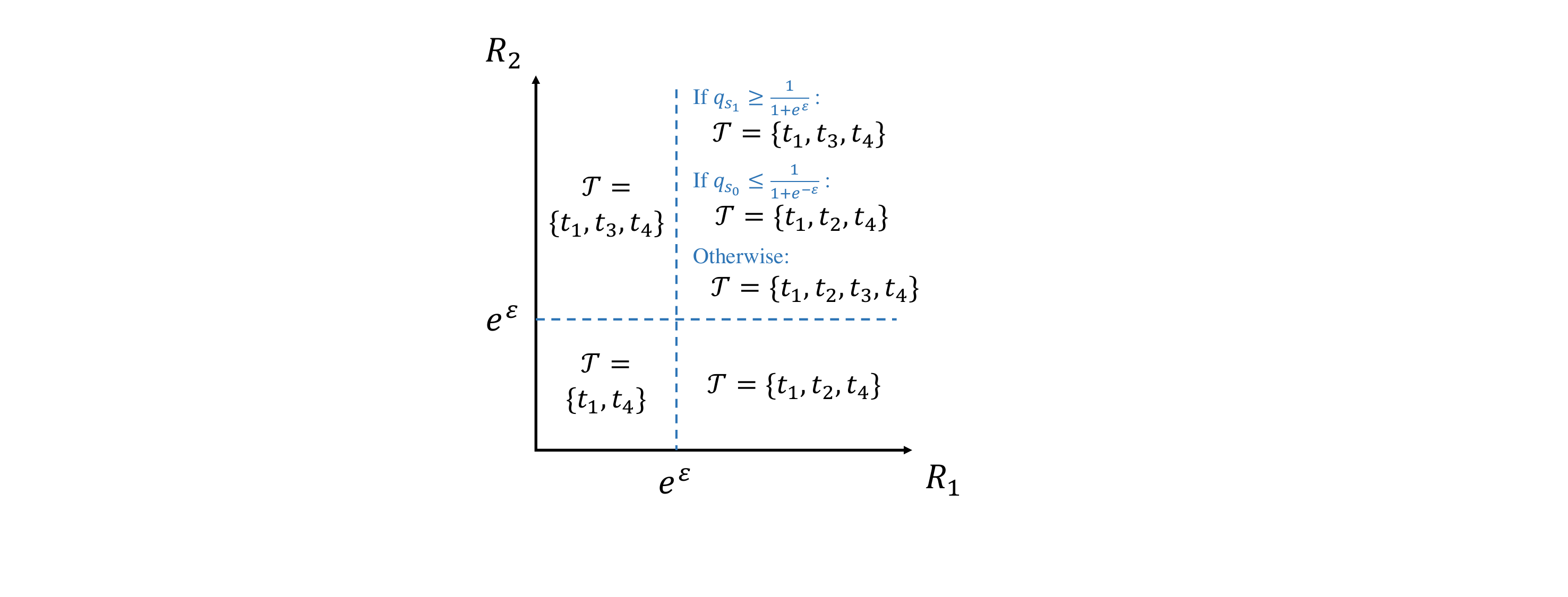}
\caption{Under an inferentially-private Blackwell optimal information structure, $\probof{Y=1|S=s, T=t} \in \{ 0,1 \}$,  $\forall s\in\sset, t\in\tset$. In other words, every cell in the visualization is fully white or dark yellow.}
\label{fig:geo_perfect}
\end{figure}

\subsection{Geometric characterization of $\probof{T|S}$}

We first define $\tsetinter$ as 
$$
\tsetinter = \brc{t: \probof{Y=1|T=t} \not\in \brc{0,1}}.
$$ 
For any $t\in \tsetinter$, we analyze the relationship of $\probof{T=t|S=s}$ for all $s$ as follows.

\begin{lemma} \label{lem:binding}
For any $\probof{Y,S}$ and $\varepsilon$, an $\epsilon$-inferentially-private Blackwell optimal information structure $\probof{Y,S,T}$ must have the inferential privacy constraints binding for any $t\in\tsetinter$. %
More specifically, for any $t\in\tsetinter$, %
let $L_t = \min_s \probof{T=t|S=s}$ and $H_t = \max_s \probof{T=t|S=s}$. Then we have $H_t= e^\varepsilon \cdot L_t$, and $\probof{T=t|S=s}$ is either $L_t$ or $H_t$ for all $s$.
\end{lemma}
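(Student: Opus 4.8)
\textbf{Proof proposal for Lemma~\ref{lem:binding}.}

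The plan is to argue by contradiction: suppose there is an $\varepsilon$-inferentially-private Blackwell optimal structure $\probof{Y,S,T}$ and some $t_0 \in \tsetinter$ for which the inferential privacy constraint is not binding, i.e., $H_{t_0} < e^\varepsilon \cdot L_{t_0}$, or for which $\probof{T=t_0|S=s}$ takes some value strictly between $L_{t_0}$ and $H_{t_0}$. I will exhibit a new information structure $\probof{Y,S,T'}$ that is still $\varepsilon$-inferentially-private and strictly Blackwell dominates (or at least dominates while being strictly more informative at $t_0$), contradicting optimality. The main tool is \cref{thm:blackwell}, item~2: $\probof{Y,T'}\succeq \probof{Y,T}$ iff $Q'=\probof{Y=1|T'}$ is a mean-preserving spread of $Q=\probof{Y=1|T}$; so it suffices to show we can ``spread out'' the posterior mass at the signal value $t_0$ while keeping the overall distribution of the posterior a mean-preserving refinement and not violating privacy.

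The key steps, in order. First, by \cref{lem:perfect} every cell is white or yellow, so for each $s$ the column $t_0$ has $\probof{Y=1|S=s,T=t_0}\in\{0,1\}$; since $t_0\in\tsetinter$, $\probof{Y=1|T=t_0}\in(0,1)$, so the yellow set $Y_{t_0}=\{s:\probof{Y=1|S=s,T=t_0}=1\}$ and the white set are both nonempty. Second, I split the signal $t_0$ into two signals $t_0^{(1)}, t_0^{(0)}$: roughly, $t_0^{(1)}$ collects as much yellow width as possible and $t_0^{(0)}$ collects the rest, i.e., I move probability mass within each row between the two new columns. The constraints are (a) for each $s$, $\probof{T=t_0^{(1)}|S=s}+\probof{T=t_0^{(0)}|S=s}=\probof{T=t_0|S=s}$, and (b) the IP constraint across rows must hold for $t_0^{(1)}$ and for $t_0^{(0)}$ separately. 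Third, I observe that moving mass this way changes $Q$ only by replacing the single atom at value $\probof{Y=1|T=t_0}$ with a distribution over $\{$value of $t_0^{(1)}$, value of $t_0^{(0)}\}$ having the same mean — hence a mean-preserving spread — and it is a strict spread (strictly more informative) unless the two new posteriors coincide with the old one. So the only way optimality is not contradicted is if no such nontrivial split exists, and I must show that forces $H_{t_0}=e^\varepsilon L_{t_0}$ with only two cell widths present.

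The heart of the argument — and the step I expect to be the main obstacle — is the combinatorial/linear-programming claim that a nontrivial privacy-preserving split exists \emph{whenever} the column is not in the claimed binding two-width form. Concretely: given widths $w_s=\probof{T=t_0|S=s}$ partitioned into a yellow group $A$ and white group $B$ (both nonempty) with $w_{s_1}/w_{s_2}\le e^\varepsilon$ for all $s_1,s_2$, I want to write $w_s = u_s + v_s$ with $u_s,v_s\ge 0$, with $u_{s_1}/u_{s_2}\le e^\varepsilon$ (on the support where $u_s>0$) and likewise for $v$, such that the yellow-fraction $\frac{\sum_{s\in A}\probof{S=s}u_s}{\sum_s \probof{S=s}u_s}$ in column $t_0^{(1)}$ differs from that in column $t_0^{(0)}$. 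I plan to show this is possible unless the $w_s$ already take only two distinct values in the ratio $e^\varepsilon$: the idea is that if some $w_s$ is strictly between $L_{t_0}$ and $H_{t_0}$, or if $H_{t_0}<e^\varepsilon L_{t_0}$, there is ``slack'' to shift an $\epsilon$-amount of mass of one color relative to another, producing two columns with unequal yellow-fractions while respecting both two-sided ratio constraints. I would formalize this by a small perturbation argument (a feasible direction in the polytope of valid $(u_s,v_s)$ that changes the yellow-fraction to first order), and then invoke Blackwell's mean-preserving-spread criterion to conclude strict improvement. The rest — checking that after the split the whole structure is still $\varepsilon$-IP at every signal and that the resulting $Q'$ is globally a mean-preserving spread of $Q$ — is routine bookkeeping given that we only touched column $t_0$.
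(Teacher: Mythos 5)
Your proposal takes essentially the same route as the paper's proof: split the offending signal $t_0$ into two signals, shift a small amount of mass within a single row while keeping the inferential privacy ratios valid in both new columns, and conclude via a strict mean-preserving spread of the posterior (using \cref{lem:perfect} and \cref{thm:blackwell}) that Blackwell optimality is contradicted. The feasibility step you flag as the main obstacle is handled in the paper exactly by the perturbation you sketch --- halve the column and move $\pm\delta$ of mass in one row that has slack, which exists whenever some width lies strictly between $L_t$ and $e^\varepsilon L_t$ or $H_t<e^\varepsilon L_t$ --- so no general polytope/feasible-direction machinery is needed.
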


The proof is shown in \cref{app:proof_binding}. As illustrated in \cref{fig:geo_binding}, when a column associated with output $t\in \mathcal T$ is neither all-yellow nor all-white, we have $\probof{T=t|S=s}\in\brc{\lowlen{t}, \highlen{t}}, \forall s\in \sset$. {In other words, there are only two possible cell widths in the column, which we call ``wide" and ``narrow".} 
We illustrate wide cells with red outlines to represent secret-output pairs $\bra{s,t}$ with $\probof{T=t|S=s}=\highlen{t}$, and use narrow cells with blue outlines to represent secret-output pairs $\bra{s,t}$ with $\probof{T=t|S=s}=\lowlen{t}$.
{This condition is specific to the inferential privacy setting.}

\begin{figure}[htbp]
\centering
\includegraphics[width=0.7\linewidth]{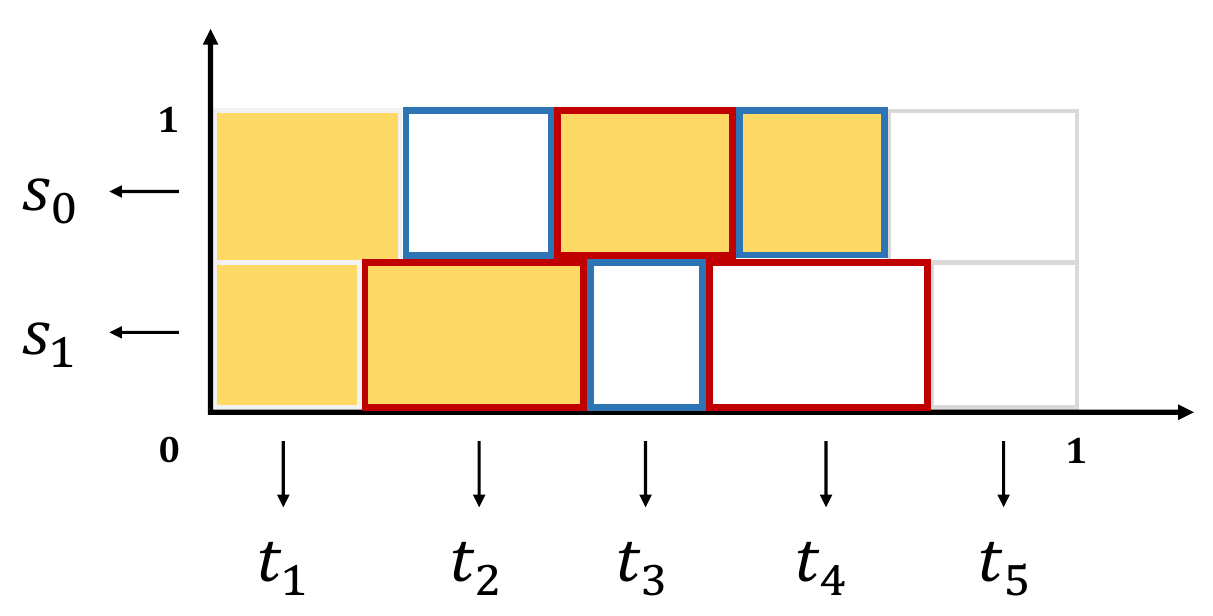}
\caption{Under an inferentially-private Blackwell optimal information structure with $\sset=\brc{s_0, s_1}$ and $\tsetinter=\brc{t_2, t_3, t_4}$, 
$\probof{T=t_2|S=s_1}=\highlen{t_2}$, $\probof{T=t_3|S=s_0}=\highlen{t_3}$, $\probof{T=t_4|S=s_1}=\highlen{t_4}$, illustrated by cells with red outlines, $\probof{T=t_2|S=s_0}=\lowlen{t_2}$, $\probof{T=t_3|S=s_1}=\lowlen{t_3}$, $\probof{T=t_4|S=s_0}=\lowlen{t_4}$, illustrated by cells with blue outlines, {and $\highlen{t}=e^{\varepsilon}\cdot\lowlen{t}, \forall t\in\brc{t_2, t_3, t_4}$.} %
}
\label{fig:geo_binding}
\end{figure}

\subsection{Upper left characterization}
\label{sec:upper-left}

Based on Lemma~\ref{lem:perfect} and Lemma~\ref{lem:binding}, 
we define regions $\mathcal{A}, \mathcal{B}, \mathcal{C}$ as follows. 
\begin{align*}
\mathcal{A} &= \{(s,t): \probof{Y=1|S=s, T=t}=1, s \in \mathcal{S}, t\in \tset\},\\
\mathcal{B} &= \{ (s,t): (s,t)\in \mathcal{A},\ \probof{T=t|S=s} = H_t, s \in \mathcal{S}, t\in \tsetinter\},\\
\mathcal{C} &= \{ (s,t): (s,t)\notin \mathcal{A},\ \probof{T=t|S=s} = H_t, s \in \mathcal{S}, t\in \tsetinter\}.
\end{align*}

We then characterize upper-left and lower-right regions in \cref{def:upper-left}.
\begin{definition}
\label{def:upper-left}
A region $\mathcal{A}$ is $\tset$-upper-left if for any $\bra{s_i, t_j} \in \mathcal{A}$, 
\begin{align*}
\bra{s_k, t_l} \in \mathcal{A}, \quad \forall k\leq i,\ l\leq j,\ s_k\in \sset,\ t_l\in\tset.
\end{align*}
A region $\mathcal{C}$ is $\tsetinter$-lower-right if for any $(s_k, t_k)\in \mathcal{C}$, 
\begin{align*}
\bra{s_k, t_l} \in \mathcal{C}, \quad \forall k\geq i,\ l\geq j,\ s_k\in \sset,\ t_l\in\tsetinter.
\end{align*}
\end{definition}
{Intuitively, upper-left means that starting from any cell in the region, every cell above it and/or to the left is also part of the region. Similarly, lower-right means that starting from any cell in the region, all cells below and/or to the right of the cell are also in the region. }

\begin{lemma} \label{thm:characterization}
Consider any $\epsilon$-inferentially-private Blackwell optimal information structure $\probof{Y,S,T}$. Suppose $s_1, \dots, s_n \in \mathcal{S}$ are ordered in decreasing order of $\probof{Y=1|S=s}$, and $t_1, \dots, t_k \in \mathcal{T}$ are ordered in decreasing order of  $\probof{Y=1|T=t}$. Then the region $\mathcal{A}$ is $\tset$-upper-left, region  $\mathcal{B}$ is $\tsetinter$-upper-left, and region $\mathcal{C}$ is $\tsetinter$-lower-right.
\end{lemma}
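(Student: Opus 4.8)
\textbf{Proof plan for Lemma~\ref{thm:characterization}.}

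The plan is to argue by contradiction: assuming one of the three closure properties fails, I will construct a new information structure that is still $\varepsilon$-inferentially-private and that Blackwell dominates (indeed strictly improves on) the given one, contradicting its optimality. The key conceptual tool is the following ``swapping'' or ``rearrangement'' principle: because the mechanism is determined by the cell colors $\probof{Y=1|S=s,T=t}\in\{0,1\}$ (Lemma~\ref{lem:perfect}) and the cell widths $\probof{T=t|S=s}\in\{L_t,H_t=e^\varepsilon L_t\}$ (Lemma~\ref{lem:binding}), the posterior $Q=\probof{Y=1|T=t}$ of each column is a weighted average of $0$'s and $1$'s with weights proportional to cell areas. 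Moving a yellow (resp. wide) cell ``up and to the left'' while moving a white (resp. narrow) cell ``down and to the right'' tends to make the column posteriors more spread out, which by Theorem~\ref{thm:blackwell}(2) corresponds to a mean-preserving spread and hence to Blackwell improvement — provided the move keeps all the marginals $\probof{S=s}$ fixed and keeps every column IP-feasible. So the heart of the argument is to show that whenever a closure property is violated, there is a \emph{local} exchange of this type that is marginal-preserving and IP-preserving.

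The steps, in order, would be: (i) \textbf{$\mathcal{A}$ is $\tset$-upper-left.} Suppose $(s_i,t_j)\in\mathcal{A}$ but some $(s_k,t_l)\notin\mathcal{A}$ with $k\le i$, $l\le j$. Since $s_k$ has a (weakly) larger prior $q_{s_k}=\probof{Y=1|S=s_k}$ and $t_l$ has a (weakly) larger column posterior $\probof{Y=1|T=t_l}$, having the yellow mass sit at $(s_i,t_j)$ rather than at $(s_k,t_l)$ is ``misaligned.'' I would exchange a sliver of yellow between these cells (within rows $s_k,s_i$ and columns $t_l,t_j$, shuffling compensating white mass so that each row's total width, i.e.\ each $\probof{S=s}$, is unchanged), argue this is a mean-preserving spread of the column-posterior distribution, and conclude strict Blackwell improvement unless the structure was already upper-left. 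This reuses the ``sets of uniqueness''-style reasoning of \citet{he2022private} adapted to the discrete two-value-width setting. (ii) \textbf{$\mathcal{B}$ is $\tsetinter$-upper-left.} Given $\mathcal{A}$ is upper-left, $\mathcal{B}$ consists of the yellow cells that are additionally \emph{wide} ($H_t$), restricted to non-deterministic columns. If $(s_i,t_j)\in\mathcal{B}$ but $(s_k,t_l)\notin\mathcal{B}$ for some $k\le i,l\le j$ with $(s_k,t_l)\in\mathcal{A}$ (so the failure is about width, not color), then within the all-yellow sub-block I can permute which cells are wide vs.\ narrow in columns $t_l,t_j$ without changing any column posterior (yellow area only matters through total column width, which is fixed per column) — but it \emph{does} change the feasibility/area bookkeeping in the \emph{white} part of those columns, and the point is that pushing wide yellow cells up-left frees up room to make the corresponding white cells below wide, which in turn lets those columns carry more extreme posteriors elsewhere; formalizing this as a strict Blackwell improvement (or showing non-upper-left $\mathcal B$ forces a violation of IP or of optimality) is the crux. (iii) \textbf{$\mathcal{C}$ is $\tsetinter$-lower-right.} This is the ``dual'': $\mathcal C$ is the set of wide cells that are \emph{white} in non-deterministic columns. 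I would show that if a white-wide cell sits up-left of a cell not in $\mathcal C$, I can swap widths so that within each affected column the wide cell moves to a lower row, again marginal-preserving, and show the resulting column posteriors are a mean-preserving spread. Here I expect to use that within a column the posterior is $\frac{\sum_{s:\,\text{yellow}}\probof{T=t|S=s}\probof{S=s}/\dots}{\dots}$ so that making the yellow cells narrow and white cells wide (or vice versa) shifts the posterior monotonically, and alignment with the row ordering (by $q_s$) is what extremizes the overall spread.

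I expect step (ii)/(iii) — the width (rather than color) closure for $\mathcal{B}$ and $\mathcal{C}$ — to be the main obstacle, for exactly the reason the authors flag in the ``Relation to \citet{he2022private}'' paragraph: with perfect privacy every column is a rectangle so there is \emph{no} width degree of freedom and only the color-closure (i) is needed, whereas here the two admissible widths $L_t,e^\varepsilon L_t$ create a genuinely new combinatorial choice, and the Lorentz ``sets of uniqueness'' machinery no longer applies off the shelf. The delicate points will be (a) ensuring every local exchange simultaneously preserves \emph{all} row marginals $\probof{S=s}$ and keeps \emph{every} touched column IP-feasible (ratio of widths staying in $\{1,e^{\pm\varepsilon}\}$, per Lemma~\ref{lem:binding}), which may force the ``swap'' to be a coordinated move across several cells rather than a single pair; and (b) making the Blackwell-improvement direction rigorous — i.e.\ verifying that the constructed structure's column-posterior distribution is genuinely a mean-preserving spread of the original's, not merely ``more spread in some columns,'' possibly by exhibiting an explicit garbling (Markov kernel) $T_{\text{old}}$ given $T_{\text{new}}$. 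If a clean local exchange cannot always be found, the fallback is a global extremal argument: among all IP-feasible Blackwell-optimal structures with $\probof{Y|S,T}\in\{0,1\}$ and binding IP (which exist and have bounded support by the earlier lemmas), pick one and show any non-upper-left/non-lower-right configuration can be ``sorted'' by a finite sequence of marginal- and IP-preserving transpositions into upper-left/lower-right form without decreasing informativeness, appealing to a monovariant (e.g.\ a weighted count of ``inversions'' between the cell pattern and the $(s,t)$ orderings) that strictly decreases at each step.
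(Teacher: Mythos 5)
Your overall strategy---rule out ``misaligned'' local configurations by exhibiting an IP-preserving, marginal-preserving perturbation that strictly improves in the Blackwell order---is the same strategy the paper uses, and your step (i) is essentially recoverable: the paper formalizes it as the elimination of a ``0-1 crossing block,'' keeping $\probof{S,T}$ and $\probof{Y,S}$ fixed and re-allocating $\probof{Y=1\,|\,S,T}$ inside a $2\times 2$ block so the two column posteriors move apart (a mean-preserving spread), with IP preserved automatically because $\probof{S,T}$ is untouched. But there is a genuine gap exactly where you flag it: for the width-closure claims ($\mathcal{B}$ upper-left, $\mathcal{C}$ lower-right) you never produce the improving move. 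Your suggestion to ``permute which cells are wide vs.\ narrow within the all-yellow sub-block without changing any column posterior'' cannot by itself yield a contradiction---a move that changes no posterior gives no Blackwell improvement---and the hand-wave that this ``frees up room'' elsewhere is precisely the missing construction; the fallback monovariant/sorting argument is likewise unspecified. The paper's key idea, which your proposal lacks, is the ``H-L crossing block'': take a row $s_1$ whose cell is wide ($H_{t_1}$) in column $t_1$ and narrow ($L_{t_2}$) in column $t_2$, and a second row $s_2$ that is white in $t_1$ and yellow in $t_2$. One then shifts a small amount of row-$s_1$ mass between the two columns (this is IP-safe precisely because the perturbation moves both cells strictly into the interior of the admissible width interval $[L_t,H_t]$) and simultaneously re-allocates $\probof{Y=1\,|\,S=s_2,T}$ across $t_1,t_2$ so that $\probof{Y,S}$ and the row-$s_2$ widths are preserved; choosing the sign according to which column has the larger posterior pushes $\probof{Y=1\,|\,T=t_1}$ and $\probof{Y=1\,|\,T=t_2}$ apart, giving the strict improvement. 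This coupling of a width shift in one row with a re-coloring in a different row is the ``entirely new technique'' the problem requires, and it is what makes the binding-width structure of Lemma~\ref{lem:binding} usable.

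A second, smaller omission: even after the forbidden blocks are excluded, the lemma does not follow immediately by contradiction on a single pair of cells. The paper needs an intermediate combinatorial step (its Lemma~C.2): absence of 0-1 crossing blocks makes the column sets $\mathcal{A}_t$ (and row sets $\mathcal{D}_s$) totally ordered by inclusion, and absence of H-L crossing blocks then forces $\mathcal{A}_{t_1}\subset\mathcal{A}_{t_2}\Rightarrow\mathcal{B}_{t_1}\subseteq\mathcal{B}_{t_2}$ and $\mathcal{C}_{t_2}\subseteq\mathcal{C}_{t_1}$; only after sorting rows and columns consistently with these nested families do the $\tset$-upper-left / $\tsetinter$-upper-left / $\tsetinter$-lower-right statements drop out. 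Your proposal jumps from ``no bad local pattern'' to the global geometric conclusion without this nesting argument, and also (in step (i)) leans on the Lorentz ``sets of uniqueness'' reasoning that the paper explicitly says no longer applies once $\varepsilon>0$.
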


The proof is in \cref{app:proof_characterization}. An example is shown in \cref{fig:geo_upperleft}; the yellow cells form the region $\mathcal{A}$, the cells in yellow with red outlines form the region $\mathcal{B}$, and cells in white with red outlines form the region $\mathcal{C}$. The region $\mathcal{A}$ is $\tset$-upper-left, region  $\mathcal{B}$ is $\tsetinter$-upper-left, and region $\mathcal{C}$ is $\tsetinter$-lower-right.
{The upper-left characterization of region $\mathcal{A}$ is also true in the perfect privacy setting, while the characterizations of regions $\mathcal{B}$ and $\mathcal{C}$ are specific to the inferential privacy setting.}

\begin{figure}[htbp]
\centering
\includegraphics[width=0.6\linewidth]{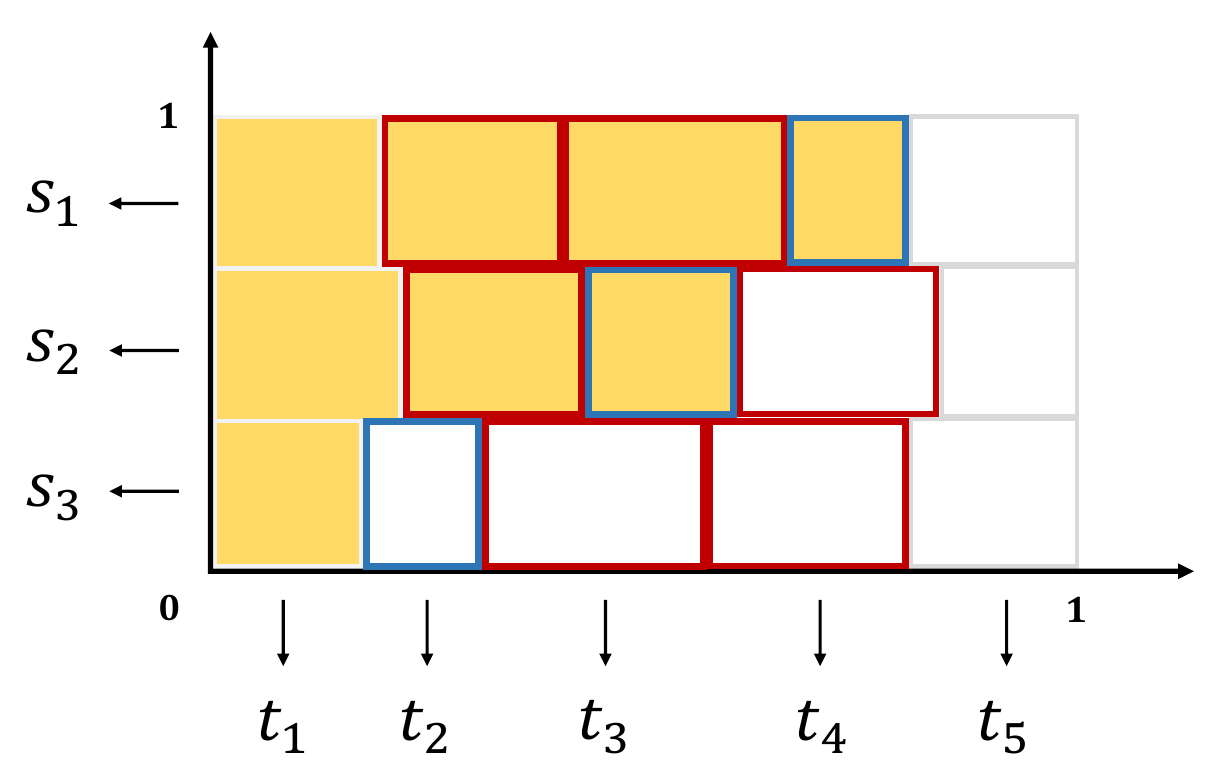}
\caption{Under an inferentially-private Blackwell optimal information structure with $\mathcal{S}=\brc{s_1, s_2, s_3}$ and $\tsetinter=\bra{t_2, t_3, t_4}$, region $\mathcal{A}$ is illustrated as the yellow cells, region $\mathcal{B}$ is illustrated as the yellow cells with red outlines, and region $\mathcal{C}$ is illustrated as the white cells with red outlines. The region $\mathcal{A}$ is $\tset$-upper-left, region  $\mathcal{B}$ is $\tsetinter$-upper-left, and region $\mathcal{C}$ is $\tsetinter$-lower-right.
} %
\label{fig:geo_upperleft}
\end{figure}

{\textit{Remark on proof techniques}. \cref{thm:characterization} can be viewed as an extension of the upward-closed set representation in \cite[Theorem 3]{he2022private} in the discrete setting. The relaxation of the privacy constraint ($\varepsilon>0$) introduces a fundamental difference to the underlying geometry and the previous technique of \citet{he2022private} for the perfect privacy setting, which uses the classical result on sets of uniqueness by \citet{lorentz1949problem}, can no longer be applied. 
As a result, finding the sufficient condition for optimality is much more challenging and \cref{thm:characterization} is based on entirely new techniques. The key idea is to identify ``microstructures'' that cannot appear in an optimal structure. We prove that information structures without such ``microstructures'' must have $S$ and $T$ that can be reordered to exhibit a structure that can be represented by three upward-closed sets, as opposed to one in the perfect privacy case.}

\subsection{Cardinality of the Output Signal Set}
\label{sec:num_output_signals}
The geometric characterization of the $\epsilon$-inferentially-private Blackwell optimal information structure allows us to upper bound  the number of outputs needed to construct an optimal information structure.\footnote{{Note that the analysis of the geometric characterization in \cref{lem:perfect,lem:binding,thm:characterization} does not rely on the existence of a finite-size Blackwell optimal structure.}} {We first define the equivalent transformation as follows.}

{
\begin{definition}[Equivalent Transformation]
\label{def:equivalent}
The equivalent transformation of an information structure $\probof{S, Y, T}$ consists of one or multiple operations shown as follows. Denote $\probhatof{Y,S,\hat{T}}$ as the information structure after transformation.
\begin{itemize}
\item \textbf{Split}. Split an output signal $t_i$ into a set of equivalent signals $\hat{\tset}_i$ that share the same geometric pattern:
\begin{align*}
&\probof{T=t_i} = \sum_{\hat{t}\in \hat{\tset}_i} \probhatof{\hat{T} = \hat{t}},\\
&\probof{Y=1|T=t_i} =  \probhatof{Y=1|\hat{T} = \hat{t}}, \quad\forall \hat{t}\in \hat{\tset}_i,\\
&\probof{S=s|T=t_i} =  \probhatof{S=s|\hat{T} = \hat{t}}, \quad\forall s\in \sset, \ \hat{t}\in \hat{\tset}_i.
\end{align*}
\item \textbf{Merge}. Merge a set of signals $\tset_i$ to an equivalent signal $\hat{t}_i$ that shares the same geometric pattern:
\begin{align*}
&\sum_{t\in \tset_i}\probof{T=t} = \probhatof{\hat{T} = \hat{t}_i},\\
&\probof{Y=1|T=t} =  \probhatof{Y=1|\hat{T} = \hat{t}_i}, \quad\forall t\in \tset_i,\\
&\probof{S=s|T=t} =  \probhatof{S=s|\hat{T} = \hat{t}_i}, \quad\forall s\in \sset, \ t\in \tset_i.
\end{align*}
\end{itemize}
\end{definition}
}

{We say that $\probhatof{Y,S,\hat{T}}$ is an equivalent information structure to $\probof{Y,S,T}$ if it can be obtained by equivalent transformation.}

\begin{theorem}
\label{lem:poly_support}
Given $\probof{S,Y}$ and $\epsilon>0$, for any $\epsilon$-inferentially-private Blackwell optimal information structure, there exists an equivalent   information structure $\probof{Y,S,T}$ that has $|\mathcal{T}| \le 3|\mathcal{S}|+1$. %
\end{theorem}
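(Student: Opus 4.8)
The plan is to use the three-part geometric characterization (Lemmas~\ref{lem:perfect}, \ref{lem:binding}, and \ref{thm:characterization}) to bound the number of distinct columns needed, and then invoke the equivalent transformation (\cref{def:equivalent}) to collapse columns that share the same geometric pattern. First I would partition $\mathcal{T}$ into two parts: the ``deterministic'' signals $\mathcal{T}\setminus\tsetinter$, for which $\probof{Y=1|T=t}\in\{0,1\}$, and the ``intermediate'' signals $\tsetinter$. For the deterministic part, all signals $t$ with $\probof{Y=1|T=t}=1$ are equivalent in the sense of \cref{def:equivalent} provided the conditional secret distribution $\probof{S|T=t}$ agrees; so I would first argue that the optimal structure (or an equivalent one) can take all $Y=1$-deterministic mass into signals whose column is ``all-yellow'' with a single shape, and likewise all $Y=0$-deterministic mass. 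Actually, since a deterministic signal reveals $Y$ completely, the inferential privacy constraint still applies, so a $Y=1$-column need not be a rectangle; but Blackwell optimality means a downstream observer only cares about $\probof{Y|T}$, so all $Y=1$ deterministic signals can be merged into one (and similarly $Y=0$): that accounts for at most $2$ signals, or we can fold them in with the analysis below—I will be careful to state that a merge is allowed only when the geometric pattern matches, which it trivially does for pure-$Y$ columns since $\probof{Y|S,T}$ is constant there, and merging adjusts $\probof{S|T}$ to the aggregate. Here I should double check: merging requires $\probof{S=s|T=t}$ equal across merged signals, so I cannot merge two $Y=1$ columns with different secret profiles; instead the right statement is that the \emph{union} of all $Y=1$-deterministic columns can be viewed as one signal after merging only when profiles coincide, so in the worst case I keep them and count via $\mathcal{A},\mathcal{B},\mathcal{C}$. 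I will resolve this by noting a cleaner route below.

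Next, for the intermediate signals $t\in\tsetinter$, I would use Lemma~\ref{lem:binding}: each such column has exactly two cell widths $L_t$ and $H_t = e^\varepsilon L_t$, so a column of $t$ is fully determined (up to scaling) by (i) which secrets $s$ are ``wide'' vs.\ ``narrow'' — i.e., the membership pattern of row-indices in $\mathcal{B}\cup\mathcal{C}$ restricted to column $t$ — and (ii) which secrets are yellow vs.\ white, i.e., the column-slice of $\mathcal{A}$. By Lemma~\ref{thm:characterization}, the yellow set $\mathcal{A}$ is $\tset$-upper-left, so within column $t$ the yellow rows are a prefix $\{s_1,\dots,s_{a(t)}\}$; the wide-and-yellow set $\mathcal{B}$ is $\tsetinter$-upper-left so its column slice is a prefix $\{s_1,\dots,s_{b(t)}\}$ with $b(t)\le a(t)$; and the wide-and-white set $\mathcal{C}$ is $\tsetinter$-lower-right so its column slice is a suffix $\{s_{c(t)},\dots,s_n\}$ with $c(t) > a(t)$. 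Thus each intermediate column is characterized by the triple $(a(t), b(t), c(t))$ of ``break points.'' Now the monotonicity across columns kicks in: as $t$ ranges over $\tsetinter$ in decreasing order of $\probof{Y=1|T=t}$, the upper-left property of $\mathcal{A}$ forces $a(t)$ to be nonincreasing in $t$; the upper-left property of $\mathcal{B}$ forces $b(t)$ nonincreasing; and the lower-right property of $\mathcal{C}$ forces $c(t)$ nonincreasing as well (reading columns left to right, a cell being in $\mathcal{C}$ forces all cells below-and-right to be in $\mathcal{C}$, so the ``start row'' of the suffix can only move up as $t$ increases, i.e.\ $c(t)$ is nonincreasing in the index). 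Hmm, I need to get the direction exactly right from \cref{def:upper-left}; I would carefully track it, but the upshot is that $a$, $b$, and $c$ are each monotone sequences taking integer values in $\{0,1,\dots,n\}$ or $\{1,\dots,n+1\}$.

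The counting then goes: a monotone integer sequence on $\{0,\dots,n\}$ changes value at most $n$ times, so it takes at most $n+1$ distinct values; but two consecutive intermediate columns with the \emph{same} triple $(a,b,c)$ have identical geometric patterns (same yellow/white split, same wide/narrow split, and widths are pinned to $L_t,H_t=e^\varepsilon L_t$ so they agree up to a scaling of the column width, which is exactly a \emph{merge} in \cref{def:equivalent}: $\probof{Y=1|T},\probof{S|T}$ are identical for equal triples). So I can merge all intermediate columns sharing a triple into one, and the number of surviving intermediate signals is at most the number of distinct triples along the monotone path. Because $a$, $b$, $c$ are each monotone, the combined sequence $(a(t),b(t),c(t))$ is monotone in a product order and changes at most $3n$ times total (each of the three coordinates contributes at most $n$ changes), giving at most $3n+1$ distinct triples. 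Together with the (at most one, after merging) all-white deterministic signal $Y=0$ — note the all-yellow $Y=1$ deterministic signals actually correspond to the triple with $a(t)=n$, i.e.\ they are absorbed into the $\tsetinter$-style count, or can be handled by observing $\mathcal A$ being $\tset$-upper-left already forces at most one such ``fully yellow'' column after merging — I get $|\mathcal{T}| \le 3n+1 = 3|\mathcal{S}|+1$. I would need to be careful about boundary bookkeeping (whether the deterministic columns add $+1$ or are subsumed), and I expect the \textbf{main obstacle} to be precisely this: pinning down the exact combinatorial bound so that the off-by-one works out to $3|\mathcal{S}|+1$ and not $3|\mathcal{S}|+2$ or $3|\mathcal{S}|+3$ — i.e., showing the three monotone ``staircases'' can be made to share their jumps efficiently, or accounting honestly for the at-most-two fully-deterministic columns, and verifying that every ``merge'' I perform is legitimate under \cref{def:equivalent} (equal $\probof{Y|T}$ and equal $\probof{S|T}$ across merged signals, which requires the column widths, not just the shape pattern, to be proportional — true here because both $L_t$ and $H_t$ scale together).
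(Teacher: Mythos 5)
Your overall route is the paper's route: use \cref{lem:perfect,lem:binding,thm:characterization} to encode each column $t\in\tsetinter$ by the breakpoint triple $\bra{a(t),b(t),c(t)}$ (prefix of cells in $\mathcal{A}$, prefix in $\mathcal{B}$, suffix in $\mathcal{C}$), observe that equal triples force equal $\probof{Y=1|T=t}$ and equal $\probof{S|T=t}$ (since the two widths are pinned to $L_t$ and $e^{\epsilon}L_t$), merge duplicates via \cref{def:equivalent}, and count distinct triples by monotonicity. The paper runs the same argument in the opposite order (first merge all signals sharing a value of $\probof{Y=1|T}$, then bound the number of distinct values via the triples), so the core of your proposal matches.

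The genuine gap is exactly the endgame you flag, and as written it does not close. First, your fallback for the deterministic signals fails: columns with $\probof{Y=1|T=t}\in\brc{0,1}$ are not in $\tsetinter$, so \cref{lem:binding} and the regions $\mathcal{B},\mathcal{C}$ impose nothing on their widths; they carry no triple and are not ``absorbed into the $\tsetinter$-style count,'' nor does the upper-left property of $\mathcal{A}$ alone force a single all-yellow column (two all-yellow columns may have different width profiles, hence different $\probof{S|T}$, and then the merge in \cref{def:equivalent} is not available). The paper handles this by asserting, as part of its merging step, that in an optimal structure any two signals with the same value of $\probof{Y=1|T}$ --- including the values $0$ and $1$ --- also share the same $\probof{S|T}$, so each deterministic class collapses to one signal; you would need to supply that statement (or a substitute argument that the collapsed structure is still an equivalent optimal structure), which you leave open. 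Second, even granting exactly one all-yellow and one all-white signal, your bound of ``at most $3n+1$ distinct triples'' for the intermediate columns yields a total of $3n+3$. The paper's count is tighter because for $t\in\tsetinter$ the triple is constrained: $1\le a_i<n$ (an intermediate column has at least one yellow and one white row), $0\le b_i<n$ since $b_i\le a_i$, and $1<c_i\le n+1$ since $c_i>a_i$; moreover consecutive triples must differ, because after merging all columns have distinct $\probof{Y=1|T}$ and the triple (together with $\probof{S}$) determines $\probof{Y=1|T}$. Monotonicity then gives at most $3n-1$ intermediate signals, hence $\brd{\tset}\le 3\brd{\sset}+1$ in total; without these two repairs your argument only establishes $\brd{\tset}\le 3\brd{\sset}+3$.
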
 

\begin{proof}
Suppose $s_1, \dots, s_n \in \mathcal{S}$ are ordered in decreasing order of $\probof{Y=1|S=s}$. For any $\epsilon$-inferentially-private Blackwell optimal information structure $\probhatof{Y,S,\hat{T}}$, suppose there are $k$ unique values of $\probhatof{Y=1|\hat{T}}$, denoted as $v_1, \ldots, v_k$. Suppose $v_1>\ldots > v_k$ without loss of generality, and denote $\hat{\tset}_i = \brc{\hat{t}\in\hat{\tset}: \probhatof{Y=1|\hat{T}=\hat{t}}=v_i}$ where $i\in \brb{k}$.
Based on \cref{lem:binding,thm:characterization}, we can get that 
\begin{align*}
&\forall s\in \sset,\  i\in\brb{k}, \ \hat{t}_1,\hat{t}_2 \in \hat{\tset}_i: \quad \probhatof{Y=1|S=s, \hat{T}=\hat{t}_1} = \probhatof{Y=1|S=s, \hat{T}=\hat{t}_2},\\
&\forall s\in \sset,\  i\in\brb{k}, \ \hat{t}_1,\hat{t}_2 \in \hat{\tset}_i: \quad \probhatof{S=s| \hat{T}=\hat{t}_1} = \probhatof{S=s| \hat{T}=\hat{t}_2}.
\end{align*}
We can construct an information structure $\probof{S,Y,T}$ such that 
\begin{align*}
\forall s\in \sset: \quad\probof{T=t_i|S=s} = \sum_{\hat{t}\in \hat{\tset}_i} \probhatof{\hat{T}=\hat{t}|S=s}.
\end{align*}

{With simple calculations, we can get that $\probof{T=t_i} = \sum_{\hat{t}\in \hat{\tset}_i}\probhatof{\hat{T}=\hat{t}}$, $\probof{Y=1|T=t_i} = v_i$, and $\probof{S=s|T=t_i}=\probhatof{S=s|\hat{T}=\hat{t}}$, $\forall i\in\brb{k}, s\in \sset, \hat{t}\in\hat{\tset}_i$.
Therefore, $\probof{S,Y,T}$ is an equivalent information structure to $\probhatof{S,Y,\hat{T}}$, and $t_1, \dots, t_k \in \mathcal{T}$ are ordered in decreasing order of $\qt=\probof{Y=1|T=t}$. }

Intuitively, $\probof{S,Y,T}$ is a compressed version of $\probhatof{S,Y,\hat{T}}$, by merging the columns that share the same geometric characterization. As illustrated in \cref{fig:geo_compress}, the columns corresponding to $\hat{t}_3$ and $\hat{t}_4$ in $\probhatof{S,Y,\hat{T}}$ shares the same geometric pattern, and thus can be merged as a single column (corresponding to $t_3$) in $\probof{S,Y,T}$.

\begin{figure}[htbp]
\centering
\includegraphics[width=\linewidth]{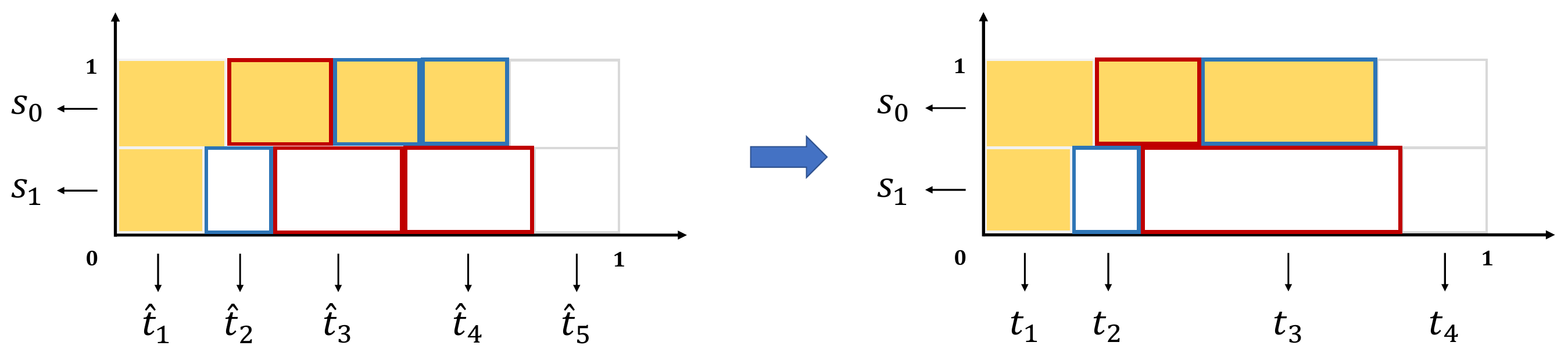}
\caption{Information structure of $\probhatof{S,Y,\hat{T}}$ (left) and $\probof{S,Y,T}$ (right) with binary secret $S\in\brc{s_0, s_1}$. In $\probhatof{S,Y,\hat{T}}$, the columns corresponding to $\hat{t}_3$ and $\hat{t}_4$  has the same geometric pattern. In $\probof{S,Y,T}$, those columns are merged as a single column (the column corresponding to $t_3$).}
\label{fig:geo_compress}
\end{figure}

For any convex utility function $u$, we have
\begin{align*}
\mathbb{E}_t\brb{u\bra{\qt}} = \sum_{i\in\brb{k}}\probof{T=t_i}u\bra{\qti{i}} = \sum_{i\in\brb{k}}\sum_{\hat{t}\in \hat{\tset}_i}\probhatof{\hat{T}=\hat{t}}u\bra{v_i} = \sum_{\hat{t}\in\hat{\tset}} \probhatof{\hat{T}=\hat{t}}u\bra{q_{\hat{t}}} = \mathbb{E}_{\hat{t}}\brb{u\bra{q_{\hat{t}}}}.
\end{align*}
Therefore, from \cref{thm:blackwell}, we know that $\probof{S,Y,T}$ is an optimal information structure equivalent to $\probhatof{S,Y,\hat{T}}$.

For each $t\in \tsetinter$, define $\mathcal{A}_t$ to be the set of $s\in \mathcal{S}$ that has $\probof{Y=1|S=s, T=t}=1$, and define $\mathcal{B}_t$ to be the set of $s\in \mathcal{A}_t$ that has $\probof{T=t|S=s} = H_t$, and define $\mathcal{C}_t$ to be the set of $s\notin \mathcal{A}_t$ that has $\probof{T=t|S=s} = H_t$. From \cref{thm:characterization}, we know that for any $i\in\brc{2,\cdots, k-1}$, we have $\mathcal{A}_{t_i} = \{s_1, \dots, s_{a_i}\}$ and $\mathcal{B}_{t_i} = \{s_1, \dots, s_{b_i} \}$ and $\mathcal{C}_{t_i} = \{s_{c_i}, \dots, s_n \}$ for some $a_i,b_i, c_i$ with
\begin{align*}
1\leq a_i<n, &\quad 0\leq b_i < n, \quad 1<c_i\leq n+1,\\
a_{i+1}\leq a_{i} \leq a_{i-1}, &\quad b_{i+1} \leq b_i \leq b_{i-1}, \quad c_{i+1} \leq c_i \leq c_{i-1},
\end{align*}
where $\mathcal{B}_{t_i}=\emptyset$ when $b_i=0$ and $\mathcal{C}_{t_i}=\emptyset$ when $c_i=n+1$. Since $\bra{a_i, b_i, c_i}$ and $\bra{a_{i+1}, b_{i+1}, c_{i+1}}$ differ for at least one element in $\probof{S,Y,T}$, we have $k-2 \leq n-1 + n + n = 3n-1$, i.e., $\brd{\mathcal{T}}=k\leq 3n+1 = 3\brd{\mathcal{S}}+1$.
\end{proof}

\section{Mechanism Design with Binary Secret}
\label{sec:binary_secret}

{The geometric characterization of inferentially-private Blackwell optimal information structures significantly reduces the search space for solutions. However, it does not allow us to trivially determine a Blackwell-optimal solution for general problem settings. }
In this section, we design an information disclosure mechanism that achieves an $\epsilon$-inferentially-private  Blackwell optimal information structure when the secret is binary, i.e., {$\mathcal S = \brc{s_0,s_1}$}. %
{We provide an optimal disclosure mechanism that is closed-form, only uses 4 output signals, and is unique up to equivalent transformations (that split or merge equivalent signals). }
{The designed mechanism universally maximizes the expected utility $\mathbb{E}_{t}\brb{\utility{\qt}}$, where $\qt = \probof{Y=1|T=t}$, of the decision maker under any reward function.}

\subsection{Geometric characterization with binary secret}
{We first start by presenting the geometric characterization under the  special case of a binary secret.}
Let $\qs = \mathbb{P}(Y=1|S=s)$, $\qt = \mathbb{P}(Y=1|T=t)$ and $\pt = \mathbb{P}(T=t)$. %
Denote $\lij{i}{j} = \mathbb{P}(T=t_i|S=s_j)$, where $j\in\{0,1\}$, {and $\boldsymbol{l}=\brc{\lij{i}{j}}_{i\in \brd{\tset}, j\in\brc{0,1}}$}. 
{In other words, $\lij{i}{j}$ is the width of the cell in the $i$th column and the $j$th row.} 
{As discussed in \cref{sec:geometric}, we can fully determine the information structure of $\probof{S,Y,T}$ by specifying the values of $\lij{i}{j}$ and $\probof{Y|S,T}$, and in turn determines the disclosure mechanism.} 
The following lemma shows the characterization of a Blackwell-optimal information structure $\mathbb{P}\bra{Y,S,T}$ {by specifying the constraints on values of $\lij{i}{j}$, which in turn can specify $\probof{Y|S,T}$ together with \cref{lem:perfect,thm:characterization}}. %

\begin{figure}[htbp]
\centering
\includegraphics[width=0.8\linewidth]{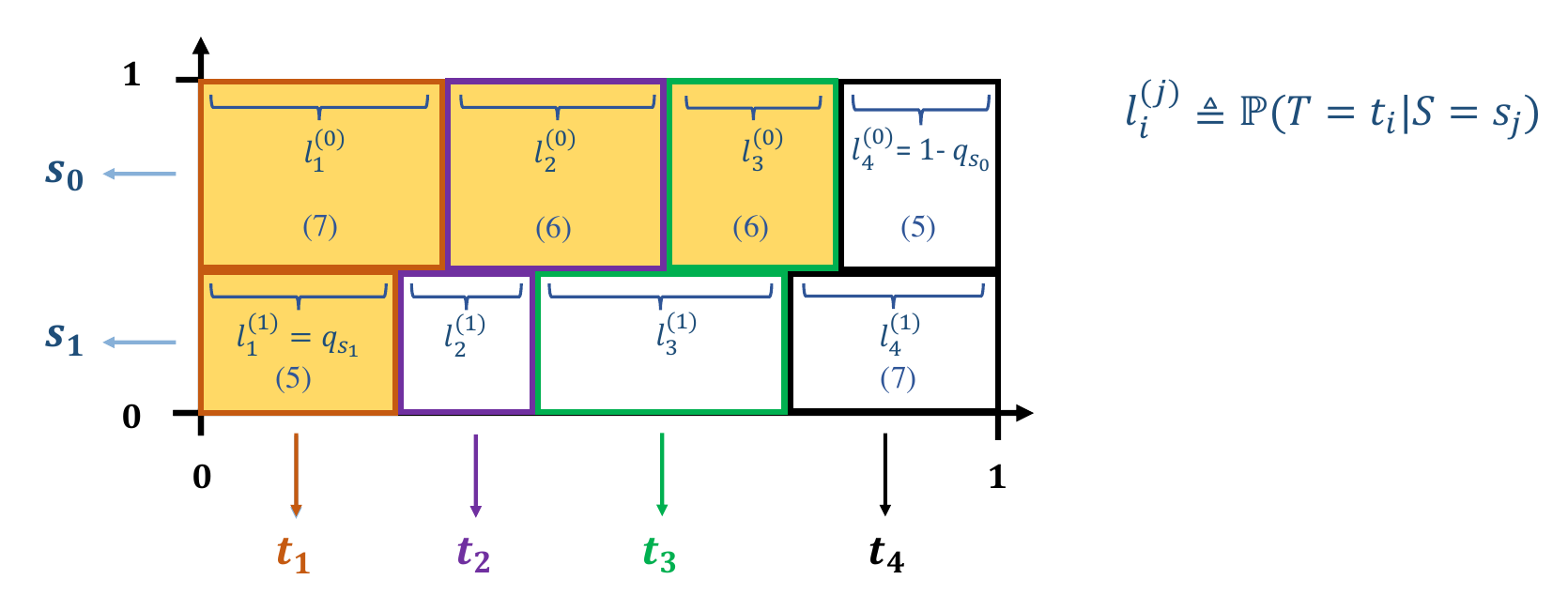}
\caption{Geometric characterization of an $\epsilon$-inferentially-private  Blackwell optimal information structure with a binary secret. There are at most four output signals $t_1, t_2, t_3, t_4$; signals $t_1$ and $t_4$ completely reveal the signal $Y$. {For shorthand, we use $\lij{i}{j}$ to represent $\probof{T=t_i|S=s_j}$. Cells with the same border color correspond to the same output signal $T$. The widths of the cells in the bottom-left and top-right corners are fully determined by \cref{eqn:constant} in \cref{lemma:feasibility}: namely, $\lij{1}{1}=\qsi{1}$ and $\lij{4}{0} = 1-\qsi{0}$. The ratio between the widths of the cells with purple or green outlines satisfies $\frac{\lij{2}{0}}{\lij{2}{1}} = \frac{\lij{3}{1}}{\lij{3}{0}} = \ipratio$, according to \cref{eqn:fix_ratio}. The widths of the cells in top-left and bottom-right corners satisfy ${\lij{1}{0}} \in \brb{\ipratioinv\lij{1}{1}, \ipratio\lij{1}{1}}, {\lij{4}{1}} \in \brb{\ipratioinv\lij{4}{0}, \ipratio\lij{4}{0}}$ respectively, according to \cref{eqn:range_ratio}}.}
\label{fig:binary_secret}
\end{figure}

\begin{lemma}[Feasibility condition]
\label{lemma:feasibility}
Given $\probof{S,Y}$ and $\epsilon>0$, for any $\epsilon$-inferentially-private Blackwell optimal information structure with $\mathcal S = \brc{s_0,s_1}$, there exists an equivalent information structure $\probof{Y,S,T}$ that satisfies $\brd{\tset}\leq 4$ and {its associated $\boldsymbol{l}$ values have the following properties:}
{
\begin{itemize}
\item $\lij{1}{1}$ and $\lij{4}{0}$ are fixed to be:
\begin{align*}
&\lij{1}{1} = \qsi{1}, \quad \lij{4}{0} = 1-\qsi{0},
\numberthis
\label{eqn:constant}
\end{align*}
\item $\boldsymbol{l}$ ensure the disclosure policy satisfies the inferential privacy constraint with:
\begin{align*}
&\frac{\lij{2}{0}}{\lij{2}{1}} = \frac{\lij{3}{1}}{\lij{3}{0}} = \ipratio,
\numberthis
\label{eqn:fix_ratio}\\
& {\lij{1}{0}} \in \brb{\ipratioinv\lij{1}{1}, \ipratio\lij{1}{1}}, \quad {\lij{4}{1}} \in \brb{\ipratioinv\lij{4}{0}, \ipratio\lij{4}{0}},
\numberthis
\label{eqn:range_ratio}
\end{align*}
\item $\boldsymbol{l}$ are valid probabilities:
\begin{align*}
& \sum_{i\in[4]}\lij{i}{j} = 1, \quad \forall j\in\brc{0,1},
\numberthis
\label{eqn:sum}\\
& \lij{i}{j} \geq 0, \quad \forall i\in[4], j\in\brc{0,1}.
\numberthis
\label{eqn:non_negative}
\end{align*}
\end{itemize}
}
\end{lemma}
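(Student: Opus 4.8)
\textbf{Proof plan for Lemma~\ref{lemma:feasibility}.}

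The plan is to start from an arbitrary $\epsilon$-inferentially-private Blackwell optimal information structure $\probhatof{Y,S,\hat{T}}$ with binary secret and apply the geometric characterization already established, then compress it down to at most four output signals and read off the claimed constraints. First I would invoke \cref{lem:poly_support} (with $|\mathcal{S}|=2$, so $|\mathcal{T}| \le 7$) to pass to an equivalent structure with finitely many outputs, and then use \cref{lem:perfect} to conclude every cell is white or yellow. Ordering the columns by decreasing $\probof{Y=1|T=t}$, there is at most one all-yellow column (posterior $1$) and at most one all-white column (posterior $0$); call these $t_1$ and $t_{|\tset|}$ respectively. For these two extreme columns, the $\probof{Y|T}$ value is fixed, and merging all columns of posterior $1$ into $t_1$ and all of posterior $0$ into $t_{|\tset|}$ (an equivalent transformation) collapses them to single columns. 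For the all-yellow column $t_1$, every cell has $\probof{Y=1|S=s,T=t_1}=1$, and since $s_0$ is the secret with larger $\qs$, a short computation using $\probof{Y=1|S=s_0} = \sum_{t:\,\text{cell yellow}} \probof{T=t|S=s_0}$ together with the upper-left property of $\mathcal{A}$ forces $\lij{1}{1} = \qsi{1}$: the row-$s_1$ mass on fully-revealing-$Y{=}1$ columns equals $\qsi{1}$, and all of it must sit in $t_1$. Symmetrically $\lij{4}{0} = 1-\qsi{0}$ for the all-white column. This gives \eqref{eqn:constant}.

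Next I would handle the "intermediate" columns, i.e., $\hat{t}\in\tsetinter$. By \cref{lem:binding}, each such column has exactly two cell widths in ratio $e^\epsilon$, so with a binary secret each column is described by which row is wide; by \cref{thm:characterization} the region $\mathcal{B}$ ($\mathcal{A}$-cells that are wide) is $\tsetinter$-upper-left and $\mathcal{C}$ (non-$\mathcal{A}$ cells that are wide) is $\tsetinter$-lower-right. With only two rows, this means: among intermediate columns, the yellow-and-wide ones are exactly those where row $s_0$ is yellow and wide — but for a yellow column with binary secret both rows are yellow, so "$\mathcal{B}$ upper-left" forces that once row $s_1$ stops being wide it never becomes wide again as we move right. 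Combined with the lower-right property of $\mathcal{C}$, the intermediate columns partition (after merging equivalent ones) into at most two groups: one all-yellow-type column where $s_0$ is wide (call it $t_2$, so $\lij{2}{0}/\lij{2}{1}=e^\epsilon$) and one all-white-type column where $s_1$ is wide (call it $t_3$, so $\lij{3}{1}/\lij{3}{0}=e^\epsilon$); the sub-case analysis of the upper-left/lower-right ordering of $(a_i,b_i,c_i)$ from the proof of \cref{lem:poly_support} specialized to $n=2$ yields exactly this. Merging within each group (equivalent transformation) collapses to the single columns $t_2,t_3$, giving \eqref{eqn:fix_ratio}; the two extreme columns $t_1,t_4$ are outside $\tsetinter$ so their width ratio is only constrained by the raw IP constraint of \cref{def:ip}, giving \eqref{eqn:range_ratio}. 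Finally, \eqref{eqn:sum} is just $\sum_t \probof{T=t|S=s_j}=1$ and \eqref{eqn:non_negative} is non-negativity of probabilities, both automatic, and $|\tset|\le 4$ follows since we have collapsed to the columns $t_1,t_2,t_3,t_4$.

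The main obstacle I expect is the case analysis establishing that, for binary secrets, \cref{thm:characterization} really does force at most two intermediate columns with the specific "which row is wide" pattern claimed — i.e., that one cannot have a more complicated alternating or mixed arrangement of wide/narrow among the intermediate columns. This requires carefully combining three facts simultaneously: $\mathcal{A}$ being $\tset$-upper-left (which controls yellow vs.\ white and forces the yellow intermediate columns to precede the white ones), $\mathcal{B}$ being $\tsetinter$-upper-left, and $\mathcal{C}$ being $\tsetinter$-lower-right, and checking that in the two-row case these leave only the one degree of freedom per side encoded by $t_2$ and $t_3$. One subtlety to be careful about: a column could a priori be all-yellow yet intermediate-type in the sense that its two cell widths differ (ratio $e^\epsilon$) — such a column is exactly the $t_2$ slot — and likewise an all-white column with unequal widths is the $t_3$ slot; I need to make sure the merging step correctly consolidates all yellow-unequal columns into one and all white-unequal columns into one, using that they share the same $\probof{Y|T}$ value ($1$ and $0$ respectively) and the same $\probof{S|T}$ profile (forced by the binding ratio $e^\epsilon$ and the fixed wide-row). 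The remaining steps are routine bookkeeping with the definitions of equivalent transformation and conditional probabilities.
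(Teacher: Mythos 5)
Your overall skeleton---pass to a finite equivalent structure via \cref{lem:poly_support}, apply \cref{lem:perfect}, \cref{lem:binding} and \cref{thm:characterization}, pin down the two extreme columns, and merge the remaining columns into at most two groups according to which row is wide---is the same route the paper takes. But your classification of the intermediate columns is wrong, and that is where the real content of the lemma sits. In the paper's argument, after merging by the value of $\probof{Y=1|T}$ there is exactly one column with posterior $1$ and one with posterior $0$; every other column has posterior in $(0,1)$, and since all cells are $0/1$ (\cref{lem:perfect}) and region $\mathcal{A}$ is upper-left, each such column must have its $s_0$-cell yellow and its $s_1$-cell white. This mixed structure of the intermediate columns is what forces $\lij{1}{1}=\qsi{1}$ and $\lij{4}{0}=1-\qsi{0}$ (the only yellow cell in row $s_1$ lies in column $t_1$, the only white cell in row $s_0$ in column $t_4$), and it is on these mixed columns that \cref{lem:binding} gives the exact ratio $\ipratio$, with $\mathcal{B}$ upper-left and $\mathcal{C}$ lower-right placing the $s_0$-wide columns before the $s_1$-wide ones, so that they merge into $t_2$ and $t_3$.

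Your proposal instead describes $t_2$ as an ``all-yellow-type'' column and $t_3$ as an ``all-white-type'' column, distinguished from $t_1,t_4$ only by unequal widths, and your ``subtlety'' paragraph commits to this reading: you say the columns merged into $t_2$ and $t_3$ share the $\probof{Y|T}$ values $1$ and $0$ respectively. That cannot be right. A column with both cells yellow has $\probof{Y=1|T}=1$, hence is not in $\tsetinter$, so \cref{lem:binding} does not apply to it and its width ratio need not equal $\ipratio$ (undermining \cref{eqn:fix_ratio} for your $t_2$); moreover it must be consolidated with $t_1$, not kept as a separate $t_2$, and in the true optimal structure $\qti{2},\qti{3}\in(0,1)$. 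Your plan is also internally inconsistent: the step where you deduce $\lij{1}{1}=\qsi{1}$ (``all the row-$s_1$ yellow mass sits in $t_1$'') fails if a yellow $(s_1,t_2)$ cell is allowed, since then $\lij{1}{1}+\lij{2}{1}=\qsi{1}$. The missing ingredient---exactly the step the paper supplies---is to show that in the binary-secret case every column other than the unique posterior-$1$ and posterior-$0$ columns satisfies $\probof{Y=1|S=s_0,T=t}=1$ and $\probof{Y=1|S=s_1,T=t}=0$, which follows from the upper-left property of $\mathcal{A}$ together with the uniqueness of the extreme columns; once that is in place, your merging and the remaining bookkeeping go through essentially as you describe.
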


The proof is shown in \cref{sec:proof_feasibility}. 
The geometric characterization of the optimal structure with binary secret is illustrated in \cref{fig:binary_secret}. %
For the cell that corresponds to the secret $s_i$ and the output $t_j$, $\forall i\in\brc{0,1}, j\in\brb{4}$, the vertical length represents $\probof{S=s_i}$ and the horizontal length represents $\lij{j}{i}=\probof{T=t_j|S=s_i}$. %

Recall that our geometric characterization of Blackwell-optimal solutions implies that, 
as in the case of perfect inferential privacy, the information structure must satisfy the property that {each cell is either dark yellow or white, i.e., $\probof{Y|S,T}\in\brc{0,1}$, and the dark yellow cells are in the upper left region of the grid.}
Further, \cref{eqn:constant} also tells us that the bottom left and top right cells have a width that is fixed and independent of the IP constraint.
Note that, as with the perfect privacy result of \citet{he2022private}, the outermost signals $t_1$ and $t_4$ fully reveal the state $Y$. 
However, recall that the perfect privacy case required at most three output signals. 
\Cref{lemma:feasibility} implies that four output signals are sufficient; 
we show in \cref{sec:mechanism-binary} that in some cases, four output signals are also necessary.
Moreover, while the inner columns for $t_2$ and $t_3$ have a cell width ratio  that is exactly $e^\varepsilon$ (thus meeting the IP constraint with equality), the first and fourth columns have a ratio of cell widths that does not necessarily meet the IP constraint with equality.
Hence, the main difficulty of finding a Blackwell-optimal mechanism is to identify the exact width and cell width ratio for the first and fourth columns. 
We tackle this challenge next.

\subsection{Mechanism design}
\label{sec:mechanism-binary}

{We split this section into {two} steps. %
{First}, we present our main result for the binary case, which is a Blackwell-optimal information structure under an $\varepsilon$-IP constraint. 
Then we demonstrate how this information structure leads to an information disclosure mechanism, and analyze its utility compared to an optimal mechanism under a perfect privacy constraint. 
Notably, we show that under an $\epsilon$-IP privacy constraint for $\varepsilon>0$,  utility can be arbitrarily better under some convex utility functions than when $\varepsilon=0$.}

\subsubsection{{Step 1: Determine the Blackwell-Optimal Solution}}
{Recall that our geometric characterization in \cref{lemma:feasibility} already fixed the size of the bottom left and top right cells of any Blackwell-optimal structure (i.e., the bottom orange cell and the top black cell in \cref{fig:binary_secret}). 
Next, we will specify the sizes of the {other cells.} %
Note that in order to maximize utility, we want {the top left and bottom right cells, i.e., $\lij{1}{0}$ and $\lij{4}{1}$,} to be as wide as possible, since they deterministically reveal $Y$.
In principle, these {two} cell widths could depend on each other. 
For instance, increasing $\lij{1}{0}$ might force $\lij{4}{1}$ to shrink, for the solution to remain feasible. 
However, in this subsection, we show that this is not the case: { $\lij{1}{0}$ and $\lij{4}{1}$ can be maximized simultaneously.}

{We first show formally that in order to maximize expected utility, we want $\lij{1}{0}$ and $\lij{4}{1}$ to be as large as possible.}
{ 
 We define the feasible dominant point as follows.
\begin{definition}[Feasible dominant point]
$\boldsymbol{l}$ is a \emph{feasible dominant point} if it satisfies the feasibility constraints in \cref{lemma:feasibility} and if for any point $\boldsymbol{l}'$ that satisfies the feasibility constraints, it holds that $\lij{1}{0}\geq {l'}_1^{(0)}$ and $\lij{4}{1}\geq {l'}_4^{(1)}$.
\end{definition}
The following lemma shows that there exists a unique feasible dominant point, and it maximizes expected utility under any convex utility function.
\begin{lemma}
\label{lemma:dominate}
There exists a unique feasible dominant point. For any convex utility function $u$, the objective function $\mathbb{E}_t[u(\qt)]$ is maximized if and only if $\boldsymbol{l}$ is the feasible dominant point. 
\end{lemma}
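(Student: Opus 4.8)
The plan is to use \cref{lemma:feasibility} to collapse the search to a two‑dimensional polygon in the two ``free'' cell widths, observe that the objective is an affine function of those widths with nonpositive coefficients, and then locate the dominant point via a linear change of coordinates plus a sharp convexity estimate that ties the shape of $u$ to $\epsilon$. Concretely, by \cref{lemma:feasibility} I restrict to the canonical form $\tset=\{t_1,\dots,t_4\}$ (ordered by decreasing $\qt$) obeying \cref{eqn:constant,eqn:fix_ratio,eqn:range_ratio,eqn:sum,eqn:non_negative}. Taking $a=\lij{2}{1}\ge 0$ and $b=\lij{3}{0}\ge 0$ as the free parameters, everything else is pinned down: $\lij{2}{0}=\ipratio a$, $\lij{3}{1}=\ipratio b$, $\lij{1}{0}=\qsi{0}-(\ipratio a+b)$, $\lij{4}{1}=(1-\qsi{1})-(a+\ipratio b)$, so the feasible set is the polygon $\calR$ in $(a,b)$ cut out by $a,b\ge 0$, $\lij{1}{0}\in[\ipratioinv\qsi{1},\ipratio\qsi{1}]$, and $\lij{4}{1}\in[\ipratioinv(1-\qsi{0}),\ipratio(1-\qsi{0})]$. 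Writing $p_j=\probof{S=s_j}$, the upper‑left structure forces $\qti{1}=1$, $\qti{4}=0$, and makes the interior posteriors the $\epsilon$‑determined constants $\qti{2}=\ipratio p_0/(\ipratio p_0+p_1)$, $\qti{3}=p_0/(p_0+\ipratio p_1)$, independent of $(a,b)$. Substituting into $\mathbb{E}_t[u(\qt)]=\sum_i\probof{T=t_i}u(\qti{i})$ and collecting terms, a short computation yields $\mathbb{E}_t[u(\qt)]=C+\gamma_a a+\gamma_b b$ with $\gamma_a=-(\ipratio p_0+p_1)\Delta_2$ and $\gamma_b=-(p_0+\ipratio p_1)\Delta_3$, where $\Delta_j=\qti{j}u(1)+(1-\qti{j})u(0)-u(\qti{j})\ge 0$ is the gap between $u(\qti{j})$ and the chord of $u$ over $[0,1]$. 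In particular $\gamma_a,\gamma_b\le 0$, making precise the claim that we want $\lij{1}{0}$ and $\lij{4}{1}$ as large as possible.

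\emph{Existence and uniqueness of the dominant point.} I would change variables to $P=\ipratio a+b$, $Q=a+\ipratio b$, so that $\lij{1}{0}=\qsi{0}-P$ and $\lij{4}{1}=1-\qsi{1}-Q$, whence being the dominant point means $(P,Q)$ is componentwise $\le$ every feasible $(P,Q)$. In these coordinates $\calR$ becomes a box $P\in[P_{\min},P_{\max}]$, $Q\in[Q_{\min},Q_{\max}]$ (with $P_{\min}=\max(0,\qsi{0}-\ipratio\qsi{1})$, $P_{\max}=\qsi{0}-\ipratioinv\qsi{1}$, and analogously for $Q$) intersected with the cone $\ipratioinv P\le Q\le \ipratio P$ encoding $a,b\ge 0$. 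The explicit point $(P^\star,Q^\star)=\bigl(\max(P_{\min},\ipratioinv Q_{\min}),\ \max(Q_{\min},\ipratioinv P_{\min})\bigr)$ lies in this region (after expanding the four bounds, $P^\star\le P_{\max}$ and $Q^\star\le Q_{\max}$ reduce to elementary facts such as $\ipratioinv\le 1$ and $\qsi{0}\ge\qsi{1}$) and is componentwise $\le$ every feasible $(P,Q)$, because any feasible point satisfies $P\ge P_{\min}$ and $P\ge\ipratioinv Q\ge\ipratioinv Q_{\min}$, and symmetrically for $Q$. Mapping back, $(P^\star,Q^\star)$ corresponds to a single $\boldsymbol{l}^\star$, which therefore maximizes $\lij{1}{0}$ and $\lij{4}{1}$ simultaneously and is dominated by no other feasible structure; this is the unique feasible dominant point.

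\emph{Optimality for every convex $u$, and the main obstacle.} It remains to show $\boldsymbol{l}^\star$ maximizes $\mathbb{E}_t[u(\qt)]$ for \emph{all} convex $u$. This is the delicate point: $(P^\star,Q^\star)$ in general does not minimize $a$ and $b$ individually, so a priori the affine objective $C+\gamma_a a+\gamma_b b$ could be maximized at a different vertex of $\calR$ depending on the magnitudes $|\gamma_a|,|\gamma_b|$. The resolution is that the privacy‑forced values of $\qti{2},\qti{3}$ constrain the ratio $|\gamma_a|/|\gamma_b|$. Rewriting the objective in the new coordinates gives $\mathbb{E}_t[u(\qt)]=C+\tilde\gamma_P P+\tilde\gamma_Q Q$ with $\tilde\gamma_P=(\ipratio\gamma_a-\gamma_b)/(\ipratiosq-1)$ and $\tilde\gamma_Q=(\ipratio\gamma_b-\gamma_a)/(\ipratiosq-1)$, and $\tilde\gamma_P,\tilde\gamma_Q\le 0$ holds once $|\gamma_a|\le\ipratio|\gamma_b|$ and $|\gamma_b|\le\ipratio|\gamma_a|$. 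To obtain these two inequalities I would use the elementary fact that a nonnegative concave $g$ with $g(0)=0$ has $g(x)/x$ non‑increasing (hence, applied to $x\mapsto g(1-x)$, that $g(x)/(1-x)$ is non‑decreasing). Applying this to $g=\ell-u$, where $\ell$ is the chord of $u$ over $[0,1]$ (so $g$ is nonnegative and concave with $g(0)=g(1)=0$ and $g(\qti{j})=\Delta_j$), gives $(1-\qti{2})/(1-\qti{3})\le\Delta_2/\Delta_3\le\qti{2}/\qti{3}$, and plugging in the $\epsilon$‑values of $\qti{2},\qti{3}$ makes both ends collapse to exactly $|\gamma_a|\le\ipratio|\gamma_b|$ and $|\gamma_b|\le\ipratio|\gamma_a|$. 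Hence $\mathbb{E}_t[u(\qt)]$ is non‑increasing in $P$ and in $Q$ over $\calR$, so it is maximized at $(P^\star,Q^\star)=\boldsymbol{l}^\star$; when $u$ is strictly convex the gaps $\Delta_2,\Delta_3$ and the ratio inequalities are strict, so $\tilde\gamma_P,\tilde\gamma_Q<0$ and $\boldsymbol{l}^\star$ is the unique maximizer, which yields the ``only if'' direction (any $\boldsymbol{l}$ maximizing $\mathbb{E}_t[u(\qt)]$ for every convex $u$ must be this unique maximizer). The step I expect to require the most care is exactly pinning down $|\gamma_a|\le\ipratio|\gamma_b|$, $|\gamma_b|\le\ipratio|\gamma_a|$ through the convexity estimate and the privacy‑induced formulas for $\qti{2},\qti{3}$.
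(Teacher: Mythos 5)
Your proposal is correct, and while it rests on the same core insight as the paper's proof---that the interior posteriors $\qti{2}=\ipratio p_0/(\ipratio p_0+p_1)$ and $\qti{3}=p_0/(p_0+\ipratio p_1)$ are constants fixed by $\epsilon$ and the prior, so convexity of $u$ controls the trade-off between the two revealing columns---your execution is genuinely different. The paper establishes existence of the dominant point by the explicit four-case computation of the simultaneous maxima of $\lij{1}{0}$ and $\lij{4}{1}$ (the same computation that later yields \cref{lemma:l_value_detailed}), and establishes optimality by two hand-crafted Jensen inequalities showing that decreasing $\lij{4}{1}$ with $\lij{1}{0}$ fixed (and vice versa) weakly decreases the objective, i.e., a coordinate-wise monotonicity/"necessary condition" argument. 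You instead (i) prove existence and uniqueness abstractly, via the change of variables $P=\ipratio a+b$, $Q=a+\ipratio b$ in which the feasible set is a box intersected with the cone $\ipratioinv P\le Q\le\ipratio P$ and the componentwise minimum $\bra{P^\star,Q^\star}$ is exhibited directly (I checked that $P^\star\le P_{\max}$, $Q^\star\le Q_{\max}$ indeed reduce to $\qsi{0}\ge\qsi{1}$ and $\ipratioinv\le 1$); and (ii) prove optimality globally by showing the objective is affine in $\bra{P,Q}$ with nonpositive coefficients, where the sign conditions $|\gamma_a|\le\ipratio|\gamma_b|$, $|\gamma_b|\le\ipratio|\gamma_a|$ collapse, as you verified, to the chord-gap monotonicity $\Delta_2/\Delta_3\le\qti{2}/\qti{3}$ and $\Delta_3/\Delta_2\le(1-\qti{3})/(1-\qti{2})$, which is the same convexity content as the paper's Jensen identities in different clothing. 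Your packaging buys two things: no case analysis is needed for existence, and the global affine-with-nonpositive-coefficients argument sidesteps the feasibility-of-the-path issue implicit in the paper's coordinate-descent reasoning; what it does not produce is the closed-form values of the dominant point, which the paper's case analysis supplies for \cref{lemma:l_value_detailed} (but that is not this lemma's burden). Two minor points to tidy: state the chord-gap inequalities multiplicatively (e.g., $\qti{3}\Delta_2\le\qti{2}\Delta_3$) so the degenerate case $\Delta_3=0$ (affine $u$) needs no division, and note that the literal per-$u$ "only if" fails for non-strictly-convex $u$ (the objective is then constant along some directions), so your reading---the dominant point maximizes for every convex $u$, and a strictly convex witness shows no other point can---is the right one; the paper's own proof is no more precise on this point.
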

The proof is shown in \cref{app:proof_dominant}.
Combining with \cref{thm:blackwell}, we know that the inferentially-private  Blackwell optimal information structure is determined by the feasible dominant point.
We next provide the unique  inferentially-private  Blackwell optimal information structure. {This structure is unique up to transformations of the structure that merge equivalent output signals.}}

\begin{theorem}
\label{lemma:l_value_detailed} 
Let $R_1 = \frac{\qsi{0}}{\qsi{1}}$ and $R_2 = \frac{1-\qsi{1}}{1-\qsi{0}}$.
Given the joint distribution $\probof{S, Y}$ and $\epsilon$, the following information structure is the unique $\epsilon$-inferentially-private  Blackwell-optimal information structure (unique up to equivalent transformations) {that universally maximizes 
the expected utility under any convex utility function}:
$\lset$ satisfies conditions in \cref{lemma:feasibility} , and
\vspace{2mm}

\begin{tabular}{|l|l|}
\hline
\textbf{When:}                                                                                             & \textbf{Then:}                                                                                                               \\ \hline
$R_1\leq \ipratio$, $R_2\leq \ipratio$                                                                     & $\lij{2}{1}=\lij{3}{1} = 0$                                                                                                  \\ \hline
$R_1\leq \ipratio, R_2> \ipratio$, or $\ R_1> \ipratio, R_2>\ipratio, \qsi{1}\geq \frac{1}{1+\ipratio}$    & $l_2^{(1)} = 0, \quad l_{3}^{(1)} = 1 - q_{s_1} - e^{\epsilon}(1-q_{s_0})$                                                   \\ \hline
$R_1> \ipratio, R_2\leq \ipratio$, or $\ R_1> \ipratio, R_2>\ipratio, \qsi{0}\leq \frac{1}{1+\ipratioinv}$ & $l_2^{(1)} = \ipratioinv q_{s_0}-q_{s_1}, \quad l_{3}^{(1)} = 0$                                                             \\ \hline
$R_1> \ipratio$, $R_2> \ipratio$, $\qsi{0}> \frac{1}{1+\ipratioinv}, \qsi{1}< \frac{1}{1+\ipratio}$        & $l_2^{(1)} = \frac{1}{e^{\epsilon}+1}-q_{s_1}, \quad l_{3}^{(1)} = e^{\epsilon}q_{s_0}-\frac{e^{2\epsilon}}{e^{\epsilon}+1}$ \\ \hline
\end{tabular}
\vspace{2mm}
\end{theorem}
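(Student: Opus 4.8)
The plan is to combine the two structural results already in hand. By \cref{lemma:feasibility}, every $\epsilon$-inferentially-private Blackwell optimal structure has an equivalent representation with $\brd{\tset}\le 4$ whose widths $\lset$ obey \cref{eqn:constant,eqn:fix_ratio,eqn:range_ratio,eqn:sum,eqn:non_negative}, and by \cref{lemma:dominate} the Blackwell optimal (equivalently, universally utility-maximizing) structure among these is exactly the unique \emph{feasible dominant point}. So the whole task reduces to computing that point; the claimed uniqueness up to equivalent transformations and the universal optimality under any convex $u$ are then inherited directly from \cref{lemma:dominate,thm:blackwell}.

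First I would cut the feasible set down to two free coordinates. Equations \cref{eqn:constant,eqn:fix_ratio,eqn:sum} force $\lij{1}{1}=\qsi{1}$, $\lij{4}{0}=1-\qsi{0}$, $\lij{2}{0}=\ipratio\lij{2}{1}$, $\lij{3}{0}=\ipratioinv\lij{3}{1}$, and then $\lij{1}{0}=\qsi{0}-\ipratio\lij{2}{1}-\ipratioinv\lij{3}{1}$ and $\lij{4}{1}=1-\qsi{1}-\lij{2}{1}-\lij{3}{1}$, so everything is a function of $(\lij{2}{1},\lij{3}{1})$. Since $\lij{1}{0}$ and $\lij{4}{1}$ are the two $Y$-revealing widths, being the feasible dominant point is equivalent to simultaneously minimizing $g_1:=\ipratio\lij{2}{1}+\ipratioinv\lij{3}{1}$ and $g_2:=\lij{2}{1}+\lij{3}{1}$ over the polygon cut out by $\lij{2}{1},\lij{3}{1}\ge0$, the two-sided range constraints \cref{eqn:range_ratio}, and \cref{eqn:non_negative}. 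Writing $B:=\qsi{0}-\ipratio\qsi{1}$ and $A:=1-\qsi{1}-\ipratio(1-\qsi{0})$, the constraint $\lij{1}{0}\le\ipratio\qsi{1}$ reads $g_1\ge B$ and is active exactly when $R_1>\ipratio$, while $\lij{4}{1}\le\ipratio(1-\qsi{0})$ reads $g_2\ge A$ and is active exactly when $R_2>\ipratio$; the opposite inequalities $\lij{1}{0}\ge\ipratioinv\qsi{1}$, $\lij{4}{1}\ge\ipratioinv(1-\qsi{0})$ and the remaining nonnegativities I will verify are slack at the optimum in every case.

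The core is then a four-way case analysis governed by the signs of $A$ and $B$ and, when both are positive, by where the two lines $g_1=B$ and $g_2=A$ cross (their intersection has nonnegative coordinates iff $\ipratioinv A\le B\le\ipratio A$). (i) If $R_1,R_2\le\ipratio$ then $A,B\le0$ and the unconstrained minimum $\lij{2}{1}=\lij{3}{1}=0$ is feasible and minimizes both $g_1,g_2$. (ii) If $g_2\ge A$ binds and the crossing point has negative first coordinate (i.e.\ $B\le\ipratioinv A$), then from $g_1\ge\ipratioinv g_2\ge\ipratioinv A$ the common minimizer is the vertex $\lij{2}{1}=0,\ \lij{3}{1}=A=1-\qsi{1}-\ipratio(1-\qsi{0})$; its feasibility reduces to $\lij{1}{0}=\qsi{0}-\ipratioinv A\le\ipratio\qsi{1}$, i.e.\ $B\le\ipratioinv A$, which an elementary computation (using $\tfrac{1-\ipratioinv}{\ipratio-\ipratioinv}=\tfrac{1}{1+\ipratio}$) shows is equivalent to $\qsi{1}\ge\tfrac{1}{1+\ipratio}$ and is automatic when $R_1\le\ipratio$, recovering the second row of the table. (iii) The mirror situation (crossing point with negative second coordinate, $B\ge\ipratio A$) gives $\lij{2}{1}=\ipratioinv\qsi{0}-\qsi{1},\ \lij{3}{1}=0$, feasible iff $\qsi{0}\le\tfrac{1}{1+\ipratioinv}$, which is automatic when $R_2\le\ipratio$, giving the third row. (iv) Otherwise $\ipratioinv A\le B\le\ipratio A$ with $A,B>0$, the two lines cross at a point with nonnegative coordinates, that point is feasible, and solving the $2\times2$ linear system and simplifying yields $\lij{2}{1}=\tfrac{1}{\ipratio+1}-\qsi{1}$, $\lij{3}{1}=\ipratio\qsi{0}-\tfrac{\ipratiosq}{\ipratio+1}$, whose nonnegativity is exactly $\qsi{1}<\tfrac{1}{1+\ipratio}$ and $\qsi{0}>\tfrac{1}{1+\ipratioinv}$. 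In each case one checks the candidate attains both $\min g_1$ and $\min g_2$---so by \cref{lemma:dominate} it is the dominant point---and that the four cases exhaust all parameters without overlap, in particular that $\qsi{1}\ge\tfrac{1}{1+\ipratio}$ and $\qsi{0}\le\tfrac{1}{1+\ipratioinv}$ are incompatible when $R_1>\ipratio$.

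I expect the main obstacle to be the bookkeeping rather than any conceptual leap: the table states the boundaries between cases as thresholds on $\qsi{0},\qsi{1}$ (e.g.\ $\qsi{1}\ge\tfrac{1}{1+\ipratio}$), not in the raw form $A\gtrless0,\ B\gtrless0$ in which the active constraints naturally appear, so the bulk of the work is the elementary-but-fiddly algebra: proving equivalences such as $B\le\ipratioinv A\Leftrightarrow\qsi{1}\ge\tfrac{1}{1+\ipratio}$, checking that each candidate satisfies the two-sided range constraints \cref{eqn:range_ratio} and all nonnegativities, confirming that the boundary cases are shared consistently between adjacent rows (so that strict vs.\ non-strict inequalities are immaterial), and keeping the sign of $\ipratio-\ipratioinv$ straight throughout.
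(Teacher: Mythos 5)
Your proposal is correct and takes essentially the same route as the paper: reduce to the unique feasible dominant point via \cref{lemma:feasibility,lemma:dominate}, then identify that point by a case analysis governed by $R_1,R_2$ versus $\ipratio$ and the thresholds $\frac{1}{1+\ipratio}$, $\frac{1}{1+\ipratioinv}$. The only difference is a change of coordinates: the paper parameterizes by $\bra{\lij{1}{0},\lij{4}{1}}$ and maximizes them directly (the case-by-case bounds appear in the proof of \cref{lemma:dominate}), whereas you work in $\bra{\lij{2}{1},\lij{3}{1}}$ and minimize $g_1,g_2$ — the same linear computation in dual form, with matching case boundaries and values.
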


The proof is shown in \cref{sec:proof_l_detailed}. As illustrated in \cref{fig:output}, {there are six regions of problem parameters, which depend on both the prior and $\varepsilon$, that determine  how many  (and which) signals we need.  }
When $R_1\leq \ipratio, R_2\leq \ipratio$, i.e., the original secret-state structure $\probof{S, Y}$ already satisfies the inferential privacy constraint, we can just release the actual state and therefore $T=\brc{t_1, t_4}$. {When $R_1\leq \ipratio, R_2> \ipratio$ or $R_1> \ipratio, R_2\leq \ipratio$, the original secret-state structure $\probof{S, Y}$ satisfies the inferential privacy constraint only when $Y=1$ or $Y=0$, and we need to introduce an additional signal to ensure the inferential privacy constraint is met. %
}%
When $R_1> \ipratio, R_2> \ipratio$, i.e., the original secret-state structure $\probof{S, Y}$ does not satisfy the inferential privacy constraint, the number of output signal required depends on the value of $\qsi{1}$ or $\qsi{0}$. When %
$\qsi{1}\geq \frac{1}{1+\ipratio}$ or $\qsi{0}\leq \frac{1}{1+\ipratioinv}$, only three signals are required. Otherwise, we need four output signals to achieve a Blackwell optimal structure under the inferential privacy constraint.

\vspace{-2mm}
\begin{figure}[htbp]
\centering
\includegraphics[width=0.48\linewidth]{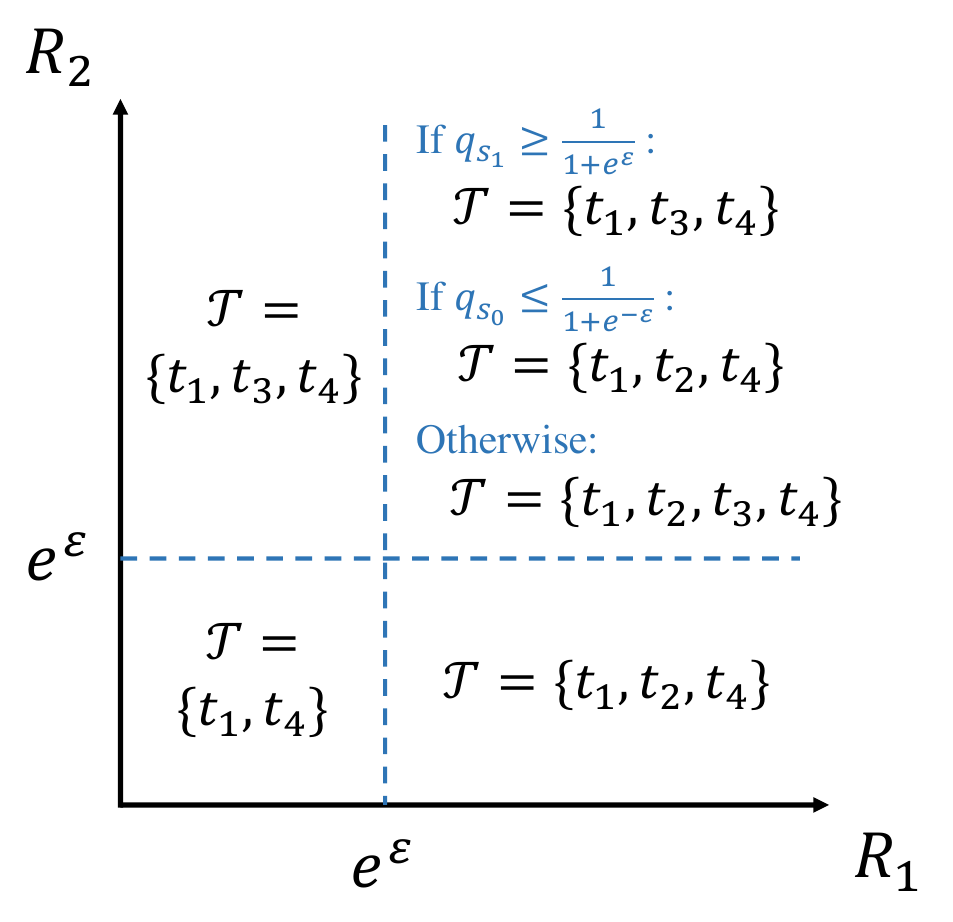}
\vspace{-2mm}
\caption{Output siginal set $\tset$ with different $R_1, R_2$. When $R_1\leq \ipratio, R_2\leq \ipratio$, $\probof{S, Y}$ already satisfies the inferential privacy constraint, and we can just release the actual state with two output signals. When $R_1\leq \ipratio, R_2> \ipratio$ or $R_1> \ipratio, R_2\leq \ipratio$, $\probof{S, Y}$ satisfies the inferential privacy constraint only when $Y=1$ or $Y=0$, and an additional signal is required. When $R_1> \ipratio, R_2> \ipratio$, the number of output signal required depends on the value of $\qsi{1}$ or $\qsi{0}$.}
\label{fig:output}
\end{figure}
\vspace{-2mm}

\subsubsection{Step 2: Determine the information disclosure mechanism}
Based on the inferentially-private  Blackwell optimal information structure in \cref{lemma:l_value_detailed}, and the fact that $\mathbb{P}(S,Y,T) = \mathbb{P}(Y|S,T)\cdot \mathbb{P}(T|S) \cdot \mathbb{P}(S)$ and $\mathbb{P}(T|S,Y) = \frac{\mathbb{P}(S,Y,T)}{\mathbb{P}(S,Y)}$, we can design the universally optimal mechanism, represented by $\probof{T|S,Y}$, as in \cref{thm:binary_opt_mech}.

\begin{corollary}
\label{thm:binary_opt_mech}
Given $\probof{S, Y}$ and $\epsilon$, the universally optimal $\varepsilon$-inferentially-private mechanism that maximizes the objective function $\mathbb{E}_t[u(q_t)]$ under any convex utility function $u$ is unique up to equivalent transformations: 
\begin{align*}
\P\bra{T=t_1|S=s_1, Y=1} &= 1, \\
\P\bra{T=t_i|S=s_0, Y=1} &= \frac{\lij{i}{0}}{\qsi{0}}, \quad {\forall i\in\brc{1,2,3}},\\
\P\bra{T=t_4|S=s_0, Y=0} &= 1,\\ 
\P\bra{T=t_i|S=s_1, Y=0} &= \frac{\lij{i}{1}}{1-\qsi{1}}, \quad {\forall i\in\brc{2,3,4}},
\end{align*} 
where the values of $\lset$ are shown in \cref{lemma:l_value_detailed}.
\end{corollary}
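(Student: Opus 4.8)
\textbf{Proof proposal for Corollary~\ref{thm:binary_opt_mech}.}

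The plan is to derive the mechanism $\probof{T\mid S,Y}$ directly from the Blackwell-optimal information structure of Theorem~\ref{lemma:l_value_detailed} by unwinding the factorization $\probof{S,Y,T}=\probof{Y\mid S,T}\cdot\probof{T\mid S}\cdot\probof{S}$ together with Bayes' rule $\probof{T\mid S,Y}=\probof{S,Y,T}/\probof{S,Y}$. First I would recall from the geometric characterization (Lemma~\ref{lem:perfect} and Lemma~\ref{thm:characterization}, specialized to a binary secret as in Figure~\ref{fig:binary_secret}) exactly which cells are yellow ($\probof{Y=1\mid S=s,T=t}=1$) and which are white ($\probof{Y=1\mid S=s,T=t}=0$): column $t_1$ is all yellow, column $t_4$ is all white, and the upper-left property forces the $s_0$ cell of $t_2,t_3$ to be yellow and the $s_1$ cell of $t_2,t_3$ to be white. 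Combined with $\lij{i}{j}=\probof{T=t_i\mid S=s_j}$, this gives $\probof{S=s_j,Y,T=t_i}$ for every cell.

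Next I would compute each conditional probability. For the row $S=s_1$, the state is $Y=1$ only in column $t_1$, so $\probof{S=s_1,Y=1,T=t_1}=\probof{S=s_1}\cdot\lij{1}{1}=\probof{S=s_1}\cdot\qsi{1}=\probof{S=s_1,Y=1}$, giving $\probof{T=t_1\mid S=s_1,Y=1}=1$. The remaining mass in row $s_1$, namely columns $t_2,t_3,t_4$, is all white, so $\probof{S=s_1,Y=0,T=t_i}=\probof{S=s_1}\cdot\lij{i}{1}$ for $i\in\{2,3,4\}$, and dividing by $\probof{S=s_1,Y=0}=\probof{S=s_1}(1-\qsi{1})$ yields $\probof{T=t_i\mid S=s_1,Y=0}=\lij{i}{1}/(1-\qsi{1})$. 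Symmetrically, for row $S=s_0$: columns $t_1,t_2,t_3$ are yellow, so $\probof{S=s_0,Y=1,T=t_i}=\probof{S=s_0}\cdot\lij{i}{0}$ for $i\in\{1,2,3\}$, and dividing by $\probof{S=s_0,Y=1}=\probof{S=s_0}\qsi{0}$ gives $\probof{T=t_i\mid S=s_0,Y=1}=\lij{i}{0}/\qsi{0}$; column $t_4$ is white with $\lij{4}{0}=1-\qsi{0}$, so $\probof{S=s_0,Y=0,T=t_4}=\probof{S=s_0}(1-\qsi{0})=\probof{S=s_0,Y=0}$, giving $\probof{T=t_4\mid S=s_0,Y=0}=1$. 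That matches every line of the claimed mechanism. I would also check the normalization: $\sum_{i\in\{1,2,3\}}\lij{i}{0}/\qsi{0}=(1-\lij{4}{0})/\qsi{0}=\qsi{0}/\qsi{0}=1$ using \eqref{eqn:sum} and \eqref{eqn:constant}, and similarly $\sum_{i\in\{2,3,4\}}\lij{i}{1}/(1-\qsi{1})=(1-\lij{1}{1})/(1-\qsi{1})=1$; these confirm the displayed conditional distributions are genuine probability distributions.

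Finally, optimality and uniqueness transfer for free: by Lemma~\ref{lemma:dominate} the information structure $\probof{S,Y,T}$ of Theorem~\ref{lemma:l_value_detailed} is the unique (up to equivalent transformations) maximizer of $\mathbb{E}_t[u(\qt)]$ over all $\varepsilon$-inferentially-private structures for every convex $u$, and Theorem~\ref{thm:blackwell} guarantees this is a universal statement over convex utilities. Since the map between information structures $\probof{S,Y,T}$ and mechanisms $\probof{T\mid S,Y}$ is a bijection once the known prior $\probof{S,Y}$ is fixed, the mechanism inherits both the universal optimality and the uniqueness-up-to-equivalent-transformations property. There is essentially no obstacle here: the whole corollary is a mechanical translation, and the only thing to be careful about is handling the degenerate columns (e.g., when $\lij{2}{1}=\lij{3}{1}=0$ or $\lij{1}{0}$ attains a boundary of its interval in \eqref{eqn:range_ratio}), where some conditional probabilities are $0$ and the corresponding signal simply drops out — consistent with the equivalent-transformation caveat and with the varying size of $\tset$ described after Theorem~\ref{lemma:l_value_detailed}.
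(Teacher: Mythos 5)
Your proposal is correct and follows exactly the route the paper intends: the corollary is obtained by translating the Blackwell-optimal structure of \cref{lemma:l_value_detailed} into a mechanism via $\mathbb{P}(S,Y,T)=\mathbb{P}(Y|S,T)\,\mathbb{P}(T|S)\,\mathbb{P}(S)$ and $\mathbb{P}(T|S,Y)=\mathbb{P}(S,Y,T)/\mathbb{P}(S,Y)$, with optimality and uniqueness inherited from \cref{lemma:dominate} and \cref{lemma:l_value_detailed} since the structure-to-mechanism map is a bijection once $\probof{S,Y}$ is fixed. Your explicit cell-by-cell computation and normalization checks simply spell out what the paper leaves implicit, so there is nothing to add.
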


The universally optimal mechanism is unique given $\probof{S,Y}$ and $\epsilon$, and fully reveals 
the state when $S=s_1, Y=1$ or $S=s_0, Y=0$.

\subsubsection{Utility gains under inferential privacy}
\label{sec:utility-gains}
In general, relaxing perfect privacy to $\epsilon$-IP can lead to significant utility gains. 
We can show that under an inferential privacy constraint of $\epsilon>0$, there exists a Lipschitz-continuous, convex utility function such that the maximal expected utility is arbitrarily larger than it would have been under perfect privacy constraints, i.e., with IP level $\epsilon=0$.

\begin{proposition}
\label{lemma:arbitrary_gap}
Denote the maximal achievable expected utility under perfect privacy constraint as $U_0$, and the maximal achievable expected utility under inferential privacy constraint $\epsilon>0$ as $U_{\epsilon}$. For any $\epsilon>0$, $\Delta\in\mathbb{R}$, there exists a joint distribution $\mathbb{P}\bra{S,Y}$ and a $L$-Lipschitz convex utility function $u$, where {$L\leq3\Delta\bra{1+\frac{2}{e^{\epsilon}-1}}$}, such that $U_{\epsilon}-U_0 \geq \Delta$. 
\end{proposition}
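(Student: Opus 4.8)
The plan is to construct an explicit family of instances $(\mathbb{P}(S,Y), u)$, parametrized by a small quantity that we push to zero, on which the $\epsilon$-IP-optimal mechanism extracts utility at least $\Delta$ more than any $0$-IP mechanism. The natural lever is the top-left / bottom-right cells $\lij{1}{0}$ and $\lij{4}{1}$: under $0$-IP these have the same width as the cells stacked with them, so $\lij{1}{0}=\lij{1}{1}=q_{s_1}$ and $\lij{4}{1}=\lij{4}{0}=1-q_{s_0}$, whereas under $\epsilon$-IP they may be as wide as $e^\epsilon q_{s_1}$ and $e^\epsilon(1-q_{s_0})$ respectively (Eqn.~\eqref{eqn:range_ratio}). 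So I would choose a prior where $q_{s_1}$ is tiny (say $q_{s_1}=\eta$ with $\eta\to 0$) and $q_{s_0}$ is close to but below $1$, in the regime $R_1>e^\epsilon$, so that the $\epsilon$-IP-optimal structure (from Theorem~\ref{lemma:l_value_detailed}) puts positive mass on a fully-revealing signal $t_1$ that the perfect-privacy structure cannot. Concretely, under $0$-IP the signal $t_1$ (which has $q_{t_1}=1$) can carry at most probability $\mathbb{P}(T=t_1)$ proportional to $q_{s_1}=\eta$, while under $\epsilon$-IP it carries roughly $e^\epsilon\eta$ on row $s_1$ plus a chunk on row $s_0$; the difference in mass on the ``$Y=1$ revealed'' signal is bounded below by a constant times $(e^\epsilon-1)$, independent of $\eta$.

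Next I would pick the utility function to amplify exactly this gap. Since $u$ is convex and we are free to choose it (subject to Lipschitzness), take $u$ to be a sharp ``V'' or tent-like piecewise-linear function that is $0$ except near the posterior value $q=1$ (or more precisely, near the posterior $q_t$ that the $\epsilon$-optimal structure achieves but the $0$-optimal one does not), with slope of order $L$. Then $\mathbb{E}_t[u(q_t)]$ under the $0$-IP-optimal structure is $O(\eta)\cdot(\text{something bounded})$, which $\to 0$, while under the $\epsilon$-IP-optimal structure it is at least (mass gap)$\times$(height of $u$) $\geq c(e^\epsilon-1)\cdot$(height). Choosing the height of $u$ to be $\Theta(\Delta/(e^\epsilon-1))$ forces the Lipschitz constant to be $\Theta(\Delta/(e^\epsilon-1))$, matching the claimed bound $L\le 3\Delta(1+\tfrac{2}{e^\epsilon-1})$ up to the explicit constants; I would then reverse-engineer the exact prior parameters and the exact breakpoints/slope of $u$ to make the constant $3$ and the additive $1$ come out, and send $\eta\to 0$ so that $U_0\to 0$ and $U_\epsilon - U_0 \ge \Delta$.

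For the actual steps, in order: (1) fix $\epsilon,\Delta$; set $q_{s_1}=\eta$, $q_{s_0}=1-\eta$ (or a similarly small perturbation) and $\mathbb{P}(S=s_0)=\mathbb{P}(S=s_1)=\tfrac12$, placing us in the $R_1,R_2>e^\epsilon$ regime for small $\eta$; (2) invoke Theorem~\ref{lemma:l_value_detailed} to read off the $\epsilon$-IP-optimal $\lset$ and hence the posteriors $\{q_{t_i}\}$ and their probabilities $\{p_{t_i}\}$; (3) invoke Theorem~\ref{thm:optimal_ppi} to get the $0$-IP-optimal posteriors $\{q_{t}\}$ and probabilities, noting that its most informative signal has probability $O(\eta)$; (4) define the convex $L$-Lipschitz $u$ concentrated near the extra posterior value the $\epsilon$-structure reaches, compute $U_\epsilon$ and $U_0$ explicitly, and verify $U_\epsilon - U_0 \ge \Delta$ for $\eta$ small enough; (5) check the Lipschitz bound $L\le 3\Delta(1+\tfrac{2}{e^\epsilon-1})$ from the construction.

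The main obstacle I expect is step (2)–(4) bookkeeping: making sure the chosen prior lands cleanly in one branch of the four-case Theorem~\ref{lemma:l_value_detailed}, that the ``extra'' posterior value reached by the $\epsilon$-optimal structure is bounded away (in a controlled way) from all posteriors appearing in the $0$-optimal structure so that a Lipschitz tent $u$ can separate them, and that the constant in the Lipschitz bound works out to exactly $3(1+\tfrac{2}{e^\epsilon-1})$ rather than merely $O(1/(e^\epsilon-1))$ — the $\tfrac{2}{e^\epsilon-1}$ term strongly suggests the construction should make the ``slope length'' over which $u$ rises equal to something like $\tfrac{e^\epsilon-1}{e^\epsilon+1}$ times a base width, so I would track that ratio carefully. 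Everything else (Blackwell optimality giving $U_\epsilon = \mathbb{E}_t[u(q_t)]$ for the stated structure, and $U_0$ likewise) is immediate from Theorem~\ref{thm:blackwell} and the uniqueness results already proved.
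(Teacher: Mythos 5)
Your overall strategy (pick an explicit prior, read off the two Blackwell-optimal structures from \cref{thm:optimal_ppi} and \cref{lemma:l_value_detailed}, and choose a piecewise-linear convex $u$ that separates their posterior distributions) is the same family of argument as the paper's, but the specific lever you identify does not work, and this is a genuine gap. With $q_{s_1}=\eta$ and $q_{s_0}$ near $1$ you land in the fourth case of \cref{lemma:l_value_detailed}, where the IP constraint caps $l_1^{(0)} = e^{\epsilon}q_{s_1} = e^{\epsilon}\eta$ (and $l_4^{(1)} = e^{\epsilon}(1-q_{s_0})$). Hence the mass on the fully revealing signal $t_1$ is $\tfrac12(1+e^{\epsilon})\eta$ under $\epsilon$-IP versus $\eta$ under perfect privacy: the difference is $\Theta\bigl((e^{\epsilon}-1)\eta\bigr)$, \emph{not} bounded below by a constant independent of $\eta$ as you claim. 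Consequently, any convex $u$ that is ``concentrated near $q=1$'' (i.e.\ vanishing below $\tfrac{e^{\epsilon}}{1+e^{\epsilon}}$) gives $U_{\epsilon}-U_0 = O(L\eta)$, which cannot reach $\Delta$ under the stated Lipschitz cap as $\eta\to 0$; your construction as described collapses exactly in the limit you propose to take. The constant-order advantage in your instance sits elsewhere: the interior signals of the $\epsilon$-IP optimum have posteriors $\tfrac{e^{\epsilon}}{1+e^{\epsilon}}$ and $\tfrac{1}{1+e^{\epsilon}}$, each carrying mass $\approx \tfrac12$, whereas under perfect privacy essentially all mass sits at posterior $\tfrac12$. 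So the utility must be anchored at $\tfrac12$, e.g.\ the symmetric tent $u(q)=L\,|q-\tfrac12|$; with that choice one computes $U_{\epsilon}-U_0 = \tfrac{L(e^{\epsilon}-1)}{2(e^{\epsilon}+1)}$ (in fact exactly, for every admissible $\eta$), which equals $\tfrac{3\Delta}{2}\ge \Delta$ for $L = 3\Delta\bigl(1+\tfrac{2}{e^{\epsilon}-1}\bigr)$. Note also that a one-sided hinge $u(q)=L\max(0,q-\tfrac12)$ only yields $\tfrac{3\Delta}{4}$, so the two-sided tent is needed to meet the stated constant.

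For comparison, the paper's proof avoids both the limit and the four-signal bookkeeping entirely: it places the prior on the IP boundary, $q_{s_0}=\tfrac{e^{\epsilon}}{1+e^{\epsilon}}$, $q_{s_1}=\tfrac{1}{1+e^{\epsilon}}$ with $\probof{S=s_0}=\probof{S=s_1}=\tfrac12$, so that $R_1=R_2=e^{\epsilon}$ and the $\epsilon$-IP optimum is full revelation with $p_{t_1}=p_{t_4}=\tfrac12$, while the perfect-privacy optimum (\cref{thm:optimal_ppi}) forces mass $\tfrac{e^{\epsilon}-1}{1+e^{\epsilon}}$ onto posterior $\tfrac12$; with the same tent utility this gives $U_{\epsilon}=\tfrac{L}{2}$, $U_0=\tfrac{L}{1+e^{\epsilon}}$, and the desired gap in two lines. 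If you repair your argument by re-centering $u$ at $\tfrac12$ as above, your construction also works, but the ``mass of the revealing signal'' reasoning must be abandoned.
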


\begin{proof}
Consider a utility function 
$$
u(\qt)=
\begin{cases}
\frac{L}{2}-L\qt, & \qt\leq \frac{1}{2}\\
-\frac{L}{2} + L\qt, & \qt > \frac{1}{2}
\end{cases}
$$ 
and a joint distribution $\probof{S,Y}$ where $\probof{S=s_0} = \probof{S=s_1} = \frac{1}{2}$, $\qsi{0} = \frac{\ipratio}{1+\ipratio}, \qsi{1} = \frac{1}{1+\ipratio}$. From \cref{thm:binary_opt_mech}, we know that the optimal mechanism contains two output signals $t_1, t_4$, and satisfies $\pti{1} = \pti{4} = \frac{1}{2}, \qti{1} = 1, \qti{4} = 0$. Then we can get that $U_{\epsilon} = \frac{1}{2}u(0)+\frac{1}{2}u(1) = \frac{L}{2}$. Under the perfect privacy constraint, from \cref{thm:optimal_ppi}, we know that the optimal mechanism contains three outputs $t'_1, t'_2, t'_3$, where $\pti{1}=\pti{3} = \frac{1}{1+e^{\epsilon}}, \pti{2} = \frac{e^{\epsilon}-1}{1+e^{\epsilon}}$, $\qti{1}=1, \qti{3}=0, \qti{2}=\frac{1}{2}$. Then we can get that $U_0 = \frac{1}{1+e^{\epsilon}}\bra{u(0)+u(1)}+\frac{e^{\epsilon}-1}{1+e^{\epsilon}}u\bra{\frac{1}{2}}=\frac{L}{1+e^{\epsilon}}$. {Let $L=3\Delta\bra{1+\frac{2}{e^{\epsilon}-1}}$}, we have 
$U_\epsilon - U_0 = \frac{L\bra{e^{\epsilon}-1}}{2\bra{1+e^{\epsilon}}}\geq \Delta.$
\end{proof}

\begin{figure}[htbp]
    \centering
\subfigure[$\qsi{0}=0.75, \qsi{1}=0.25$]{\includegraphics[width=0.48\linewidth]{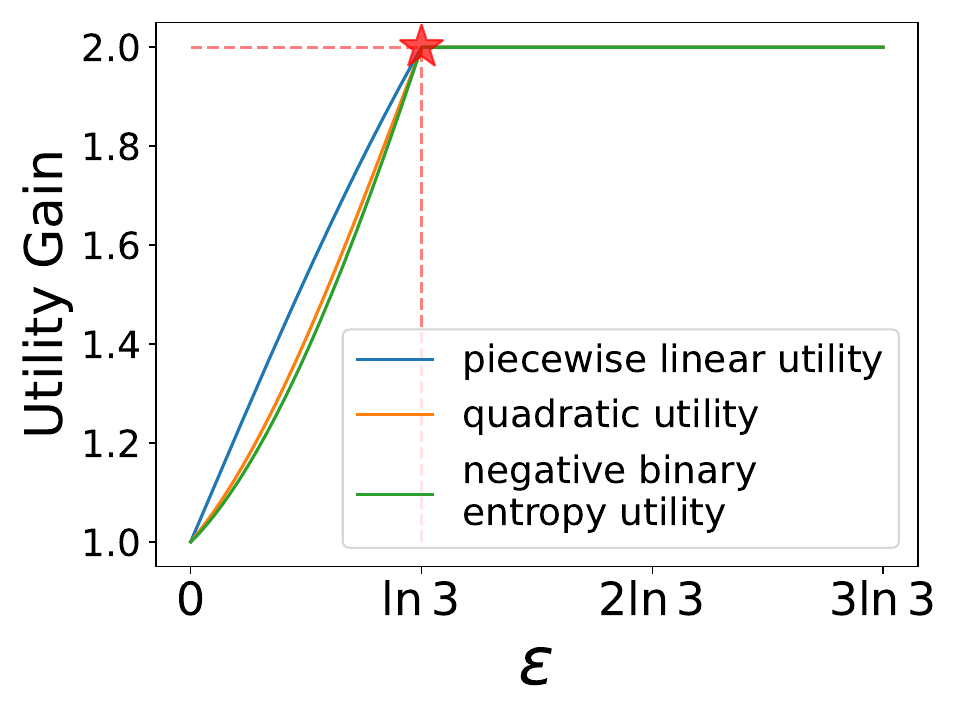}
\label{fig:utility_a}}
\subfigure[$\qsi{0}=0.9, \qsi{1}=0.1$]{\includegraphics[width=0.48\linewidth]{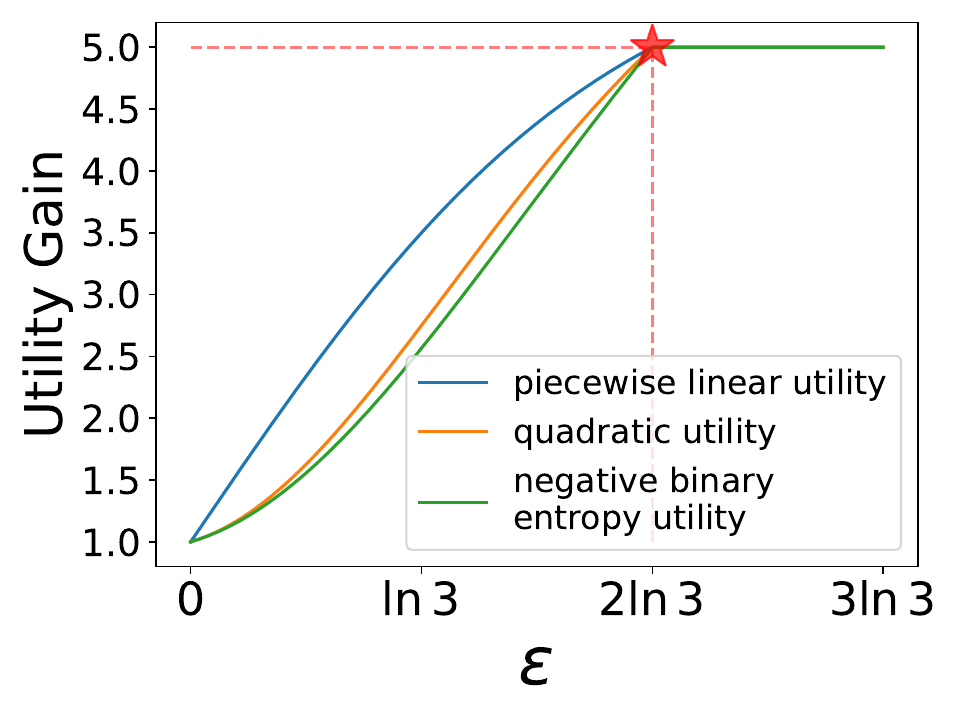}
\label{fig:utility_b}}
\caption{{Expected utility {gain under the optimal mechanism for}  three  utility functions and  inferential privacy constraints $\epsilon$. 
{By relaxing $\epsilon$ to $\ln 3$ or $2\ln 3$, the utility of the optimal mechanism at $\epsilon=0$ can be improved by up to $5\times$, depending on the data distribution.}
}
}
    \label{fig:utility}
\end{figure}

{We illustrate the utility gain of relaxing a perfect privacy constraint to an $\epsilon$-IP constraint for some examples in \cref{fig:utility}.
We define \emph{utility gain} as the ratio of the maximal expected utility under inferential privacy constraint $\epsilon$ to the utility under a perfect privacy constraint, i.e., $\epsilon=0$. We set $\probof{S=s_0}=\probof{S=s_1} = 0.5$ and vary $\qsi{0}, \qsi{1}$ in \cref{fig:utility_a,fig:utility_b}. We consider three convex utility functions $u(\qt)$ including piecewise linear $u(\qt)=\brd{2\qt-1}$, quadratic $u(\qt)=\bra{2\qt-1}^2$, and the (shifted) negative binary entropy function $u(\qt)=\qt\log \qt + \bra{1-\qt}\log \bra{1-\qt}+1$.
}

{Figure \ref{fig:utility} shows that when $q_s$ is imbalanced, relaxing inferential privacy to a level of $\epsilon=\ln 3$ (left) or $\epsilon=2\ln 3$ (right) can give utility gains of $2\times$ and $5\times$, respectively. 
In other words, even under relatively strong privacy parameters, a small relaxation in privacy can give a significant gain in utility. 
}

\section{Mechanism Design for $n>2$ Secrets}
\label{sec:general_secret}

In this section, we focus on general secrets with $\secretnum$ ($\secretnum>2$) possible values, i.e., $\sset=\brc{s_1, \ldots, s_\secretnum}$. 
The main result is to derive a set of programs that lead to an optimal solution for any given utility function in the downstream decision-making problem. {The design of the programs depends on our geometric characterization, which ensures that each program is linear.}

\smallskip

We first capture the inferentially-private Blackwell optimal information structure as follows.
Without loss of generality, we suppose $s_1, \dots, s_n \in \mathcal{S}$ are ordered in decreasing order of $\qs=\probof{Y=1|S=s}$. Let $\qt = \mathbb{P}(Y=1|T=t)$, $\pt = \mathbb{P}(T=t)$, and $\lij{i}{j} = \probof{T=t_i|S=s_i}$.
From \cref{lem:poly_support} and its proof, we know that any $\epsilon$-inferentially-private  Blackwell optimal information structure has an equivalent structure $\probof{S,Y,T}$ where the number of output signals is at most $3n+1$, and there is only one output, denoted as $t_1$ and $t_{n+1}$, that satisfies $\qti{1}=\probof{Y=1|T=t_1} = 1$ and $\qti{n+1}=\probof{Y=1|T=t_{n+1}} = 0$ respectively. 
Using a similar analysis to the proof of \cref{lemma:feasibility}, we can get that 
\begin{equation}
\begin{aligned}
\label{eqn:len_programming}
&\probof{T=t_1|S=s_{n}} = \probof{Y=1|S=s_n} = \qsi{n}, \\
&\probof{T=t_{n+1}|S=s_{1}} = \probof{Y=0|S=s_1} = 1-\qsi{1}.
\end{aligned}
\end{equation}

From \cref{lem:perfect}, we know that $\probof{Y=1|S=s, T=t}\in\brc{0,1},\ \forall s\in\sset, t\in\tset$. {Denote set $\tset_i=\brc{t_{i_k}}_{k\in \brb{K}}$, where $i\in\brc{2,\ldots, n}$ and the value of $K$ is specified later, %
such that $\forall t_{i_k}\in \tset_i$,} $\probof{Y=1|S=s_{n+2-i}, T=t_{i_k}}=0$ and $\probof{Y=1|S=s_{n+1-i}, T=t_{i_k}}=1$. From \cref{thm:characterization}, we know that the region $\mathcal{A}$ is in $\tset$-upper-left, and therefore, we can get that {$\forall i\in\brc{2,\ldots, n}, t_{i_k}\in \tset_i$},
\begin{align}
\probof{Y=1|S=s_{j}, T=t_{i_k}}=1, \quad \text{iff}\  j\in\brb{n+1-i}.
\label{eqn:purple_programming}
\end{align}
We set $K=3n$ to ensure that $\brc{t_1, t_{n+1}, t_{i_k}}_{i\in\brc{2,\ldots,n},\ k\in\brb{K}}$ can capture any $\epsilon$-inferentially-private  Blackwell optimal information structure with at most $3n+1$ output signals.

\smallskip

We then show that given a convex utility function $u$ and a joint distribution $P(S,Y)$, %
we can find an $\epsilon$-inferentially-private information structure $P(Y,S,T)$ that maximizes the expected utility by a set of {linear} programs with different instantiations of $\cij{i_k}{j}$. Roughly, $\cij{i_k}{j}$ represents whether a secret-output pair $\bra{s_j, t_{i_k}}$ is associated with a wide cell with red outline ($\cij{i_k}{j}=1$) or a narrow cell with blue outline ($\cij{i_k}{j}=\ipratio$) in \cref{fig:geo_binding}. 
From \cref{thm:characterization}, we know that under a Blackwell optimal structure, the values of $\cij{i_k}{j}$ satisfy $\forall i\in\brc{2,\ldots, n}, k\in\brb{3n}$:
\begin{equation}
\begin{aligned}
\label{eqn:cij_constraint}
\cij{i_k}{j} \leq \cij{i'_{k'}}{j'}, &\quad \forall j\in\brb{n+1-i},\ j'\leq j,\ i'_{k'}\preceq i_k,\\
\cij{i_k}{j} \leq \cij{i'_{k'}}{j'}, &\quad \forall j\in\brb{n}\setminus\brb{n+1-i},\ j'\geq j,\ i'_{k'}\succeq i_k,
\end{aligned}
\end{equation}
where $i'_{k'}\preceq i_k$ if $i'<i$ or $k'\leq k$ when $i'=i$, and $i'_{k'}\succeq i_k$ if $i'>i$ or $k'\geq k$ when $i'=i$.

With the predefined values $\cij{i_k}{j}\in\brc{1, \ipratio}$, each program maximizes the expected utility $\mathbb{E}_t\brb{u\bra{q_t}}$ under the geometric characterizations of the $\epsilon$-inferentially-private  Blackwell optimal information structure.\footnote{%
{According to  \cref{thm:blackwell}, the information structure that maximizes the expected utility under function $u$ must follow the geometric characterizations of the Blackwell optimal structure.}} The program is shown below. Specifically, conditions \ref{cond:qt1},\ref{cond:qtn},\ref{cond:qti}, and \ref{cond:pti} calculate $p_T$, $q_T$ by $\probof{S}, \probof{T|S}$ based on \cref{eqn:purple_programming}.
Conditions \ref{cond:ip1} and \ref{cond:ipn} introduce the inferential privacy constraints for outputs $t_1$ and $t_{n+1}$ based on \cref{eqn:len_programming}. Condition \ref{cond:ipi} calculates $\lij{i_k}{j}=\probof{T=t_{i_k}|S=s_j}, \ \forall i\in\brc{2,\ldots, n}, k\in\brb{3n}$ based on the predefined value $\cij{i_k}{j}\in\brc{1, \ipratio}$. Conditions \ref{cond:sum} and \ref{cond:sum_corner} introduce constraints on the value of $\probof{T|S}$ based on \cref{eqn:purple_programming}. The variables in the program are $r_1^{\bra{j}}$ ($j\in \brb{n-1}$), $r_{n+1}^{\bra{j}}$ ($j\in\brc{2,\ldots, n}$), and $\lij{i_k}{n}$ ($i\in \brc{2,\ldots, n}, k\in\brb{3n}$). Based on condition \ref{cond:ipi}, we can rewrite condition \ref{cond:qti} as $\qti{i_k}=\frac{\sum_{j\in\brb{n+1-i}}\probof{S=s_j}\cdot \cij{i_k}{j}}{\sum_{j\in\brb{n}}\probof{S=s_j}\cdot \cij{i_k}{j}}$, which is fixed in each program. Therefore, our program is {linear.} %

\smallskip

Finally, we provide a mechanism that maximizes the expected utility under utility function $u$ in \cref{alg:mech_general}. To design an optimal mechanism, we first enumerate all feasible $\cij{i_k}{j}$ based on constraints in \cref{eqn:cij_constraint}. {The number of enumerations is exponential with respect to the number of secret $n$. The most straightforward way to obtain all feasible $\cij{i_k}{j}$ is to first enumerate the entire value space and then filter according to \cref{eqn:cij_constraint}. Exploring more efficient enumeration methods may be an interesting direction for future work and could be of independent interest. For each enumeration, we solve the optimization described above.} We can get an optimal structure---in the sense of maximizing expected utility under the utility function $u$---represented by $\lijhat{1}{j},\ \lijhat{n+1}{j},\ \lijhat{i_k}{j}$, based on the optimization achieving the maximal expected utility. Finally, based on the fact that $\mathbb{P}(S,Y,T) = \mathbb{P}(Y|S,T)\cdot \mathbb{P}(T|S) \cdot \mathbb{P}(S)$ and $\mathbb{P}(T|S,Y) = \frac{\mathbb{P}(S,Y,T)}{\mathbb{P}(S,Y)}$, we can design the optimal mechanism, represented by $\probof{T|S,Y}$, that maximizes the expected utility under utility function $u$.

\begin{alignat}{2}
\max_{\substack{\lij{1}{j},\ \lij{n+1}{j},\ \lij{i_k}{j}, \\ \forall i\in \brc{2,\ldots, n},\ j\in\brb{n},\ k\in\brb{3n}}} \quad & \pti{1}\cdot u(\qti{1})+\pti{n+1}\cdot u(\qti{n+1})+\sum_{\substack{i\in \brc{2,\ldots, n}\\k\in\brb{3n}}} \pti{i_k}\cdot u(\qti{i_k})\\
\text{subject to} \quad & 
\qti{1} = 1, \quad \pti{1} = \sum_{j\in\brb{n}}\probof{S=s_j}\cdot \lij{1}{j},
\label{cond:qt1}
\\
& \qti{n+1} = 0, \quad \pti{n+1} = \sum_{j\in\brb{n}}\probof{S=s_j}\cdot \lij{n+1}{j},
\label{cond:qtn}
\\
\forall j\in\brb{n-1}: \quad & \lij{1}{j} = r_1^{(j)}\cdot \lij{1}{n}, \quad \lij{1}{n} = \qsi{n}, \quad r_1^{(j)}\in \brb{\ipratioinv,\ipratio},
\label{cond:ip1}
\\
\forall j\in\brc{2,\ldots, n}: \quad & \lij{n+1}{j} = r_{n+1}^{(j)}\cdot \lij{n+1}{1}, \quad \lij{n+1}{1} = 1-\qsi{1}, \quad r_{n+1}^{(j)}\in \brb{\ipratioinv,\ipratio},
\label{cond:ipn}
\\
\forall i\in \brc{2,\ldots, n},\ k\in\brb{3n}: \quad & \qti{i_k} = \frac{\sum_{j\in\brb{n+1-i}}\probof{S=s_j}\cdot \lij{i_k}{j}}{\sum_{j\in\brb{n}}\probof{S=s_j}\cdot \lij{i_k}{j}},
\label{cond:qti}
\\
& \pti{i_k} = \sum_{j\in\brb{n}}\probof{S=s_j}\cdot \lij{i_k}{j},
\label{cond:pti}
\\
& \lij{i_k}{j} = \frac{\cij{i_k}{j}}{\cij{i_k}{n}} \cdot\lij{i_k}{n}, \quad \lij{i_k}{n}\geq 0, \quad \forall j \in \brb{n-1},
\label{cond:ipi}
\\
\forall j\in\brb{n}: \quad & \lij{1}{j}+\lij{n+1}{j}+\sum_{\substack{i\in \brc{2,\ldots, n}\\k\in\brb{3n}}} \lij{i_k}{j} = 1,
\label{cond:sum}
\\
& \lij{1}{j}+\sum_{\substack{i\in \brc{2,\ldots, n+1-j}\\k\in\brb{3n}}} \lij{i_k}{j} = \qsi{j}.
\label{cond:sum_corner}
\end{alignat}

\vspace{-2mm}
\begin{algorithm}[htpb]
    \LinesNumbered
	\BlankLine
	\SetKwInOut{Input}{Input}
\caption{Information disclosure mechanism maximizing the expected utility for non-binary secrets.}
\label{alg:mech_general}
\Input{utility function $u$, distribution $\probof{S, Y}$, inferential privacy level $\epsilon$.}
	\BlankLine
Enumerate $\cij{i_k}{j}\in\brc{1,\ipratio}, \forall i\in\brc{2,\ldots, n}, j\in\brb{n}, k\in\brb{3n}$, subject to the constraints in \cref{eqn:cij_constraint}\;
\textbf{for} each instantiation $\cij{i_k}{j}$: \ solve the optimization above\;
$\lijhat{1}{j},\ \lijhat{n+1}{j},\ \lijhat{i_k}{j} \leftarrow \lij{1}{j},\ \lij{n+1}{j},\ \lij{i_k}{j}$ that corresponds to the optimization achieving the maximal objective value\;
Output the mechanism with 
\vspace{-1mm}
\begin{align*}
&\probof{T=t_1|S=s_j, Y=1}=\frac{\lijhat{1}{j}}{\qsi{j}}, \quad \forall j\in\brb{n}, \\
&\probof{T=t_{i_k}|S=s_j, Y=1}=\frac{\lijhat{i_k}{j}}{\qsi{j}}, \quad \forall j\in\brb{n-1},\  i\in\brc{2,\ldots,n+1-j},\  k\in\brb{3n},\\
&\probof{T=t_{n+1}|S=s_j, Y=0}=\frac{\lijhat{n+1}{j}}{1-\qsi{j}}, \quad \forall j\in\brb{n},\\
&\probof{T=t_{i_k}|S=s_j, Y=0}=\frac{\lijhat{i_k}{j}}{1-\qsi{j}}, \quad \forall j\in\brb{n}\setminus\brc{1}, i\in\brc{n+2-j,\ldots, n}, k\in\brb{3n}.
\end{align*}
\end{algorithm}

\section{Conclusion}

In this work, we generalize the private private information structures of \citet{he2022private} from a perfect privacy constraint to an inferentially-private privacy constraint. 
To devise a Blackwell optimal disclosure mechanism under such an inferential privacy constraint, we first derive a geometric characterization of the corresponding optimal information structure. 
This characterization %
facilitates exact analysis in special cases. 
In the binary secret setting, we obtain a closed-form expression for an inferentially-private Blackwell-optimal information structure, which is universally optimal in the sense that it maximizes the expected utility under any convex utility function. We finally provide a programming approach to compute the optimal solution for a {specified} utility function when the secret is nonbinary. %

{
Our work leaves several important questions unanswered. 
For example, 
in the case of general (non-binary) secrets, it is unclear how to derive a closed-form expression for a Blackwell-optimal mechanism.
Another important assumption we have made is that the prior $\probof{Y,S}$ is known \emph{a priori}. 
Understanding how to relax this assumption (thereby approaching a privacy definition akin to pufferfish privacy) while still providing optimality guarantees is an interesting direction. 
}

\section*{Acknowledgments}
Authors Fanti and Wang gratefully acknowledge the support of NSF
grants CCF-2338772 and RINGS-2148359,
as well as support from the Sloan Foundation, Intel, J.P.
Morgan Chase, Siemens, Bosch, and Cisco.

\section*{Appendix}
\appendix
\section{Proof of Lemma 4.1}
\label{app:proof_perfect}
\begin{proof}
For a distribution $\probof{Y,S,T}$ that has $\probof{Y=1|S=s, T=t} =k \in (0,1)$ for some $s$ and $t$, we can split $t$ into two signals $t_1$ and $t_2$ such that 
\begin{eqnarray*}
	&\mathbb{P}_{\text{new}}(Y=1|S=s, T=t_1) = 1, \quad 	\mathbb{P}_{\text{new}}(Y=1|S=s, T=t_2) = 0,\\
	&\mathbb{P}_{\text{new}}(Y=1|S=s', T=t_1) = \mathbb{P}_{\text{new}}(Y=1|S=s', T=t_2) = \mathbb{P}(Y=1|S=s', T=t), \forall s'\neq s,
\end{eqnarray*}
while keeping the posterior of $S$ after seeing $t_1$ or $t_2$ the same as $\probof{S|T=t}$ by letting
\begin{align*}
\mathbb{P}_{\text{new}}(S=s', T=t_1) &= \probof{S=s', T=t}\cdot\frac{\mathbb{P}_{\text{new}}(S=s, T=t_1)}{\probof{S=s, T=t}},\\
\mathbb{P}_{\text{new}}(S=s', T=t_2) &= \probof{S=s', T=t}\cdot\frac{\mathbb{P}_{\text{new}}(S=s, T=t_2)}{\probof{S=s, T=t}}.
\end{align*}
We can get that
\begin{align*}
\mathbb{P}_{\text{new}}\bra{Y=1|T=t_1} &= \mathbb{P}_{\text{new}}(Y=1|S=s, T=t_1) \cdot\mathbb{P}_{\text{new}}(S=s | T=t_1)\\
&\quad+ \sum_{s'\not= s}\mathbb{P}_{\text{new}}(Y=1|S=s', T=t_1) \cdot\mathbb{P}_{\text{new}}(S=s' | T=t_1)\\
&= \probof{S=s | T=t} + \sum_{s'\not= s}\probof{Y=1|S=s', T=t} \cdot\probof{S=s' | T=t}\\
& = \probof{Y=1|T=t} + \bra{1-k}\probof{S=s | T=t}\\
&> \probof{Y=1|T=t},
\end{align*}
as well as
\begin{align*}
\mathbb{P}_{\text{new}}\bra{Y=1|T=t_2} &= \mathbb{P}_{\text{new}}(Y=1|S=s, T=t_2) \cdot\mathbb{P}_{\text{new}}(S=s | T=t_2)\\
&\quad+ \sum_{s'\not= s}\mathbb{P}_{\text{new}}(Y=1|S=s', T=t_2) \cdot\mathbb{P}_{\text{new}}(S=s' | T=t_2)\\
&= \sum_{s'\not= s}\probof{Y=1|S=s', T=t} \cdot\probof{S=s' | T=t}\\
& = \probof{Y=1|T=t} -k\cdot\probof{S=s | T=t}\\
&< \probof{Y=1|T=t}.
\end{align*}
Therefore,  $\mathbb{P}_{\text{new}}\bra{Y=1|T}$ is a mean-preserving spread of $\probof{Y=1|T}$. From \cref{thm:blackwell}, we know that the new $\mathbb{P}_{\text{new}}(Y,T_{\text{new}})$ Blackwell dominates $\probof{Y,T}$, but the reverse is not true; and the privacy guarantee is preserved as the posterior of $S$ after seeing $t_1$ or $t_2$ is the same as $\probof{S|T=t}$. Therefore a distribution with $\probof{Y=1|S=s, T=t} \in (0,1)$ cannot be an inferentially-private  Blackwell optimal information structure.
\end{proof}
\section{Proof of Lemma 4.2}
\label{app:proof_binding}
\begin{proof}
Based on inferential privacy constraint, we know that $\forall t, s_1, s_2, \ \frac{\probof{T=t|S=s_1}}{\probof{T=t|S=s_2}}\in \brb{\ipratioinv, \ipratio}$. Therefore, we can get that $$\frac{H_t}{L_t}= \frac{\max_s \probof{T=t|S=s}}{\min_s \probof{T=t|S=s}}\in \brb{1, \ipratio}.$$

Suppose there exists $t$ with $\probof{Y=1|T=t} \in (0,1)$ and $\probof{T=t|S=s} \in (L_t, \ipratio L_t)$ for some $s$, we can split $t$ into two signals $t_1$ and $t_2$ such that
\begin{eqnarray*}
	&\mathbb{P}_{\text{new}}(T=t_1|S=s) = \frac{1}{2}\probof{T=t|S=s}+\delta, \quad 	\mathbb{P}_{\text{new}}( T=t_2|S=s) = \frac{1}{2}\probof{T=t|S=s} - \delta,\\
	&\mathbb{P}_{\text{new}}( T=t_1|S=s') = \mathbb{P}_{\text{new}}( T=t_2|S=s') = \frac{1}{2}\probof{ T=t|S=s'}, \ \forall s'\neq s,
\end{eqnarray*}
where $\delta \in \brab{0, \min\brc{\frac{1}{2}\probof{T=t|S=s}-\frac{1}{2}L_t, \frac{\ipratio}{2}L_t-\frac{1}{2}\probof{T=t|S=s}}}$. We can easily check that the constructed structure $\mathbb{P}_{\text{new}}\bra{Y,S,T}$ satisfies the inferential privacy constraints. 
Based on \cref{lem:perfect}, we know that $\forall s_0 \in \mathcal{S}: \probof{Y=1|T=t,S=s_0} = \mathbb{P}_{\text{new}}(Y=1|T=t_1,S=s_0) = \mathbb{P}_{\text{new}}(Y=1|T=t_2,S=s_0) \in \brc{0,1}$. Since $\probof{Y=1|T=t} \in (0,1)$, when $\probof{Y=1|T=t,S=s}=1$, we can get that 
\begin{align*}
\mathbb{P}_{\text{new}}\bra{Y=1|T=t_1} &= \mathbb{P}_{\text{new}}(Y=1|S=s, T=t_1) \cdot\frac{\mathbb{P}_{\text{new}}(T=t_1|S=s)\cdot\probof{S=s}}{\sum_{s_0\in \mathcal{S}}\mathbb{P}_{\text{new}}(T=t_1|S=s_0)\cdot\probof{S=s_0}}\\
&\quad+ \sum_{s'\not= s}\mathbb{P}_{\text{new}}(Y=1|S=s', T=t_1) \cdot\frac{\mathbb{P}_{\text{new}}(T=t_1|S=s')\cdot\probof{S=s'}}{\sum_{s_0\in \mathcal{S}}\mathbb{P}_{\text{new}}(T=t_1|S=s_0)\cdot\probof{S=s_0}}\\
&= \probof{Y=1|S=s, T=t} \cdot\frac{\frac{1}{2}\probof{T=t|S=s}\cdot\probof{S=s}+\delta\probof{S=s}}{\frac{1}{2}\probof{T=t}+ \delta\probof{S=s}}\\
&\quad+\sum_{s'\not= s}\probof{Y=1|S=s', T=t} \cdot\frac{\frac{1}{2}\probof{T=t|S=s'}\cdot\probof{S=s'}}{\frac{1}{2}\probof{T=t}+ \delta\probof{S=s}}\\
&= \probof{Y=1|T=t}\cdot\frac{\probof{T=t}}{\probof{T=t}+ 2\delta\cdot\probof{S=s}} + \frac{2\delta\cdot\probof{S=s}}{\probof{T=t}+ 2\delta\cdot\probof{S=s}}\\
&> \probof{Y=1|T=t},
\end{align*}
as well as 
\begin{align*}
\mathbb{P}_{\text{new}}\bra{Y=1|T=t_2} &= \mathbb{P}_{\text{new}}(Y=1|S=s, T=t_2) \cdot\frac{\mathbb{P}_{\text{new}}(T=t_2|S=s)\cdot\probof{S=s}}{\sum_{s_0\in \mathcal{S}}\mathbb{P}_{\text{new}}(T=t_2|S=s_0)\cdot\probof{S=s_0}}\\
&\quad+ \sum_{s'\not= s}\mathbb{P}_{\text{new}}(Y=1|S=s', T=t_2) \cdot\frac{\mathbb{P}_{\text{new}}(T=t_2|S=s')\cdot\probof{S=s'}}{\sum_{s_0\in \mathcal{S}}\mathbb{P}_{\text{new}}(T=t_2|S=s_0)\cdot\probof{S=s_0}}\\
&= \probof{Y=1|S=s, T=t} \cdot\frac{\frac{1}{2}\probof{T=t|S=s}\cdot\probof{S=s}-\delta\probof{S=s}}{\frac{1}{2}\probof{T=t}- \delta\probof{S=s}}\\
&\quad+\sum_{s'\not= s}\probof{Y=1|S=s', T=t} \cdot\frac{\frac{1}{2}\probof{T=t|S=s'}\cdot\probof{S=s'}}{\frac{1}{2}\probof{T=t}- \delta\probof{S=s}}\\
&= \probof{Y=1|T=t}\cdot\frac{\probof{T=t}}{\probof{T=t}- 2\delta\cdot\probof{S=s}} - \frac{2\delta\cdot\probof{S=s}}{\probof{T=t}- 2\delta\cdot\probof{S=s}}\\
&< \probof{Y=1|T=t}.
\end{align*}
When $\probof{Y=1|T=t,S=s}=1$, similarly, we can get that 
\begin{align*}
\mathbb{P}_{\text{new}}\bra{Y=1|T=t_1} 
&= \sum_{s'\not= s}\probof{Y=1|S=s', T=t} \cdot\frac{\frac{1}{2}\probof{T=t|S=s'}\cdot\probof{S=s'}}{\frac{1}{2}\probof{T=t}+ \delta\probof{S=s}}\\
&= \probof{Y=1|T=t}\cdot\frac{\probof{T=t}}{\probof{T=t}+ 2\delta\cdot\probof{S=s}}\\
&< \probof{Y=1|T=t},
\end{align*}
as well as 
\begin{align*}
\mathbb{P}_{\text{new}}\bra{Y=1|T=t_2} 
&= \sum_{s'\not= s}\probof{Y=1|S=s', T=t} \cdot\frac{\frac{1}{2}\probof{T=t|S=s'}\cdot\probof{S=s'}}{\frac{1}{2}\probof{T=t}- \delta\probof{S=s}}\\
&= \probof{Y=1|T=t}\cdot\frac{\probof{T=t}}{\probof{T=t}- 2\delta\cdot\probof{S=s}}\\
&> \probof{Y=1|T=t}.
\end{align*}
Therefore, 
$\mathbb{P}_{\text{new}}\bra{Y=1|T}$ is a mean-preserving spread of $\probof{Y=1|T}$, and thus, the new $\mathbb{P}_{\text{new}}(Y,T)$ Blackwell dominates $\probof{Y,T}$, but the reverse is not true. %
\end{proof}

\section{Proof of Lemma 4.3}
\label{app:proof_characterization}
\begin{proof}
We first define sets $\mathcal{A}_t$, $\mathcal{B}_t$, and $\mathcal{C}_t$ as follows. For each $t\in \tsetinter$, define $\mathcal{A}_t$ to be the set of $s\in \mathcal{S}$ that has $\probof{Y=1|S=s, T=t}=1$, and define $\mathcal{B}_t$ to be the set of $s\in \mathcal{A}_t$ that has $\probof{T=t|S=s} = H_t$, and define $\mathcal{C}_t$ to be the set of $s\notin \mathcal{A}_t$ that has $\probof{T=t|S=s} = H_t$. 

Similar to the definition of $\mathcal{A}_t$, $\mathcal{B}_t$, and $\mathcal{C}_t$, we define $\mathcal{D}_s$ to be the set of $t\in \tsetinter$ that has $\probof{Y=1|S=s, T=t}=1$, and define $\mathcal{E}_s$ to be the set of $t\in \mathcal{D}_s$ that has $\probof{T=t|S=s} = H_t$, and define $\mathcal{F}_s$  to be the set of $t\notin \mathcal{D}_s$ that has $\probof{T=t|S=s} = H_t$.

We introduce 0-1 crossing blocks and H-L crossing blocks, and show that both types of crossing blocks cannot exist in a Blackwell optimal information structure.
\begin{definition}[0-1 crossing blocks]
    A 0-1 crossing block is defined by $s_1, s_2 \in \mathcal{S}$ and $t_1, t_2 \in \tsetinter$ with 
\begin{align*}
    & \probof{Y=1|s_1,t_1} = 0, & \probof{Y=1|s_1,t_2} = 1,\\
    & \probof{Y=1|s_2,t_1} = 1, & \probof{Y=1|s_2,t_2} = 0.
\end{align*}
In other words, we have $s_1 \notin \mathcal{A}_{t_1}$, $s_1 \in \mathcal{A}_{t_2}$, $s_2 \in \mathcal{A}_{t_1}$, and $s_2 \notin \mathcal{A}_{t_2}$.
\end{definition}
\begin{definition}[H-L crossing blocks]
    An H-L crossing block is defined by $s_1, s_2 \in \mathcal{S}$ and $t_1, t_2 \in \tsetinter$ with either (1) or (2),
    \begin{enumerate}[label={(\arabic*)}]
        \item it holds that $s_1 \in \mathcal{B}_{t_1},  s_1 \in \mathcal{A}_{t_2} \setminus \mathcal{B}_{t_2},
     s_2 \notin \mathcal{A}_{t_1}, s_2 \in \mathcal{A}_{t_2}$, in other words, we have 
        \begin{align*}
    \probof{Y=1|s_1,t_1}=1, \probof{T=t_1|s_1} = H_{t_1}; & \quad \probof{Y=1|s_1,t_2}=1, \probof{T=t_2|s_1} = L_{t_2},\\
     \probof{Y=1|s_2,t_1} = 0; & \quad \probof{Y=1|s_2,t_2} = 1.
        \end{align*}
        \item it holds that $ s_1 \in \overline{\mathcal{A}}_{t_1}\setminus \mathcal{C}_{t_1}, s_1 \in \mathcal{C}_{t_2},
     s_2 \notin \mathcal{A}_{t_1},   s_2 \in \mathcal{A}_{t_2}$, in other words, we have 
        \begin{align*}
     \probof{Y=1|s_1,t_1}=0, \probof{T=t_1|s_1} = L_{t_1}; & \quad \probof{Y=1|s_1,t_2}=0, \probof{T=t_2|s_1} = H_{t_2},\\
     \probof{Y=1|s_2,t_1} = 0; & \quad \probof{Y=1|s_2,t_2} = 1.
        \end{align*}

    \end{enumerate}
\end{definition}
\begin{lemma} \label{lem:crossing}
A Blackwell optimal information structure must not have a 0-1 crossing block or an H-L crossing block.  
\end{lemma}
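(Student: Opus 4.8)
Both claims are proved by the mechanism-perturbation technique already used for \cref{lem:perfect,lem:binding}. Suppose some $\epsilon$-inferentially-private Blackwell optimal structure $\probof{Y,S,T}$ contains a forbidden block. I will build another $\epsilon$-inferentially-private structure $\probof{Y,S,T'}$ that keeps $\probof{S}$ and every conditional $\probof{Y=1|S=s}$ unchanged---so it is compatible with the same prior $\probof{S,Y}$ and still inferentially private---but whose law of $Q'=\probof{Y=1|T'}$ is a mean-preserving spread of the law of $Q=\probof{Y=1|T}$, strictly for some strictly convex function. By \cref{thm:blackwell} this yields $\probof{Y,T'}\succeq\probof{Y,T}$ without the converse, contradicting optimality. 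Two observations streamline the constructions: (i) the perturbations are \emph{local}, touching only rows $s_1,s_2$ and columns $t_1,t_2$ of the block (plus, in the $H$-$L$ case, a small probability mass moved into one more column); and (ii) because $\E[Q]=\probof{Y=1}=\sum_s\probof{S=s}\probof{Y=1|S=s}$ is fixed by the prior, \emph{any} change preserving each $\probof{Y=1|S=s}$ is automatically mean-preserving, so only nontriviality of the spread must be arranged.

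\textbf{No $0$-$1$ crossing block.} The block is invariant under simultaneously swapping $s_1\leftrightarrow s_2$ and $t_1\leftrightarrow t_2$, and that swap interchanges $\qti{1}$ and $\qti{2}$; so we may assume $\qti{1}\ge\qti{2}$, whence $\probof{Y=1|s_1,t_1}=0$, $\probof{Y=1|s_1,t_2}=1$, and $t_1,t_2\in\tsetinter$. For a small $x>0$, replace $\probof{Y=1|s_1,t_1}$ by $x$ and $\probof{Y=1|s_1,t_2}$ by $1-y$, choosing $y$ via $x\,\probof{S=s_1,T=t_1}=y\,\probof{S=s_1,T=t_2}$ (both joint probabilities are positive). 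This identity is precisely the statement that $\probof{Y=1|S=s_1}$ does not change; the cell widths $\probof{T=t|S=s}$ are untouched, so every $\pt$ and the inferential-privacy constraint are preserved. The only change to the law of $Q$ is that the atoms at $\qti{1},\qti{2}$, with unchanged weights $\pti{1},\pti{2}$, move to $\qti{1}+\Delta$ and $\qti{2}-\Delta'$ with $\Delta,\Delta'>0$ and $\pti{1}\Delta=\pti{2}\Delta'$ (again by the identity)---i.e.\ the two values move apart about their conditional mean. This is a mean-preserving spread, strict when $\qti{1}>\qti{2}$, and still strict when $\qti{1}=\qti{2}$ by strict Jensen applied to the resulting nontrivial two-point mixture. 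Contradiction.

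\textbf{No $H$-$L$ crossing block.} Since $H_t>L_t$ forces $\epsilon>0$, such a block only occurs for $\epsilon>0$. Here I reallocate a small mass $\delta>0$ within a single row. Consider case (1): $(s_1,t_1)$ is yellow of width $H_{t_1}$, $(s_1,t_2)$ is yellow of width $L_{t_2}$, and $t_1,t_2\in\tsetinter$. Shrink the width of $(s_1,t_1)$ by $\delta/\probof{S=s_1}$ and widen $(s_1,t_2)$ by the same amount: row $s_1$'s widths still sum to $1$, and for $\delta$ small the $t_1$-column's maximal width ratio only decreases while the widened cell at $t_2$ stays below $H_{t_2}=\ipratio L_{t_2}$, so $\epsilon$-IP is preserved. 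Because $(s_1,t_1)$ is yellow and $\qti{1}<1$, removing yellow mass strictly lowers $\qti{1}$; because $(s_1,t_2)$ is yellow and $\qti{2}<1$, adding yellow mass strictly raises $\qti{2}$; the weights shift $\pti{1}\mapsto\pti{1}-\delta$, $\pti{2}\mapsto\pti{2}+\delta$; and $\E[Q]$ is preserved. When $\qti{1}\le\qti{2}$ this pulls the two interior atoms apart and shifts weight toward the larger one, which is readily checked to be a mean-preserving spread. When $\qti{1}>\qti{2}$ one instead applies the analogous move (or a colour perturbation as in the $0$-$1$ case) to whichever cell of the block can be nudged so as to \emph{raise} $\qti{1}$ or \emph{lower} $\qti{2}$---for instance, narrowing $(s_2,t_1)$ when $s_2\in\mathcal{C}_{t_1}$, or making $(s_2,t_2)$ slightly more white---redepositing any freed row-mass into a column where that row keeps a constant posterior (one exists because $(s_1,t_2)\in\mathcal{A}$ already gives $s_1$ a second yellow column, with the dual statement on the white side), and then verifying directly that the net change to $(\pt,\qt)_t$ is a genuine spread. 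Case (2) is the mirror image, run on the white cells of the block and an all-white ``sink'' column.

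\textbf{Expected main obstacle.} The $0$-$1$ case is a one-line perturbation once the right row is identified: the fact that prior-consistency and mean-preservation reduce to the \emph{same} linear equation makes it go through immediately. The work is all in the $H$-$L$ case, because the block definition leaves the widths of the $s_2$-cells and the order of $\qti{1},\qti{2}$ unspecified; one must enumerate the sub-cases and, in each, (a) exhibit a cell of the block whose width (or colour) can be moved to push an interior posterior toward an endpoint, (b) certify that the freed row-mass can be redeposited without breaking any $\ipratio$-ratio, and (c) confirm that the resulting change to the law of $Q$ is a mean-preserving spread and not merely mean-preserving. This is the ``microstructure'' book-keeping flagged in \cref{sec:upper-left}: elementary, but unavoidably case-heavy.
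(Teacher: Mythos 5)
Your overall strategy---local perturbations that keep $\probof{S,T}$ or $\probof{Y,S}$ fixed (hence keep the prior and the privacy constraint) while turning the law of the posterior into a strict mean-preserving spread, contradicting optimality via \cref{thm:blackwell}---is exactly the paper's strategy. Your $0$-$1$ case is correct and is a streamlined version of the paper's argument: the paper recolors \emph{both} rows $s_1,s_2$ with two parameters $\delta_1,\delta_2$ and chooses their order according to which of $\qti{1},\qti{2}$ is larger, whereas you use the block's symmetry to order the posteriors WLOG and then recolor only row $s_1$; both work, and your strictness discussion (including the $\qti{1}=\qti{2}$ case) is fine. The H-L sub-case $\qti{1}\le\qti{2}$ is also essentially right: shifting width in row $s_1$ from the $H_{t_1}$-cell to the $L_{t_2}$-cell preserves IP and the prior, and the net change is a convex-order increase---though ``atoms move apart and weight shifts toward the larger one'' is not by itself sufficient; the clean justification is that both old atoms lie in the interval between the two new atoms and the means agree, so the chord inequality gives $\E[u]$ non-decreasing for every convex $u$, strictly for strictly convex $u$.

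The genuine gap is the H-L sub-case $\qti{1}>\qti{2}$, which you leave as a sketch and where the moves you propose are either unavailable or misjustified. The block definition does \emph{not} guarantee $s_2\in\mathcal{C}_{t_1}$, and if $\probof{T=t_1|S=s_2}=L_{t_1}$ then narrowing that cell breaks the $\ipratio$ constraint; and when you make $(s_2,t_2)$ ``slightly more white'' the compensating yellow mass must be redeposited \emph{within row $s_2$} to preserve $\probof{Y=1|S=s_2}$---your stated reason for the existence of a redeposit column refers to row $s_1$'s second yellow column, which is irrelevant. The missing step is precisely the knob the block's $s_2$-conditions are designed to provide, and it is what the paper uses: since $(s_2,t_1)$ is white and $(s_2,t_2)$ is yellow, recolor row $s_2$ by moving yellow mass $\delta_2$ from $(s_2,t_2)$ to $(s_2,t_1)$, i.e.\ set $\probof{Y=1|S=s_2,T=t_1}=\delta_2/\probof{S=s_2,T=t_1}$ and decrease $\probof{Y=1|S=s_2,T=t_2}$ by $\delta_2/\probof{S=s_2,T=t_2}$. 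This leaves every width---hence every $\pt$ and the inferential-privacy constraint---untouched, preserves $\probof{Y=1|S=s_2}$, strictly raises $\qti{1}$ and strictly lowers $\qti{2}$, so when $\qti{1}>\qti{2}$ the two fixed-weight atoms move apart about a preserved mean, a strict spread. (The paper in fact runs the row-$s_1$ width shift and this row-$s_2$ recoloring simultaneously and tunes $\delta_1$ versus $\delta_2$ according to the sign of $\qti{1}-\qti{2}$.) The same recoloring closes the corresponding sub-case of type (2), whose width move is the mirror image on the white cells; no ``all-white sink column'' exists or is needed there. As written, your proof of the H-L half is therefore incomplete exactly where you flag the main obstacle.
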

\begin{proof}
We first prove that a ``0-1 crossing block'' cannot exist.
For simplicity we write $p_{11} = \probof{S = s_1,  T= t_1}$ and $p_{12}, p_{21}, p_{22}$ similarly. 

Suppose a distribution $\probof{Y,S,T}$ has a ``0-1 crossing block''. We show that we can slightly change $\probof{Y,S,T}$ to $\tilde {\mathbb{P}}\bra{Y,S, T}$ so that $\tilde{\mathbb{P}}\bra{Y, T}$ Blackwell dominates $\probof{Y,T}$, while preserving the marginal distribution $\probof{Y,S}$ and $\probof{S,T}$. We change the conditional distribution as follows
\begin{align*}
    & \tilde{\mathbb{P}}\bra{Y=1|s_1,  t_1} = \delta_1, & \tilde{\mathbb{P}}\bra{Y=1|s_1, t_2} = 1-\frac{p_{11}}{p_{12}}\cdot \delta_1,\\
    & \tilde{\mathbb{P}}\bra{Y=1|s_2,  t_1} = 1- \frac{p_{11}}{p_{21}} \cdot \delta_2, & \tilde{\mathbb{P}}\bra{Y=1|s_2,  t_2} = \frac{p_{11}}{p_{22}} \cdot \delta_2,
\end{align*}
where $\delta_1, \delta_2 \in (0,1)$ will be determined later. We keep $\probof{S,T}$ the same and it is not difficult to see that $\probof{Y,S}$ is preserved. We set $\delta_1, \delta_2$ in a way that $\tilde{\mathbb{P}}\bra{Y=1|T}$ is a mean-preserving spread of $\probof{Y=1|T}$, i.e., the posteriors after observing $T$ are only more ``extreme''. By simple calculation, we have
\begin{align*}
    & \tilde{\mathbb{P}}\bra{Y=1, T= t_1} \\
   = &  \sum_s \tilde{\mathbb{P}}\bra{Y=1, T= t_1, S = s} \\
   =  &  \tilde{\mathbb{P}}\bra{ Y=1, T = t_1, S= s_1} + \tilde{\mathbb{P}}\bra{ Y=1, T = t_1, S= s_2} + \probof{ Y=1, T = t_1, S\neq s_1, s_2}\\
    = & p_{21} + p_{11} (\delta_1 - \delta_2) + \probof{ Y=1, T = t_1, S\neq s_1, s_2}\\
    = & p_{11} (\delta_1 - \delta_2) + \probof{Y=1, T = t_1}.
\end{align*}
The last equality holds because $\probof{ Y=1, T = t_1, S= s_1} = 0$ and $\probof{ Y=1, T = t_1, S= s_2} = p_{21}$. Therefore
\begin{align*}
    \tilde{\mathbb{P}}\bra{Y=1| T= t_1} = \frac{p_{11}}{\probof{T= t_1}} \cdot (\delta_1 - \delta_2) + \probof{Y=1 | T = t_1}.
\end{align*}
Similarly,
\begin{align*}
    \tilde{\mathbb{P}}\bra{Y=1 | T= t_2} = - \frac{p_{11}}{\probof{T= t_2}} \cdot (\delta_1 - \delta_2) + \probof{Y=1 | T = t_2}.
\end{align*}
By our assumption that $\probof{s,t}>0$ for all $s,t$, we have $p_{11}>0$ and $\probof{Y=1 | T = t_1}, \probof{Y=1 | T = t_2} \in (0,1)$. Then by choosing $\delta_1 > \delta_2$ when $\probof{Y=1 | T = t_1} > \probof{Y=1 | T = t_2}$ and $\delta_1 < \delta_2$ when $\probof{Y=1 | T = t_1} \le \probof{Y=1 | T = t_2}$, we make the posteriors more extreme. The new distribution $\tilde P$ preserves the inferential privacy constraint because the marginal distribution of $S$ and $T$ keeps the same, i.e., $\tilde{\mathbb{P}}\bra{S,T} = \probof{S,T}$.

Similarly, we can prove that a ``H-L crossing block'' cannot exist either. Suppose $\probof{Y,S,T}$ has a ``H-L'' crossing block of the first type. (The proof for the second type is entirely similar.) Let $p_1 = \probof{S=s_1}$, $p_{21} = \probof{S=s_2, T=t_1}$, and $p_{22} = \probof{S=s_2, T=t_2}$. We again slightly change $\probof{Y,S,T}$ to $\tilde{\mathbb{P}}\bra{Y,S,T}$ by
\begin{align*}
    & \tilde{\mathbb{P}}\bra{T=t_1|S=s_1} = \probof{T=t_1|S=s_1} - \frac{\delta_1}{p_1}, \\ 
    & \tilde{\mathbb{P}}\bra{T=t_2|S=s_1} = \probof{T=t_2|S=s_1} + \frac{\delta_1}{p_1}, \\
    &\tilde{\mathbb{P}}\bra{Y=1|s_2,  t_1} =  \frac{1}{p_{21}} \cdot \delta_2, \\
    & \tilde{\mathbb{P}}\bra{Y=1|s_2,  t_2} = 1- \frac{1}{p_{22}} \cdot \delta_2
\end{align*}
Again, by simple calculation we have $\tilde{\mathbb{P}}\bra{Y=1| T= t_1} = \frac{1}{\probof{T= t_1}} \cdot (\delta_1 - \delta_2) + \probof{Y=1 | T = t_1}$ and $\tilde{\mathbb{P}}\bra{Y=1 | T= t_2} = - \frac{1}{\probof{T= t_2}} \cdot (\delta_1 - \delta_2) + \probof{Y=1 | T = t_2}$.
Therefore by choosing $\delta_1 > \delta_2$ when $\probof{Y=1 | T = t_1} > \probof{Y=1 | T = t_2}$ and $\delta_1 < \delta_2$ when $\probof{Y=1 | T = t_1} \le \probof{Y=1 | T = t_2}$, we make the posteriors more extreme. The new distribution $\tilde P$ will preserve the inferential privacy constraint when we  choose small enough $\delta_1$ and $\delta_2$. Because (1) by the definition of ``H-L crossing blocks'' and $\mathcal{B}_t$, we have  $\probof{T=t_1|S=s_1} = H_t$, $\probof{T=t_2|S=s_1} = L_t$, so by slightly decreasing $\probof{T=t_1|S=s_1} = H_t$ and increasing $\probof{T=t_2|S=s_1} = L_t$, the inferential privacy constraint is still satisfied; (2) we do not change the the marginal distribution of $\probof{S,T}$ when $S=s_2$, i.e., $\tilde{\mathbb{P}}\bra{s_2, t_1} = \probof{s_2, t_1}$ and $\tilde{\mathbb{P}}\bra{s_2, t_2} = \probof{s_2, t_2}$.

\end{proof}

If an information structure $\probof{Y,S,T}$ does not have a ``0-1 crossing block'', then for any pair of $t_1, t_2 \in \tsetinter$, we either have $\mathcal{A}_{t_1}\subseteq \mathcal{A}_{t_2}$ or $\mathcal{A}_{t_2}\subseteq \mathcal{A}_{t_1}$. (Otherwise, an arbitrary $s_1\in \mathcal{A}_{t_2}\setminus \mathcal{A}_{t_1}$ and an arbitrary $s_2 \in \mathcal{A}_{t_1}\setminus \mathcal{A}_{t_2}$ and $t_1, t_2$ will form a ``0-1 crossing block''.)
Similarly, for any pair of $s_1, s_2 \in \mathcal{S}$, we either have $\mathcal{D}_{s_1}\subseteq \mathcal{D}_{s_2}$ or $\mathcal{D}_{s_2}\subseteq \mathcal{D}_{s_1}$.

\begin{lemma}\label{lem:ordering}
	Consider any Blackwell optimal information structure $\probof{Y,S,T}$. For any pair $t_1, t_2 \in \tsetinter$, if $\mathcal{A}_{t_1}\subset \mathcal{A}_{t_2}$, then we must  have $\mathcal{B}_{t_1}\subseteq \mathcal{B}_{t_2}$ and $\mathcal{C}_{t_2}\subseteq \mathcal{C}_{t_1}$. For any pair of $s_1, s_2 \in \mathcal{S}$, if $\mathcal{D}_{s_1}\subset \mathcal{D}_{s_2}$, then we must have $\mathcal{E}_{s_1}\subseteq \mathcal{E}_{s_2}$ and $\mathcal{F}_{s_2}\subseteq \mathcal{F}_{s_1}$.
\end{lemma}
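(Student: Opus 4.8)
The plan is to prove both containments by contradiction, using the forbidden crossing–block configurations of \cref{lem:crossing} and exploiting the \emph{strictness} of the hypothesized inclusion to supply the ``fourth'' cell that a crossing block needs. The column statement is a direct reduction, so I would do it first. Fix $t_1,t_2\in\tsetinter$ with $\mathcal{A}_{t_1}\subset\mathcal{A}_{t_2}$. For $\mathcal{B}_{t_1}\subseteq\mathcal{B}_{t_2}$: take $s_1\in\mathcal{B}_{t_1}$; then $s_1\in\mathcal{A}_{t_1}\subseteq\mathcal{A}_{t_2}$, and if $s_1\notin\mathcal{B}_{t_2}$ then \cref{lem:binding} forces $\probof{T=t_2|S=s_1}=L_{t_2}$, that is $s_1\in\mathcal{A}_{t_2}\setminus\mathcal{B}_{t_2}$; choosing $s_2\in\mathcal{A}_{t_2}\setminus\mathcal{A}_{t_1}$ (nonempty by strictness) makes $(s_1,s_2,t_1,t_2)$ exactly a type-(1) H-L crossing block, contradicting \cref{lem:crossing}. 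For $\mathcal{C}_{t_2}\subseteq\mathcal{C}_{t_1}$: take $s_1\in\mathcal{C}_{t_2}$; since $\mathcal{A}_{t_1}\subseteq\mathcal{A}_{t_2}$ and $s_1\notin\mathcal{A}_{t_2}$, also $s_1\notin\mathcal{A}_{t_1}$, and if $s_1\notin\mathcal{C}_{t_1}$ then \cref{lem:binding} gives $\probof{T=t_1|S=s_1}=L_{t_1}$, that is $s_1\in\overline{\mathcal{A}}_{t_1}\setminus\mathcal{C}_{t_1}$; the same choice of $s_2$ yields a type-(2) H-L crossing block. (\cref{lem:perfect} is used throughout so that each cell is unambiguously yellow or white, and $t_1,t_2\in\tsetinter$ guarantees $\probof{Y=1|T=t_j}\in(0,1)$.)

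\smallskip
\noindent For the row statement I would attempt the mirror-image argument: given $\mathcal{D}_{s_1}\subset\mathcal{D}_{s_2}$, take $t_1\in\mathcal{E}_{s_1}$ (so $t_1\in\mathcal{D}_{s_1}\subseteq\mathcal{D}_{s_2}$ and $(s_1,t_1)$ is yellow and wide); if $t_1\notin\mathcal{E}_{s_2}$ then $(s_2,t_1)$ is yellow and narrow, and picking $t_2\in\mathcal{D}_{s_2}\setminus\mathcal{D}_{s_1}$ gives $(s_2,t_2)$ yellow and $(s_1,t_2)$ white. If $(s_2,t_2)$ is wide, then $(s_2,s_1,t_2,t_1)$ is a type-(1) H-L crossing block and we are done; the argument for $\mathcal{F}_{s_2}\subseteq\mathcal{F}_{s_1}$ is the same with type (2). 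The step I expect to be the real obstacle is the residual case where \emph{every} available witness $t_2$ is narrow in row $s_2$: then the four cells $(s_1,t_1)$ yellow-wide, $(s_1,t_2)$ white, $(s_2,t_1)$ yellow-narrow, $(s_2,t_2)$ yellow-narrow match \emph{none} of the blocks in \cref{lem:crossing}, since ``wide''/``narrow'' is a column-local notion and the two H-L patterns are not invariant under transposing the grid. So \cref{lem:crossing} as stated does not, by itself, close the row statement.

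\smallskip
\noindent To handle that residual case I plan to introduce ``row'' analogues of the H-L crossing blocks (the configuration above and its all-white counterpart) and show directly that a Blackwell-optimal structure cannot contain them, by a perturbation in the spirit of the proof of \cref{lem:crossing} but carried out along a \emph{column}: shift a small amount of width between rows $s_1$ and $s_2$ inside one column, compensate inside the affected rows so that $\probof{S}$ and each sum $\sum_t\probof{T=t|S=s}$ are preserved, and correct the posteriors $\probof{Y=1|S=s,T=t}$ on the cells that were touched; choosing the direction of the shift according to whether $\probof{Y=1|T=t_1}$ is above or below $\probof{Y=1|T=t_2}$ turns the resulting vector of posteriors into a mean-preserving spread, so by \cref{thm:blackwell} the perturbed structure Blackwell-dominates the original --- a contradiction. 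The delicate point, and where most of the work lies, is that $(s_1,t_1)$ already sits at the maximal width $H_{t_1}$, so the obvious compensating move inside row $s_1$ can itself break the constraint of \cref{def:ip}; when it does, one must route the excess width through a third column in which $(s_1,\cdot)$ is narrow --- such a column exists because $\epsilon>0$ rules out a whole row being at minimal width --- and then a short case analysis on the color and current width of the cell receiving the width verifies that \cref{def:ip} is still satisfied. This bookkeeping, rather than any new idea, is the bulk of the row statement, and it is the only part not immediately reducible to \cref{lem:crossing}.
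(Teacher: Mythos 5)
Your proof of the column half is exactly the paper's argument: pick $s_1\in\mathcal{B}_{t_1}\setminus\mathcal{B}_{t_2}$ (resp.\ $s_1\in\mathcal{C}_{t_2}\setminus\mathcal{C}_{t_1}$), use $\mathcal{A}_{t_1}\subset\mathcal{A}_{t_2}$ together with \cref{lem:perfect,lem:binding} to place it in $\mathcal{A}_{t_2}\setminus\mathcal{B}_{t_2}$ (resp.\ $\overline{\mathcal{A}}_{t_1}\setminus\mathcal{C}_{t_1}$), pick $s_2\in\mathcal{A}_{t_2}\setminus\mathcal{A}_{t_1}$ by strictness, and invoke the type-(1) (resp.\ type-(2)) H-L block of \cref{lem:crossing}; that half is correct and coincides with the paper. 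Your diagnosis of the row half is also accurate as a reading of the paper: the paper dismisses it with ``the proof for $\mathcal{D},\mathcal{E},\mathcal{F}$ is entirely similar,'' but since $H_t,L_t$ are column-relative the transposed argument is not literally symmetric. I checked your relabeling: taking $t_1\in\mathcal{E}_{s_1}\setminus\mathcal{E}_{s_2}$ and a witness $t_2\in\mathcal{D}_{s_2}\setminus\mathcal{D}_{s_1}$, the quadruple $(s_2,s_1,t_2,t_1)$ is a type-(1) block precisely when $(s_2,t_2)$ is wide, and when every witness is narrow in row $s_2$ the resulting four cells (yellow-wide, white, yellow-narrow, yellow-narrow) match neither a 0-1 block nor either H-L block; one can even check that the already-proved column orderings $\mathcal{B}_{t_2}\subseteq\mathcal{B}_{t_1}$, $\mathcal{C}_{t_1}\subseteq\mathcal{C}_{t_2}$ (which follow here since $\mathcal{A}_{t_2}\subset\mathcal{A}_{t_1}$ by no-0-1-crossing) do not by themselves exclude this configuration. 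So the residual case you isolate is real and is not covered by \cref{lem:crossing} as stated.

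The gap is that your repair of that residual case is a sketch, not a proof, and the sketch is where all the content lies. The ``row blocks'' and the column-wise perturbation that is supposed to forbid them are never constructed: you do not write down the perturbed structure, do not verify that it preserves the joint prior $\probof{Y,S}$ (conserving $\probof{S}$ and the row sums is not enough --- the yellow mass within each row must also be conserved, which constrains the step where you ``correct the posteriors''), and do not verify the $\epsilon$-IP constraint, which is exactly the binding difficulty since $(s_1,t_1)$ already sits at width $H_{t_1}$. The one existence claim you do make --- that a third column in which row $s_1$ is narrow must exist because ``$\epsilon>0$ rules out a whole row being at minimal width'' --- is both the wrong direction (you need to rule out $s_1$ being \emph{wide} everywhere it matters) and false as a general principle: in the binary-secret optimal structures of \cref{lemma:l_value_detailed} with a single interior column, one row attains $L_t$ (and the other $H_t$) in every column of $\tsetinter$. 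So the routing step needs an argument tailored to the residual configuration, and the feasibility bookkeeping you defer is precisely the part that the paper's proof of \cref{lem:crossing} had to do carefully for its own perturbations. As written, the second half of the lemma, $\mathcal{E}_{s_1}\subseteq\mathcal{E}_{s_2}$ and $\mathcal{F}_{s_2}\subseteq\mathcal{F}_{s_1}$, remains unproved in your proposal (to be fair, the paper itself supplies no more detail there than the one-line symmetry claim you are objecting to).
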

\begin{proof}
We prove that for any pair $t_1, t_2 \in \tsetinter$, if $\mathcal{A}_{t_1}\subset \mathcal{A}_{t_2}$, then we must  have $\mathcal{B}_{t_1}\subseteq \mathcal{B}_{t_2}$ and $\mathcal{C}_{t_2}\subseteq \mathcal{C}_{t_1}$. The proof for $\mathcal{D}, \mathcal{E}, \mathcal{F}$ is entirely similar. 

Consider any $t_1, t_2 \in \tsetinter$ with $\mathcal{A}_{t_1}\subset \mathcal{A}_{t_2}$. We first prove that we must have $\mathcal{B}_{t_1}\subseteq \mathcal{B}_{t_2}$. Suppose to the contrary, $\mathcal{B}_{t_1}$ is not a subset of $\mathcal{B}_{t_2}$, we claim that there must exist an ``H-L crossing'' block. We first find a valid $s_1$. Since $\mathcal{B}_{t_1}$ is not a subset of $\mathcal{B}_{t_2}$, we can find $s_1$ with $s_1 \in \mathcal{B}_{t_1}$ and $s_1 \notin \mathcal{B}_{t_2}$. By definition, $\mathcal{B}_{t_1} \subseteq \mathcal{A}_{t_1}$, and by assumption, $\mathcal{A}_{t_1}\subset \mathcal{A}_{t_2}$, therefore it is guaranteed that $s_1 \in \mathcal{B}_{t_1} \subseteq \mathcal{A}_{t_1} \subset \mathcal{A}_{t_2}$. So we find an $s_1$ with $s_1 \in \mathcal{B}_{t_1}$ and $s_1 \in \mathcal{A}_{t_2}\setminus \mathcal{B}_{t_2}$.
We then find a valid $s_2$. Since $\mathcal{A}_{t_1}\subset \mathcal{A}_{t_2}$, we can find $s_2$ with $s_2 \in \mathcal{A}_{t_2}$ and $s_2 \notin \mathcal{A}_{t_1}$. Then by definition, $s_1, s_2, t_1, t_2$ form an ``H-L crossing'' block. Therefore, we must have $\mathcal{B}_{t_1}\subseteq \mathcal{B}_{t_2}$.

Next, we show that for any $t_1, t_2 \in \tsetinter$ with $\mathcal{A}_{t_1}\subset \mathcal{A}_{t_2}$, we must have $\mathcal{C}_{t_2}\subseteq \mathcal{C}_{t_1}$. Suppose to the contrary, $\mathcal{C}_{t_2}$ is not a subset of $\mathcal{C}_{t_1}$, we claim that there must exist an ``H-L crossing'' block. We first find a valid $s_1$. Since $\mathcal{C}_{t_2}$ is not a subset of $\mathcal{C}_{t_1}$, we can find $s_1$ with $s_1 \in \mathcal{C}_{t_2}$ and $s_1 \notin \mathcal{C}_{t_1}$. By our definition and our assumption, $\mathcal{C}_{t_2} \subseteq \overline{\mathcal{A}}_{t_2} \subset \overline{\mathcal{A}}_{t_1}$, therefore it is guaranteed that $s_1 \in \mathcal{C}_{t_2} \subseteq \overline{\mathcal{A}}_{t_2} \subset \overline{\mathcal{A}}_{t_1}$. So we find an $s_1$ with $s_1 \in \mathcal{C}_{t_2}$ and $s_1 \in \overline{\mathcal{A}}_{t_1}\setminus \mathcal{C}_{t_1}$.
We then find a valid $s_2$. Because $\mathcal{A}_{t_1}\subset \mathcal{A}_{t_2}$, we can find $s_2$ with $s_2 \in \mathcal{A}_{t_2}$ and $s_2 \notin \mathcal{A}_{t_1}$. Then by definition, $s_1, s_2, t_1, t_2$ form an ``H-L crossing'' block. Therefore we must have $\mathcal{C}_{t_2}\subseteq \mathcal{C}_{t_1}$.
\end{proof}

Now we are ready to prove \cref{thm:characterization}. 
Suppose we sort $s\in \mathcal{S}$ from largest $\mathcal{D}_s$ to smallest $\mathcal{D}_s$, and sort $t\in \tsetinter$  from largest $\mathcal{A}_t$ to smallest $\mathcal{A}_t$. Then we must have the region of $\mathcal{A}_t$ being an upper-left region. Otherwise, it will conflict the ordering $\mathcal{A}_{t_k} \subseteq \mathcal{A}_{t_j}$ or $\mathcal{D}_{s_k} \subseteq \mathcal{D}_{s_j}$. According to Lemma~\ref{lem:ordering}, we have $\mathcal{B}_{t_k} \subseteq \mathcal{B}_{t_j}$ and $\mathcal{E}_{s_k} \subseteq \mathcal{E}_{s_j}$ for any $j\le k$.
Therefore region $\mathcal{B}$ must be upper-left as well. In addition, according to Lemma~\ref{lem:ordering}, we must have $\mathcal{C}_{t_k} \subseteq \mathcal{C}_{t_j}$ and $\mathcal{E}_{s_k} \subseteq \mathcal{E}_{s_j}$ for any $j\ge k$. Then for the same reason, region $\mathcal{C}$ must be lower-right. 
We thus prove the lemma.
\end{proof}
\section{Proof of Lemma 5.1}
\label{sec:proof_feasibility}

\begin{proof}
Without loss of generality, we let $\probof{Y=1|S=s_1}\leq \probof{Y=1|S=s_0}$. From the proof of \cref{lem:poly_support}, we know that there is an $\epsilon$-inferentially-private Blackwell optimal information structure $\probhatof{S,Y,\hat{T}}$ where there is only one output, denoted as $t_1$ and $t_k$, that satisfies $\probhatof{Y=1|\hat{T}=\hat{t}_1}=1$ and $\probhatof{Y=1|\hat{T}=\hat{t}_k}=0$ respectively. Then we can get that $\probhatof{Y=1|S=s_1, \hat{T}=\hat{t}}=0, \forall \hat{t}\not=\hat{t}_1$ and $\probhatof{Y=1|S=s_0, \hat{T}=\hat{t}}=1, \forall \hat{t}\not=\hat{t}_{k}$, because otherwise, based on \cref{thm:characterization}, we have $\probhatof{Y=1|\hat{T}=\hat{t}}=1, \exists \hat{t}\not=\hat{t}_1$ or $\probhatof{Y=1|\hat{T}=\hat{t}}=0, \exists \hat{t}\not=\hat{t}_{n+1}$. With simple calculation, we can get that $\probhatof{\hat{T}=\hat{t}_1|S=s_1}=\qsi{1}$ and $\probhatof{\hat{T}=\hat{t}_1|S=s_0}=1-\qsi{0}$. Based on \cref{lem:binding,thm:characterization}, we can get that $\exists 1\leq i_0\leq k:$
\begin{align*}
\probhatof{\hat{T}=\hat{t}_i|S=s_0} = 
\begin{cases}
\highlen{\hat{t}_i}, & 1<i\leq i_0\\
\lowlen{\hat{t}_i}, & i_0<i< k
\end{cases},
\quad
\probhatof{\hat{T}=\hat{t}_i|S=s_1} = 
\begin{cases}
\highlen{\hat{t}_i}, & i_0<i< k\\
\lowlen{\hat{t}_i}, & 1<i\leq i_0
\end{cases}.
\end{align*}
We can construct an information structure $\probof{S,Y,T}$ where $\forall j\in\brc{0,1}:$
\begin{align*}
\lij{1}{j}=\probof{T=t_1|S=s_j}&=\probhatof{\hat{T}=\hat{t}_1|S=s_j},\\
\lij{2}{j}=\probof{T=t_2|S=s_j}&=\sum_{1<i\leq i_0}\probhatof{\hat{T}=\hat{t}_i|S=s_j},\\ \lij{3}{j}=\probof{T=t_3|S=s_j}&=\sum_{i_0<i< k}\probhatof{\hat{T}=\hat{t}_i|S=s_j},\\
\lij{4}{j}=\probof{T=t_4|S=s_j}&=\probhatof{\hat{T}=\hat{t}_k|S=s_j}.
\end{align*}
With similar analysis of the proof of \cref{lem:poly_support}, we can verify that
$\probof{S,Y,T}$ is an optimal information structure equivalent to $\probof{S,Y,\hat{T}}$.

We have
$\lij{1}{1} = \probhatof{\hat{T}=\hat{t}_1|S=s_j}= \qsi{1}$ and $\lij{4}{0} = \probhatof{\hat{T}=\hat{t}_k|S=s_j}= 1-\qsi{0}$. Following the inferential privacy constraints, we have $\lij{1}{0}/\lij{1}{1} \in \brb{\ipratioinv, \ipratio},  \lij{4}{0}/\lij{4}{1} \in \brb{\ipratioinv, \ipratio}$. Besides, we can get that
\begin{align*}
\frac{\lij{2}{0}}{\lij{2}{1}} &= \frac{\sum_{1<i\leq i_0}\probhatof{T=\hat{t}_i|S=s_0}}{\sum_{1<i\leq i_0}\probhatof{T=\hat{t}_i|S=s_1}} = \frac{\sum_{1<i\leq i_0}\highlen{\hat{t}_i}}{\sum_{1<i\leq i_0}\lowlen{\hat{t}_i}}=\ipratio,\\
\frac{\lij{3}{1}}{\lij{3}{0}} &= \frac{\sum_{i_0<i<k}\probhatof{T=\hat{t}_i|S=s_1}}{\sum_{i_0<i<k}\probhatof{T=\hat{t}_i|S=s_0}} = \frac{\sum_{i_0<i<k}\highlen{\hat{t}_i}}{\sum_{i_0<i<k}\lowlen{\hat{t}_i}}=\ipratio.
\end{align*}
Finally, we have $\forall j\in \brc{0,1}: \sum_{i\in[4]}\lij{i}{j} = \sum_{i\in[4]}\probof{T=t_i|S=s_j}=1$.
\end{proof}
\section{Proof of Lemma 5.2}
\label{app:proof_dominant}

\begin{proof}
We first prove that there exists a unique feasible dominant point. 
According to \cref{lemma:feasibility}, we have $l_1^{(1)} = q_{s_1},  l_4^{(0)} = 1 - q_{s_0},  l_2^{(0)} = e^{\epsilon}l_2^{(1)},  l_3^{(1)} = e^{\epsilon}l_3^{(0)}$, $\sum_{i\in[4]}\lij{i}{0} = \sum_{i\in[4]}\lij{i}{1} = 1$. Therefore, we can represent $\lij{2}{1}, \lij{3}{1}$ by $\lij{1}{0}$ and $\lij{4}{1}$ based on the following two equations:
\begin{align*}
\qsi{1}+\lij{2}{1}+\lij{3}{1}+\lij{4}{1} &= 1,\\
\lij{1}{0}+\ipratio\lij{2}{1}+\ipratioinv\lij{3}{1}+1-\qsi{0} &= 1.
\end{align*}
We can get that 
\begin{align}
\label{eqn:l21_inproof}
\lij{2}{1} & = \frac{-\ipratio\lij{1}{0}+\lij{4}{1}+\ipratio\qsi{0}+\qsi{1}-1}{\ipratiosq-1},\\
\label{eqn:l31_inproof}
\lij{3}{1} & = \ipratio\cdot\frac{\lij{1}{0}-\ipratio\lij{4}{1}-\qsi{0}-\ipratio\qsi{1}+\ipratio}{\ipratiosq-1}.
\end{align}
Based on the constraints that $\forall i\in[4], j\in[2]: \lij{i}{j}\geq 0$, $\lij{1}{0}\in\brb{\ipratioinv\lij{1}{1}, \ipratio\lij{1}{1}}$, and $\lij{4}{1}\in\brb{\ipratioinv\lij{4}{0}, \ipratio\lij{4}{0}}$, we have
\begin{align*}
&-\ipratio\lij{1}{0}+\lij{4}{1}+\ipratio\qsi{0}+\qsi{1}-1\geq 0,\\
&\lij{1}{0}-\ipratio\lij{4}{1}-\qsi{0}-\ipratio\qsi{1}+\ipratio\geq 0,\\
&\ipratio \qsi{1} \geq \lij{1}{0} \geq \ipratioinv \qsi{1},\\
&\ipratio \bra{1-\qsi{0}} \geq \lij{4}{1} \geq \ipratioinv \bra{1-\qsi{0}}.
\end{align*}

Let %
$R_1 = \frac{\qsi{0}}{\qsi{1}}$, and $R_2 = \frac{1-\qsi{1}}{1-\qsi{0}}$.
Then we can get that
\begin{equation}
\begin{aligned}
\label{eqn:l_value_detailed}
\lij{1}{0}&\leq
\begin{cases}
\qsi{0}, & R_1 \leq \ipratio \cap R_2\leq \ipratio\\
\ipratio \qsi{1}, & R_1> \ipratio \cap \bra{R_2\leq \ipratio \cup \bra{R_2>\ipratio \cap \qsi{1}< \frac{1}{1+\ipratio}}}\\
1-\ipratioinv\bra{1-\qsi{1}}, & R_2> \ipratio \cap \bra{R_1\leq \ipratio \cup \bra{R_1>\ipratio \cap \qsi{1}\geq \frac{1}{1+\ipratio}}}
\end{cases},\\
\lij{4}{1}&\leq
\begin{cases}
1-\qsi{1}, & R_1 \leq \ipratio \cap R_2\leq \ipratio\\
1-\ipratioinv\qsi{0}, & R_1> \ipratio \cap \bra{R_2\leq \ipratio \cup \bra{R_2>\ipratio \cap \qsi{0}\leq \frac{1}{1+\ipratioinv}}}\\
\ipratio\bra{1-\qsi{0}}, & R_2> \ipratio \cap \bra{R_1\leq \ipratio \cup \bra{R_1>\ipratio \cap \qsi{0}> \frac{1}{1+\ipratioinv}}}
\end{cases},
\end{aligned}
\end{equation}
and the upper bounds of $\lij{1}{0}$ and $\lij{4}{1}$ can be achieved simultaneously, i.e., there exists a unique feasible dominant point.

To prove the feasible dominant point universally maximizes the objective function, we first prove the following two inequalities for any convex utility function $u$:
\begin{align}
\label{eqn:inlemma_1}
\psj{1}u(\qti{4})+\frac{\psj{1}+\ipratio\psj{0}}{\ipratiosq-1}u(\qti{2}) \geq \frac{\ipratiosq\psj{1}+\ipratio\psj{0}}{\ipratiosq-1}u(\qti{3}),\\
\label{eqn:inlemma_2}
\psj{0}u(\qti{1})+\frac{\ipratio\psj{1}+\psj{0}}{\ipratiosq-1}u(\qti{3})\geq\frac{\ipratiosq\psj{0}+\ipratio\psj{1}}{\ipratiosq-1}u(\qti{2}).
\end{align}
Since we have $\qti{1} = 1, \qti{4} = 0$, $\qti{2} = \frac{\lij{2}{0}\psj{0}}{\lij{2}{0}\psj{0}+\lij{2}{1}\psj{1}} = \frac{\ipratio\psj{0}}{\ipratio\psj{0}+\psj{1}}$, $\qti{3} = \frac{\lij{3}{0}\psj{0}}{\lij{3}{0}\psj{0}+\lij{3}{1}\psj{1}} = \frac{\psj{0}}{\psj{0}+\ipratio\psj{1}}$, we can get that 
\begin{align*}
\psj{1}\qti{4}+\frac{\psj{1}+\ipratio\psj{0}}{\ipratiosq-1}\qti{2} = \frac{\ipratio\psj{0}}{\ipratiosq-1} = \frac{\ipratiosq\psj{1}+\ipratio\psj{0}}{\ipratiosq-1}\qti{3},\\
\psj{0}\qti{1}+\frac{\ipratio\psj{1}+\psj{0}}{\ipratiosq-1}\qti{3} = \frac{\ipratiosq\psj{0}}{\ipratiosq-1} =\frac{\ipratiosq\psj{0}+\ipratio\psj{1}}{\ipratiosq-1}\qti{2}.
\end{align*}
Since $u$ is a convex function, we can easily get that \cref{eqn:inlemma_1,eqn:inlemma_2} hold based on Jensen's inequality.

Then we prove that if we fix the value of $\lij{1}{0}$ and reduce $\lij{4}{1}$, the value of objective function decreases. For a feasible set of values $\brc{\lij{i}{j}}_{i\in[4],j\in[2]}$, where $\lij{2}{1}>0$, let $U$ be the value of objective function, i.e., $U=\sum_{i\in[4]} \pti{i}\cdot u(\qti{i})$. Suppose we fix $\lij{1}{0}$ and decrease $\lij{4}{1}$ to ${\tlij{4}{1}}=\lij{4}{1}-\Delta l$, then based on \cref{eqn:l21_inproof,eqn:l31_inproof}, we have $\tlij{2}{1} = \lij{2}{1}-\frac{\Delta l}{\ipratiosq-1}$, $\tlij{3}{1} = \lij{3}{1}+\frac{\ipratiosq\Delta l}{\ipratiosq-1}$. Therefore, based on \cref{eqn:inlemma_1}, we have
\begin{align*}
\tilde{U} &= \sum_{i\in[4]} \tpti{i}\cdot u(\qti{i})\\ &=\sum_{i\in[4]} \pti{i}\cdot u(\qti{i}) + \sum_{j\in\brc{2,3,4}}\brb{\bra{\tlij{j}{0}-\lij{j}{0}}\psj{0}+\bra{\tlij{j}{1}-\lij{j}{1}}\psj{1}}u(\qti{j})\\
& = U + \frac{\ipratiosq\psj{1}+\ipratio\psj{0}}{\ipratiosq-1}u(\qti{3})-\frac{\psj{1}+\ipratio\psj{0}}{\ipratiosq-1}u(\qti{2}) -\psj{1}u(\qti{4})\\
& \leq U.
\end{align*}

Similarly, based on \cref{eqn:inlemma_2}, we can prove that if we fix the value of $\lij{4}{1}$ and reduce $\lij{1}{0}$, the value of objective function decreases.

Above all, we know that a necessary condition of the objective function $\mathbb{E}_t[u(\qt)]$ being maximized is that $\lij{1}{0}$ and $\lij{4}{1}$ are on the Pareto frontier maximizing each of these quantities individually. 

\end{proof}

\section{Proof of Theorem 5.1}
\label{sec:proof_l_detailed}

\begin{proof}
When $\lij{1}{0}$ and $\lij{4}{1}$ achieve their feasible maximal values simultaneously, based on \cref{eqn:l_value_detailed} and feasibility conditions in \cref{lemma:feasibility}, we can easily get that

\begin{itemize}
\item
When $R_1\leq \ipratio$, $R_2\leq \ipratio$:

$\lij{4}{1} = 1 - \qsi{1},\quad \lij{1}{0} = \qsi{0}, \quad\lij{2}{1}=\lij{3}{1} = 0.$

\item 
When $R_1\leq \ipratio, R_2> \ipratio$ or $R_1> \ipratio, R_2>\ipratio, \qsi{1}\geq \frac{1}{1+\ipratio}$:

$l_4^{(1)} = e^{\epsilon}(1-q_{s_0}), \quad l_1^{(0)} = 1-e^{-\epsilon}(1-q_{s_1}), \quad l_2^{(1)} = 0, \quad l_{3}^{(1)} = 1 - q_{s_1} - e^{\epsilon}(1-q_{s_0})$.

\item 
When $R_1> \ipratio, R_2\leq \ipratio$ or $R_1> \ipratio, R_2>\ipratio, \qsi{0}\leq \frac{1}{1+\ipratioinv}$:

$l_4^{(1)} = 1 - e^{-\epsilon}q_{s_0}, \quad l_1^{(0)} = e^{\epsilon}q_{s_1}, \quad l_2^{(1)} = \ipratioinv q_{s_0}-q_{s_1}, \quad l_{3}^{(1)} = 0$.

\item 
When $R_1> \ipratio$, $R_2> \ipratio$, $\qsi{0}> \frac{1}{1+\ipratioinv}, \qsi{1}< \frac{1}{1+\ipratio}$:

$l_4^{(1)} = e^{\epsilon}(1-q_{s_0}), \quad l_1^{(0)} = e^{\epsilon}q_{s_1}, \quad l_2^{(1)} = \frac{1}{e^{\epsilon}+1}-q_{s_1}, \quad l_{3}^{(1)} = e^{\epsilon}q_{s_0}-\frac{e^{2\epsilon}}{e^{\epsilon}+1}$.
\end{itemize}

From \cref{lemma:feasibility}, we know that this Blackwell optimal information structure is unique up to equivalence.

\end{proof}

We give an intuitive explanation of the optimal information structure under the case where $R_1, R_2>\ipratio$, i.e, the inferential privacy property does not hold in the original secret-state structure $\probof{S,Y}$. Let $\hlij{1}{0} = \ipratio\lij{1}{1}$ and $\hlij{4}{1} = \ipratio\lij{4}{0}$. From \cref{lemma:feasibility}, we know that $\lij{1}{0}\leq \hlij{1}{0}$ and $\lij{4}{1}\leq \hlij{4}{1}$, and denote $R$ as $R = \frac{1-\lij{1}{1}-\hlij{4}{1}}{1-\lij{4}{0}-\hlij{1}{0}}$. When $\qsi{0}> \frac{1}{1+\ipratioinv}$ and $\qsi{1}< \frac{1}{1+\ipratio}$, we have $R\in \bra{\ipratioinv, \ipratio}$, and both $\lij{1}{0}$ and $\lij{4}{1}$ can achieve their upper bounds $\hlij{1}{0}$ and $\hlij{4}{1}$. This is because in this case, $R = \frac{\lij{2}{1}+\lij{3}{1}}{\lij{2}{0}+\lij{3}{0}}\in \bra{\ipratioinv, \ipratio}$, and combining with \cref{eqn:fix_ratio}, we can get feasible positive values of $\lij{2}{1}$ and $\lij{3}{1}$. When $\qsi{0}\leq \frac{1}{1+\ipratioinv}$ or $\qsi{1}\geq \frac{1}{1+\ipratio}$, i.e., when $R\leq \ipratioinv$ or $R\geq \ipratio$, to ensure both $\lij{1}{0}$ and $\lij{4}{1}$ reach their feasible maximal values, we have $\frac{1-\lij{1}{1}-\lij{4}{1}}{1-\lij{4}{0}-\lij{1}{0}} = \frac{\lij{2}{1}+\lij{3}{1}}{\lij{2}{0}+\lij{3}{0}}\in \brc{\ipratioinv, \ipratio}$, and thus $\lij{2}{1}$ or $\lij{3}{1}$ should be set as $0$. 
The illustrations in \cref{fig:binary_secret_mid_R,fig:binary_secret_high_R} show the cases where $R\in\bra{\ipratioinv, \ipratio}$ and $R\geq \ipratio$.

\begin{figure}[htbp]
\centering
\includegraphics[width=0.9\linewidth]{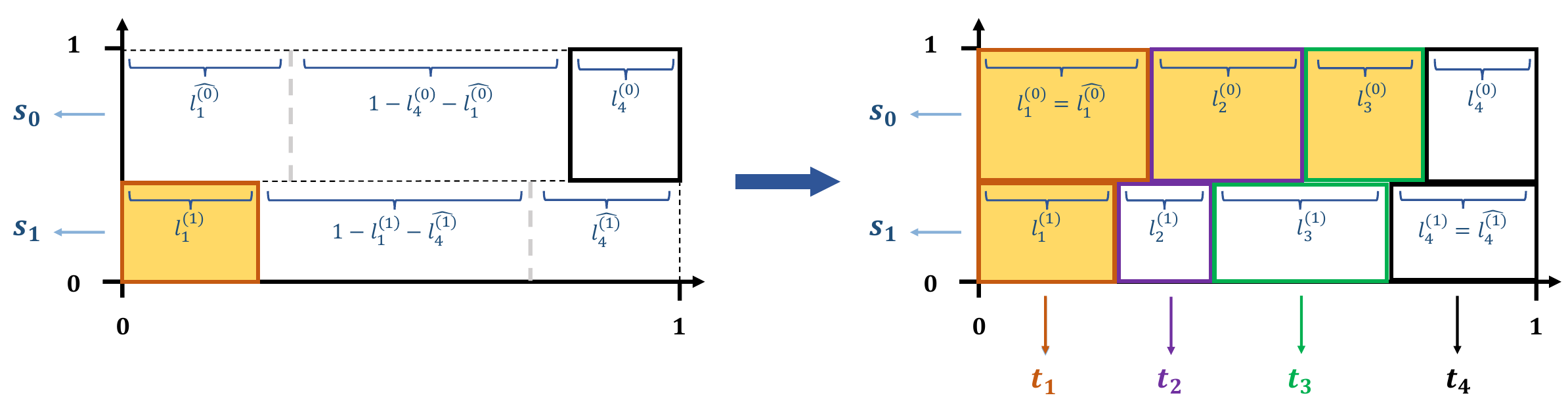}
\caption{When $R_1, R_2>\ipratio$ and $R = \frac{1-\lij{1}{1}-\hlij{4}{1}}{1-\lij{4}{0}-\hlij{1}{0}}\in \bra{\ipratioinv, \ipratio}$ (i.e., $\qsi{0}> \frac{1}{1+\ipratioinv}$ and $\qsi{1}< \frac{1}{1+\ipratio}$), both $\lij{1}{0}$ and $\lij{4}{1}$ can achieve their upper bounds $\hlij{1}{0}$ and $\hlij{4}{1}$, and we can get feasible positive values of $\lij{2}{1}$ and $\lij{3}{1}$ based on \cref{lemma:feasibility}.}
\label{fig:binary_secret_mid_R}
\end{figure}

\begin{figure}[htbp]
\centering
\includegraphics[width=0.9\linewidth]{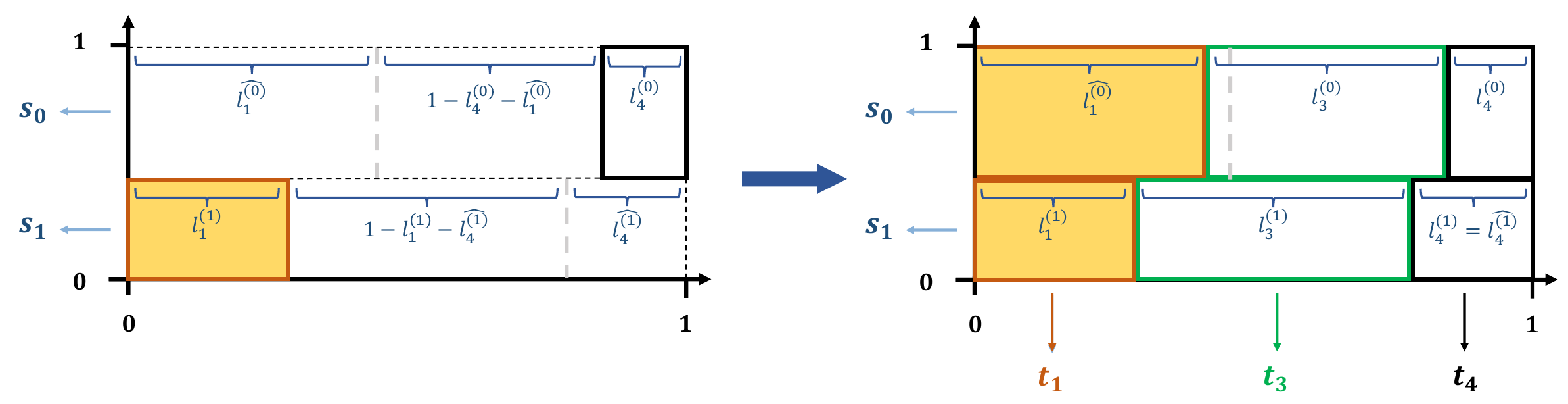}
\caption{When $R_1, R_2>\ipratio$ and $R = \frac{1-\lij{1}{1}-\hlij{4}{1}}{1-\lij{4}{0}-\hlij{1}{0}}\geq\ipratio$ (i.e., $\qsi{1}\geq \frac{1}{1+\ipratio}$), $\lij{4}{1}$ can achieve its upper bound $\hlij{4}{1}$. To ensure $\lij{1}{0}$ to achieve its feasible maximal value, $\lij{2}{0}$ and $\lij{2}{1}$ should be set as $0$.}
\label{fig:binary_secret_high_R}
\end{figure}

Furthermore, under different $\probof{S,Y}$, we illustrate the $\epsilon$-inferentially-private  Blackwell optimal information structure in \cref{fig:structure}. With different values of $R_1, R_2, \qsi{0}, \qsi{1}$, the output signal set $\tset$ varies.

\begin{figure}[htbp]
    \centering
\subfigure[$\tset = \brc{t_1, t_4}$]{\includegraphics[width=0.45\linewidth]{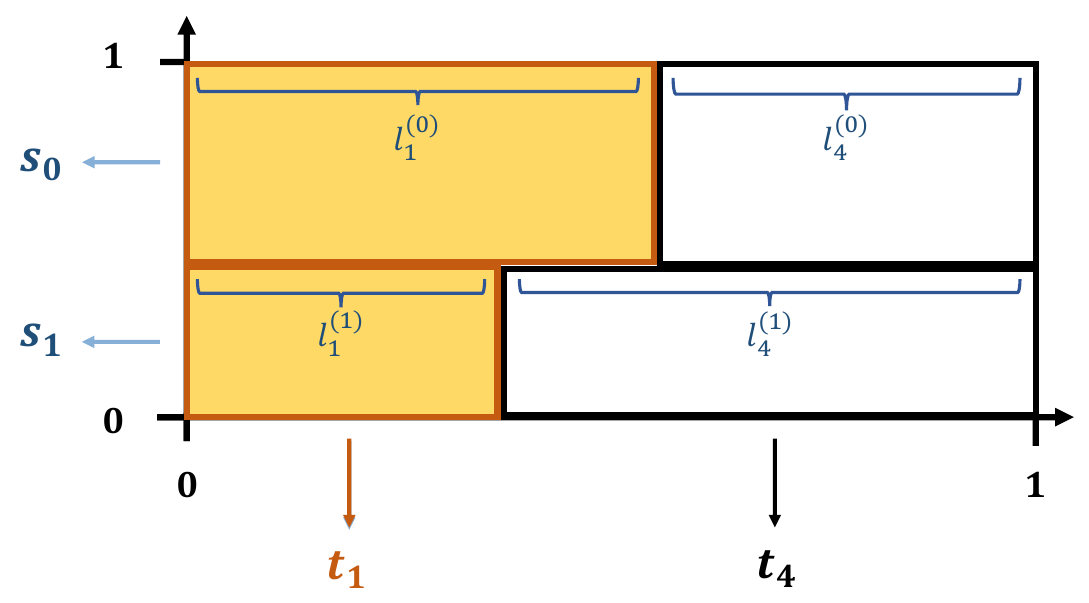}}
\subfigure[$\tset = \brc{t_1, t_2, t_4}$]{\includegraphics[width=0.45\linewidth]{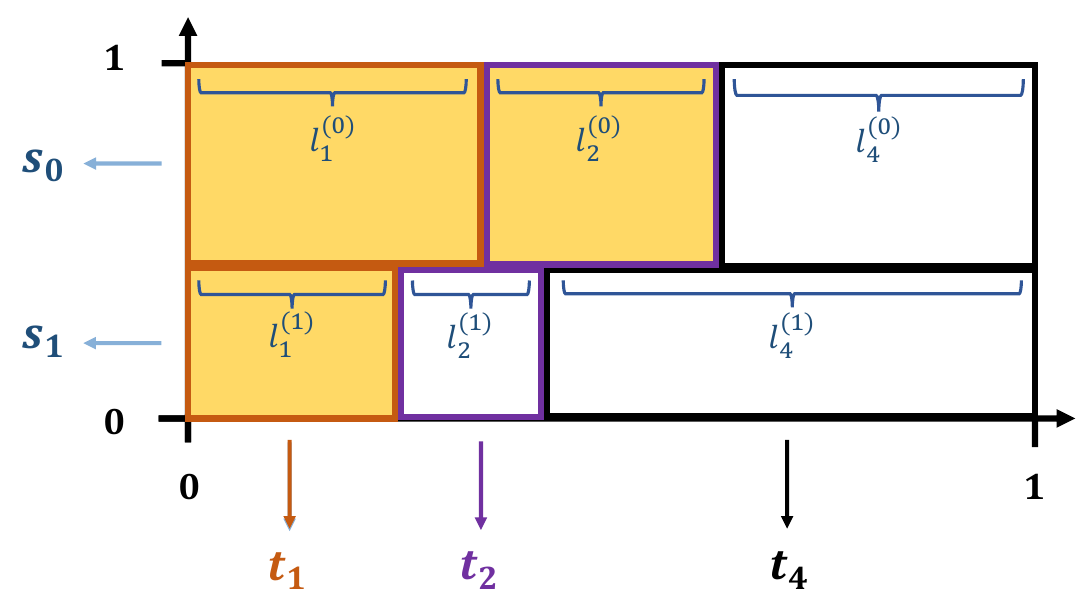}}

\subfigure[$\tset = \brc{t_1, t_3, t_4}$]{\includegraphics[width=0.45\linewidth]{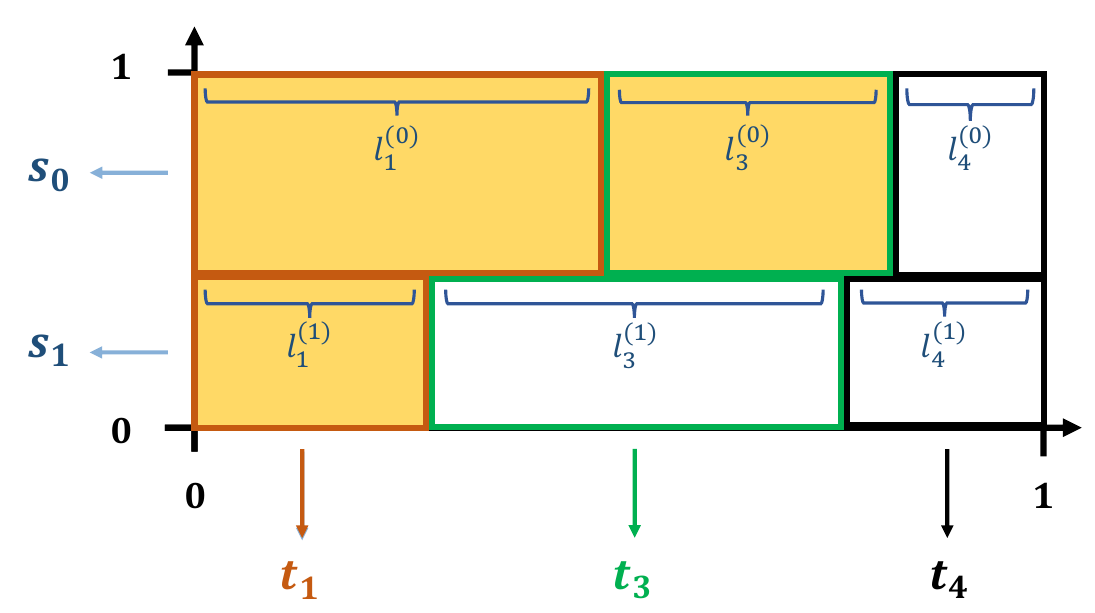}}
\subfigure[$\tset = \brc{t_1, t_2, t_3, t_4}$]{\includegraphics[width=0.45\linewidth]{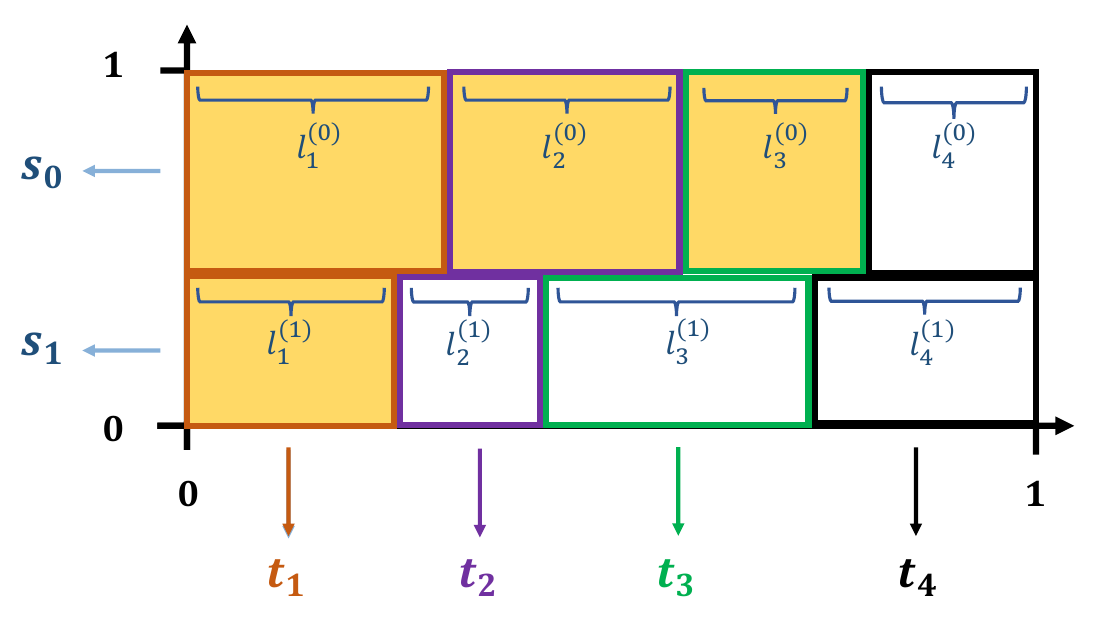}}
\caption{For any $\probof{S,Y}$, we can construct a unique $\epsilon$-inferentially-private  Blackwell optimal information structure $\probof{
S, Y, T}$, where the output signal set $\tset$ has four possible choices, depending on $R_1, R_2, \qsi{0}, \qsi{1}$.
}
    \label{fig:structure}
\end{figure}

\end{document}